\newcounter{savenumi}
\newtheorem{theoremfoo}{Theorem}
\newenvironment{theorem}{\pagebreak[1]\begin{theoremfoo}}{\end{theoremfoo}}
\newtheorem{propositionfoo}[theoremfoo]{Proposition}
\newtheorem{lemmafoo}[theoremfoo]{Lemma}
\newenvironment{lemma}{\pagebreak[1]\begin{lemmafoo}}{\end{lemmafoo}}
\newtheorem{claimfoo}[theoremfoo]{Claim}
\newtheorem{observationfoo}[theoremfoo]{Obervation}
\newtheorem{conjecturefoo}[theoremfoo]{Conjecture}
\newtheorem{corollaryfoo}[theoremfoo]{Corollary}
\newenvironment{corollary}{\pagebreak[1]\begin{corollaryfoo}}{\end{corollaryfoo}}
\newtheorem{exercisefoo}{Exercise}
\newtheorem{openfoo}[theoremfoo]{Question}
\newtheorem{nttn}[theoremfoo]{Notation}
\newtheorem{dfntn}[theoremfoo]{Definition}
\newenvironment{definition}{\pagebreak[1]\begin{dfntn}\rm}{\end{dfntn}}
\newenvironment{proof}
    {\pagebreak[1]{\narrower\noindent {\bf Proof:\quad\nopagebreak}}}{\QED}
\newcommand{\floor}[1]{\left\lfloor#1\right\rfloor}
\newcommand{\ceiling}[1]{\left\lceil#1\right\rceil}
\newcommand{\NP}{{\rm NP}}
\renewcommand{\P}{{\rm P}}      
\def\nre.{$n$\/-r.e.}
\newcommand{\scrod}{\quad\nopagebreak}
\newtheorem{factfoo}[theoremfoo]{Fact}
\newenvironment{fact}{\pagebreak[1]\begin{factfoo}}{\end{factfoo}}
\newenvironment{algorithm}{\begin{enumerate}}{\end{enumerate}}
\newtheorem{propertyfoo}[theoremfoo]{Property}
\def\@makechapterhead#1{ \vspace*{50pt} { \parindent 0pt \raggedright 
 \ifnum \c@secnumdepth >\m@ne \huge\bf \@chapapp{} \thechapter. \par 
 \vskip 20pt \fi \Huge \bf #1\par 
 \nobreak \vskip 40pt } }
\def\@sect#1#2#3#4#5#6[#7]#8{\ifnum #2>\c@secnumdepth
     \def\@svsec{}\else 
     \refstepcounter{#1}\edef\@svsec{\csname the#1\endcsname.\hskip 1em }\fi
     \@tempskipa #5\relax
      \ifdim \@tempskipa>\z@ 
        \begingroup #6\relax
          \@hangfrom{\hskip #3\relax\@svsec}{\interlinepenalty \@M #8\par}
        \endgroup
       \csname #1mark\endcsname{#7}\addcontentsline
         {toc}{#1}{\ifnum #2>\c@secnumdepth \else
                      \protect\numberline{\csname the#1\endcsname}\fi
                    #7}\else
        \def\@svsechd{#6\hskip #3\@svsec #8\csname #1mark\endcsname
                      {#7}\addcontentsline
                           {toc}{#1}{\ifnum #2>\c@secnumdepth \else
                             \protect\numberline{\csname the#1\endcsname}\fi
                       #7}}\fi
     \@xsect{#5}}
\def\@begintheorem#1#2{\it \trivlist \item[\hskip \labelsep{\bf #1\ #2.}]}
\def\@opargbegintheorem#1#2#3{\it \trivlist
      \item[\hskip \labelsep{\bf #1\ #2\ (#3).}]}
\newif\ifshortconferences
\newif\ifmediumconferences
\def\ending#1{{\count1=#1\relax
\count2=\count1
\divide\count2 by 100
\multiply\count2 by 100
\advance\count1 by -\count2
\ifnum\count1=11
th%
\else \ifnum\count1=12
th%
\else \ifnum\count1=13
th%
\else 
\count2=\count1
\divide\count1 by 10
\multiply\count1 by 10
\advance\count2 by -\count1
\ifnum\count2=1
st%
\else \ifnum\count2=2
nd%
\else \ifnum\count2=3
rd%
\else th%
\fi\fi\fi\fi\fi\fi
}}
\def\Proceedingsofthe{\ifshortconferences Proc.\else\ifmediumconferences Proc.\else Proceedings of the\fi\fi}
\newcounter{confnum}
\def\conf#1#2{%
\setcounter{confnum}{#2}%
\addtocounter{confnum}{-\csname #1zero\endcsname}%
\ifnum\value{confnum}=1%
\expandafter\ifx\csname #1One\endcsname\relax%
\Proceedingsofthe\ \arabic{confnum}\ending{\value{confnum}}\ \csname #1name\endcsname%
\else \csname #1One\endcsname\fi%
\else%
\Proceedingsofthe\
\arabic{confnum}\ending{\value{confnum}}\ \csname #1name\endcsname\fi}
\def\qsym{\vrule width0.7ex height0.9em depth0ex}
\newif\ifqed\qedtrue
\def\noqed{\global\qedfalse}
\def\qed{\ifqed{\penalty1000\unskip\nobreak\hfil\penalty50
\hskip2em\hbox{}\nobreak\hfil\qsym
\parfillskip=0pt \finalhyphendemerits=0\par\medskip}\fi\global\qedtrue}
\def\eqnqed{\noqed
	\def\@tempa{equation}
	\ifx\@tempa\@currenvir\def\@eqnnum{\qsym}%
	\addtocounter{equation}{-1}\else%
    \def\@@eqncr{\let\@tempa\relax
    \ifcase\@eqcnt \def\@tempa{& & &}\or \def\@tempa{& &}%
      \else \def\@tempa{&}\fi
     \@tempa {\def\@eqnnum{{\qsym}}\@eqnnum}
     \global\@eqnswtrue\global\@eqcnt\z@\cr}\fi}
\def\eqnlabel#1#2{\if@filesw {\let\thepage\relax%
   \def\protect{\noexpand\noexpand\noexpand}%
   \edef\@tempa{\write\@auxout{\string
      \newlabel{#2}{{{#1}}{\thepage}}}}%
   \expandafter}\@tempa%
   \if@nobreak \ifvmode\nobreak\fi\fi\fi%
	\def\@tempa{equation}
	\ifx\@tempa\@currenvir\def\theequation{{#1}}%
	\addtocounter{equation}{-1}\else%
    \def\@@eqncr{\let\@tempa\relax
    \ifcase\@eqcnt \def\@tempa{& & &}\or \def\@tempa{& &}%
      \else \def\@tempa{&}\fi
     \@tempa {\def\@eqnnum{{#1}}\@eqnnum}
     \global\@eqnswtrue\global\@eqcnt\z@\cr}\fi}
\def\QED{\qed}
\begin{document}

\date{}
\begin{titlepage}
\title{Sublinear Approximation Schemes for Scheduling  Precedence Graphs of Bounded Depth
}

\author{ Bin Fu$^1$, Yumei Huo$^2$, Hairong Zhao$^3$
 \\
  bin.fu@utrgv.edu, yumei.huo@csi.cuny.edu, hairong@purdue.edu \\
$^{1}$
Department of Computer Science, 
University of Texas Rio Grande Valley\\
$^{2}$
Department of Computer Science,
College of Staten Island, CUNY\\
$^{3}$
Department of Computer Science, 
Purdue University Northwest\\
}
\maketitle

\begin{abstract}
We study the classical scheduling problem on parallel machines 
where the precedence graph has the bounded depth $h$. 
Our goal is to minimize the maximum completion time. We focus on developing  approximation algorithms that use only sublinear space or sublinear time.  We  develop the first one-pass streaming approximation schemes using sublinear space when all jobs' processing times differ no more than a constant factor $c$ and the number of machines $m$ is at most $\tfrac {2n \epsilon}{3 h c }$. This is so far the best approximation  we can have in terms of $m$, since no polynomial time approximation better than $\tfrac{4}{3}$ exists when $m = \tfrac{n}{3}$ unless P=NP.  
The algorithms are then extended to the more general problem where the largest $\alpha n$ jobs have no more than $c$ factor difference. 
We also develop the first sublinear time algorithms for both problems.  
For the more general problem,  when $ m  \le    \tfrac {  \alpha  n  \epsilon}{20 c^2  \cdot h  } $, our algorithm is a randomized $(1+\epsilon)$-approximation scheme 
that runs in sublinear time. 
This work not only provides an algorithmic solution to the studied problem under big data 
environment, but also gives a methodological framework for designing
sublinear approximation algorithms for other scheduling problems.
\end{abstract}

\end{titlepage}
\section{Introduction}
Big data and cloud computing play a huge role nowadays in our digital society. Each day a  large amount of data is  generated and collected by a variety of programs and applications. These large sets of data, which are referred as ``big data'', are hard to peruse or query on a regular computer. On the other hand, 
cloud computing provides a platform for processing big data efficiently 
on the “cloud” where the “cloud” is usually a set of high-powered servers from one of many providers. The ``cloud'' can view and query large data sets much more quickly than a standard computer could. Big data and cloud computing together provide the solutions for the companies with big data but limited resources, a dilemma encountered by many companies in manufacturing and service industries.

Two decades ago, researchers  in the area of statistics, graph theory, etc. started to investigate the sublinear approximation algorithms that 
uses only sublinear space or sublinear time, namely sublinear space algorithms or sublinear time algorithms. With more and more data being generated and stored away in the data center,  and higher and higher dimension of computation being required and performed remotely on the ``cloud'' in various applications, sublinear algorithms become a new paradigm in computing to solve the problems under big data 
and cloud computing.  Unlike the traditional data model where all the data can be stored and retrieved locally and one can hope to get the exact answers, the goal of sublinear space and sublinear time algorithms in general, is to obtain reasonably good approximate answers without storing or scanning the entire input.

Sublinear space algorithms are also called streaming algorithms, which process the input where some or all of the data is not available for random access in the local computers but rather arrives as a sequence of items and can be examined in only a few passes (typically just one). 
Early research on streaming algorithms dealt with simple statistics
of the input data streams, such as the
median~\cite{MunroPaterson80}, the number of distinct
elements~\cite{FlajoletMartin85}, or frequency
moments~\cite{AlonMatiasSzegedy96}. Recently, many effective streaming algorithms have been designed for a range of problems in statistics, optimization, and graph algorithms (see surveys by Muthukrishnan \cite{Muthu05}  and McGregor \cite{mcgregor14}). 

Sublinear time algorithms target at giving good approximations after inspecting only a very small portion of the input. Usually, this is achieved by using randomization. Sublinear time algorithms have been derived for many computational problems, for example, checking polygon intersections ~\cite{ChazelleLiuMagen05}, approximating the average degree in a graph~\cite{Feige06,GoldreichRon05}, estimating the cost of a minimum spanning tree~\cite{ChazelleRubinfeldTrevisan05,CzumajErgun05,CzumajSohler04},
finding geometric separators~\cite{FuChen06}, 
and property testing~\cite{GoldreichGoldwasserRon96,GoldreichRon00},
etc. 
Developing  sublinear time algorithms not only speeds up problem solving process, but also reveals  some interesting properties of computation, especially the power of randomization.

This paper aims at designing both types of sublinear approximation algorithms for the classical parallel machine  scheduling problem subject to precedence constraints. We hope that our algorithms not only  provide algorithmic solutions for this  specific problem under  big data and cloud computing environment, but also provide  a framework and insight for solving other scheduling problems under  big data and cloud computing environment which are encountered by many  companies in the manufacturing and service industries.

Formally our problem is to schedule $n$ jobs on $m$ identical parallel machines where there are precedence constraints between jobs.
The jobs are all available for processing at time $0$ and labeled as $1, 2, \cdots, n$. Each job $j$, $1 \le j \le n$, has a processing time $p_j$. The jobs have precedence constraints, $\prec$, such that $i \prec j$ represents that job $j$ cannot start until job $i$  finishes. The jobs and their precedence constraints can be described by a directed acyclic graph (DAG), $G= (V, E)$, where $V$ is a set of vertices representing the jobs and $E$ is a set of directed arcs representing the precedence constraints among the jobs.   
We assume that there are no transitive edges in $G$.  If  there is a directed arc $\langle i, j\rangle$ in $E$, 
then we have the precedence constraint $i \prec j$, and we say that job $i$ is the immediate predecessor of job $j$ and $j$ is the immediate successor of job $i$. We consider non-preemptive schedules, i.e. a job cannot be interrupted once it is started. 
 Given a schedule $S$, let $C_j$ be the completion time of job $j$ in $S$, then the makespan of the schedule $S$ is $C_{max} = \max_{1 \le j \le n} C_j$. The goal is to find the minimum makespan.
Using the three field  notation,   the problem can be denoted as $P \mid prec \mid C_{max}$ when the number of machines $m$ is arbitrary,  and be denoted as  $P_m \mid prec \mid C_{max}$ when $m$ is fixed.

A lot of research has been done on this classical scheduling problem. 
For arbitrary precedence,  when $m = 2$ and jobs have unit processing time, i.e., $P_2 \mid prec, p_j =1  \mid C_{max}$,  Coffman and Graham \cite{cg72} gave an optimal polynomial time algorithm in 1972. In 1978, Lenstra and Kan \cite{lk78} showed when jobs have unit processing time, the problem with arbitrary precedence constraints and arbitrary $m$, $P \mid prec, p_j =1  \mid C_{max}$, is strongly NP-hard.  When the jobs' processing times are either 1 or 2,  Lenstra and Kan \cite{lk78} and Ullman \cite{u75} independently showed that the problem $P_2 \mid prec, p_j = 1, 2 \mid C_{max}$ is strongly NP hard.
However,  the complexity of the problem  $P_3 \mid prec, p_j =1  \mid C_{max}$ remains open. 
Graham \cite{graham66} showed   that list scheduling is a $(2-\tfrac{1}{m})$-approximation for the problem with arbitrary $m$ and arbitrary job processing times, i.e., $P \mid prec \mid C_{max}$. In 2011, Svensson \cite{s11} showed that assuming a new, possibly stronger, version of the unique games conjecture (introduced by Bansal and Khot \cite{bk09}), it is NP-hard to approximate the scheduling problem, $P \mid prec, p_j=1 \mid C_{max}$, within any factor strictly less than 2. This result improves the inapproximibility of $4/3$ by Lenstra and Rinnooy Kan \cite{lk78}.

Due to the importance  and the hardness of the problem, a lot of research focused on various types of precedence constraints.  
One type of precedence constraints studied in literature is precedence graphs with bounded height where the height is the number of vertices on the longest path. 
In 1978, Lenstra and Kan \cite{lk78} showed that for arbitrary number of machines $m$, the problem is NP-hard even if the precedence constrained graph has bounded height and the jobs have unit processing time. In 1984, Doleva and Warmuth  \cite{dw84} developed an optimal algorithm for this problem when $m$ is fixed and the running time of the algorithm is $n^{h(m-1)+1}$. In 2006, Aho and M{\"{a}}kinen \cite{am06} considered a special case where both the height of the graph and the maximum degree are bounded, and jobs have unit processing time. They showed that for large $n$, the optimal schedule has makespan $\ceiling{n/m}$ and can be scheduled using modified critical path rule. This result is in fact a special case of the one studied by Dolev and Warmuth \cite{dw84}. For more related results, one can refer to the survey by Prot and Bellenguez-Morineaua   \cite{pb18} on how the structure of precedence constraints may change the complexity of the scheduling problems.

In this paper, we focus on the problem where the precedence  graph has bounded depth $h$. The depth of a job $j$, denoted as $dp_j$, is the number of jobs on the longest directed path ending at $j$ in $G$. It is easy to see that the maximum depth of the jobs in $G$ is equal to the height of the graph $G$. Given a precedence graph, 
one can easily compute  the depth $dp_j$ of each job $j$. 
In addition, we assume that the processing times of the jobs are constrained.   
We first consider the case  
that the processing times of the jobs vary from one to another within a  factor $c$, i.e.
$p_{max} \le c \cdot p_{min}$ where $p_{max} = max_{1 \le j \le n} \{p_j\}$, $p_{min} =  min_{1 \le j \le n} \{p_j\}$, and $c$ is a constant integer.  
Using the three-field  notation, we denote this problem as $P \mid prec, dp_j \le h, p_{max} \le c \cdot p_{min} \mid C_{max}$.  
We then  
consider the  more general case where the largest $\alpha n$ jobs have no more than $c$ factor difference for some constant where $0 < \alpha \le 1$. For a given set of $n$ jobs, let $[j]$ be the $j$-th smallest job. Then $p_{[1]}$ is the smallest job, and $p_{[n]}$ is the largest job. We  
denote this more generalized problem  as $P \mid prec, dp_j \le h,  p_{[n]} \le c \cdot p_{[ (1 - \alpha) n)]}  \mid C_{max}$. 
 Our goal is to develop sublinear approximation algorithms, that is, approximation algorithms using only sublinear time or sublinear space, for these two versions of precedence constrained scheduling problems.

\subsection{New Contributions}

In this work, we develop two types of sublinear approximation algorithms 
for the classical parallel machine  scheduling problems where the precedence graph has bounded depth $h$ and the processing times of jobs are constrained.  
Specifically, our   contributions are listed as follows:

\begin{enumerate}
 \item We develop two streaming approximation schemes  for the problem  $P  \mid prec, dp_j \le h, p_{max} \le c \cdot p_{min} \mid C_{max}$ 
depending on whether $c$, $h$, and each job's depth are known or not. 
The algorithms are then extended to solve the more general problem where the largest $\alpha n$ jobs have no more than $c$ factor difference for some constant $\alpha$, $0 < \alpha \le 1$, 
$P \mid prec, dp_j \le h,  p_{[n]} \le c \cdot p_{[ (1 - \alpha) n)]}  \mid C_{max}$. 
 
\item  
We develop the first randomized approximation  schemes that uses only sublinear time for both problems. In particular, 
for the more general problem, $P \mid prec, dp_j \le h,  p_{[n]} \le c \cdot p_{[ (1 - \alpha) n)]}  \mid C_{max}$,  when $ m  \le    \tfrac {  \alpha  n  \epsilon}{20 c^2  \cdot h  } $, our algorithm is a randomized $(1+\epsilon)$-approximation scheme 
that runs in time  $O( \tfrac{c^4 h^2 m^2} {\alpha^3 \epsilon^6} \log ^ 2 (\tfrac{c n}{\epsilon})  \log (\tfrac{h}{\epsilon} \log (\tfrac{c n}{\epsilon}  )))$.
\item  Our  approximation results greatly complement the in-approximability results of the studied problems.
     When $m = \tfrac{n}{3}$, even if $h=3$ and $c = 1$,  
     the problems cannot be approximated within a factor of $\tfrac{4}{3}$ in polynomial time   unless P=NP (see Section 2 for reference). 
     Surprisingly, our results show  that when $m$ is a little bit smaller, i.e.,   upper bounded by $n$ times a factor that depends on $\epsilon$, $h$ and $\alpha$, then the problems admit   polynomial time approximation schemes. For example, if   $m \le \tfrac{n}{15}$,  $h=3$ and $c = 1$, then there is a polynomial time $1.3$-approximation for $P \mid prec, dp_j \le h, p_{max} \le c \cdot p_{min} \mid C_{max}$.    
\item 
We provide   
a methodological framework for designing  sublinear approximation algorithms that can be used for solving other scheduling problems. 
The framework  
starts with generating the ``sketch of input'', which is a summarized description of the input jobs, then computes an approximate value of the optimal criterion, and finally generates the ``sketch of schedule'',
 a succinct description of a schedule that achieves the approximate value.

We introduce the concept of ``sketch of schedule'' for the applications where not only an approximate value, but also a schedule associated with the approximate value is needed. As illustrated in the paper, we can use the ``sketch of schedule'' to easily generate  a real schedule when the complete jobs information is  
read.
\end{enumerate}

The paper is organized as follows. In Section 2, we   give the complexity of the studied scheduling problems. In Section 3, we present the streaming algorithms for our problems. In Section 4, we design the randomized sublinear time algorithms for our problems. Finally, we draw the concluding remarks in Section 5.

\section{Complexity}
From the introduction, 
we know that if the jobs have unit processing time, then  $P_m \mid prec,  dp_j \le h, p_j = 1 \mid  C_{max}$ is solvable in $O(n^{h(m-1) +1})$ time which is polynomial if $m$ is constant (see \cite{dw84} for reference); however, the problem with arbitrary $m$, $P \mid prec,  dp_j \le h, p_j=1 \mid C_{max}$, is NP-hard in the strong sense even if $h = 3$ (see \cite{lk78} for reference). In this section, we first show that if we allow jobs to have different processing times,  then even for fixed $m$, the problem becomes   NP-hard.

\begin{theorem}
The problem $P_m \mid prec, dp_j \le h, p_{max} \le c \cdot p_{min} \mid C_{max}$ is $\NP$-hard.
\end{theorem}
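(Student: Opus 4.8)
The plan is to give a polynomial-time many-one reduction from \textsc{Partition}, which is NP-complete. Recall that a \textsc{Partition} instance is a list of positive integers $a_1,\dots,a_n$ with $\sum_{i=1}^{n} a_i = 2T$, and it is a yes-instance exactly when some $S\subseteq\{1,\dots,n\}$ satisfies $\sum_{i\in S}a_i = T$. The only feature of the target problem that is not already present in the NP-hard problem $P_2\mid\mid C_{max}$ is the restriction $p_{max}\le c\cdot p_{min}$; the precedence part of the signature is free, since an instance with no precedence edges has $dp_j=1\le h$ for every $h\ge 1$. So the whole task is to encode \textsc{Partition} using jobs whose processing times differ by less than a factor $c$, and I would aim for the small constants $m=2$, $c=2$, and arbitrary $h\ge 1$.

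Here is the construction I would use. Set $M=2T+1$ and build an instance of $P_2\mid prec,\ dp_j\le h,\ p_{max}\le c\cdot p_{min}\mid C_{max}$ with two machines, no precedence constraints, and $2n$ jobs: one ``shifted'' job of length $a_i+M$ for each $i\in\{1,\dots,n\}$, together with $n$ ``filler'' jobs each of length $M$. Every processing time then lies in the interval $[\,2T+1,\;4T+1\,]$, so $p_{max}/p_{min}<2$ and the ratio constraint holds with $c=2$ (hence with any $c\ge 2$). The total processing requirement is $2T+2nM$, so every two-machine schedule has makespan at least $T+nM$, and the claim is that the optimal makespan equals $T+nM$ if and only if the original \textsc{Partition} instance is a yes-instance.

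Both directions should be short. If $S$ is a valid partition set, I would schedule the shifted jobs indexed by $S$ together with $n-\abs{S}$ filler jobs on machine~$1$, and all remaining jobs on machine~$2$; each machine then receives exactly $n$ jobs of total length $T+nM$, giving makespan $T+nM$. Conversely, a schedule of makespan $T+nM$ must split the load $2T+2nM$ evenly, so each machine carries exactly $T+nM$; if machine~$1$ holds the shifted jobs indexed by some set $S$ and $r$ filler jobs, its load is $\sum_{i\in S}a_i+(\abs{S}+r)M=T+nM$, hence $\sum_{i\in S}a_i=T+(n-\abs{S}-r)M$. Since $0\le\sum_{i\in S}a_i\le 2T<M$, integrality forces $n-\abs{S}-r=0$ and therefore $\sum_{i\in S}a_i=T$, so $S$ is a valid partition. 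This establishes the equivalence; the reduction is plainly polynomial, so the theorem follows for $m=2$, $c=2$, and every $h\ge 1$, and larger values of $m$, $h$, $c$ are obtained by padding the instance with isolated dummy jobs (and, if one insists on genuine precedence edges, with a few short irrelevant chains).

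The single point that needs care is the interaction between the additive shift $+M$ and the fact that \textsc{Partition} puts no cardinality constraint on $S$: after the shift, a machine's load depends on how many jobs sit on it, not just on their $a$-values. The filler jobs are precisely what repairs this, since they let any chosen subset be padded up to a full complement of $n$ jobs per machine, while the choice $M>2T$ is what makes the rounding step in the converse direction pin down the filler count (hence the cardinality split) exactly. I expect this bookkeeping to be the only real obstacle; everything else is routine. Note also that reducing from \textsc{Partition} yields only weak NP-hardness, which is all the statement asserts.
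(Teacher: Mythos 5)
Your reduction is correct, and it lands in the same target as the paper ($P_2$ with no precedence and bounded processing-time ratio), but it gets there by a genuinely different route. The paper first reduces the \emph{even-odd partition} problem to a restricted version in which $d_{2n}\le c\,d_1$, obtained by the additive shift $d_i=b_i+Y$ with $Y\ge(b_{2n}-cb_1)/(c-1)$, and then reduces that to the scheduling problem. The even-odd structure is what makes the shift come for free: since a valid even-odd partition puts exactly one of each pair on each side, both sides contain exactly $n$ elements, so adding $Y$ to every element shifts both sides by $nY$ and the equality is preserved with no further work. You instead reduce from ordinary \textsc{Partition}, where the two sides may have unequal cardinality, and you repair exactly that with the $n$ filler jobs of length $M$; the choice $M=2T+1>2T$ then forces the integrality argument that pins down the filler count. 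Your version is more self-contained (it does not invoke the NP-hardness of even-odd partition, only of \textsc{Partition}), at the cost of the filler bookkeeping. One small difference in coverage: the paper's choice of $Y$ is parametrized so that the construction works for every constant $c>1$, whereas your $M=2T+1$ gives $c=2$; you remark that larger $c$ is free, which is true, but if one wants $1<c<2$ you would want to take $M\ge\max\bigl(2T+1,\lceil 2T/(c-1)\rceil\bigr)$ so that $(2T+M)/M\le c$. That is a one-line tweak and does not affect the substance of your argument.
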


\begin{proof}
We will reduce even-odd partition problem to a restricted even-odd partition problem, and then reduce the restricted even-odd partition problem to $P_2 \mid  p_{max} \le c \cdot p_{min} \mid  C_{max}$,  which implies that $P_m \mid  prec, dp_j \le h, p_{max} \le c \cdot p_{min} \mid  C_{max}$ is NP-hard.

{\bf Even-odd partition:}  
 there is a set of  $2n$ integers $B = \{b_i, 1 \le i \le 2n\}$ such that $b_{i} < b_{i+1}$ for all $1 \le i < 2n$, is there a partition of B into $B_1$ and $B_2$ such that  $B_1$ and hence $B_2$ contains  exactly one of $\{b_{2i-1}, b_{2i}\}$  for each $1 \le i \le n$, and $\sum_{b_i \in B_1} b_i  = \sum_{b_i \in B_2} b_i$?

{\bf Restricted even-odd partition:}  
Given a set of  $2n$ integers $D = \{d_i, 1 \le i \le 2n\}$ such that $d_{i} < d_{i+1}$ for all $1 \le i < 2n$, and $d_{2n} \le c d_1 $ for some constant $c > 1$, is there a partition of D into $D_1$ and $D_2$ such that    $D_1$ and hence $D_2$ contains exactly one of $\{d_{2i-1}, d_{2i}\}$ for each $1 \le i \le n$, and $\sum_{d_i \in D_1} d_i  = \sum_{d_i \in D_2} d_i$?

Given an arbitrary instance  $B = \{b_i, 1 \le i \le 2n\}$ of the even-odd partition problem, we can reduce it to an instance of  the restricted even-odd partition problem $D = \{d_i, 1 \le i \le 2n\}$ as follows. Without loss of generality, we can assume that $b_{2n} > c\cdot  b_1$.
Let $Y$ be the integer such that $Y \ge \tfrac{b_{2n} - c b_1 }{c-1}$, i.e. $b_{2n} \le    c \cdot b_1 +(c-1) Y  $ . For each $1 \le i \le 2n$, let $d_i = b_i + Y$. It is easy to see that $d_{2n} = b_{2n}+Y \le c  b_1 +  c \cdot Y    = c \cdot d_1$.
It is trivial to show that there is a solution to instance $B$ if and only if there is a solution for instance $D$. Thus the restricted even-odd partition is also NP-hard. 
The restricted even-odd partition can be easily reduced to the scheduling problem $P_2\mid  p_{max} \le c \cdot p_{min} \mid C_{max}$, which implies that $P_m\mid prec, dp_j \le h, p_{max} \le c \cdot  p_{min} \mid C_{max}$ is NP-hard. 
\end{proof}

The next theorem is showing the in-approximability of our problems. In the strong NP-hardness proof of 
$P\mid prec, p_j = 1 \mid C_{max}$ in \cite{lk78},  the scheduling instance  created from  the instance of clique problem  has a precedence graph of height $3$, and there is a schedule of the  $n=3m$ jobs with makespan of 3 if and only if there is a solution to the clique instance.   This implies if an approximation algorithm can generate a schedule with approximation ratio  less than $4/3$, it must be optimal, which is impossible unless P=NP.

\begin{theorem} \label{thm:hard-to-approximate} Given any $\epsilon>0$, unless \P=\NP,
there is no polynomial time $(4/3-\epsilon)$-approximation algorithm
for $P \mid prec, dp_j \le h, p_j = 1 \mid  C_{max}$ even if $h = 3$.
\end{theorem}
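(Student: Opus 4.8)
The plan is to observe that the strong NP\dash hardness reduction of Lenstra and Rinnooy Kan for $P \mid prec, p_j = 1 \mid C_{max}$ already carries a hardness\dash of\dash approximation gap, with no new construction required. Recall the structure of that reduction, as described in the paragraph preceding the statement: from an instance of an NP\dash complete problem (a suitably restricted clique problem) it produces, in polynomial time, a scheduling instance with $n = 3m$ unit\dash length jobs on $m$ identical machines whose precedence DAG has height exactly $3$; the construction is arranged so that the optimal makespan is $3$ when the source instance is a yes\dash instance, and strictly larger otherwise. Since $n = 3m$ jobs on $m$ machines force makespan at least $\ceiling{n/m} = 3$, and since unit\dash length jobs admit an optimal schedule in which every start time is an integer (so the optimal makespan is a positive integer), ``strictly larger'' here means ``at least $4$''. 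In particular every job in the constructed instance has depth at most $3$, so the instance witnesses the restriction $dp_j \le h$ with $h = 3$.

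First I would fix such a reduction and record the two cases explicitly: $OPT = 3$ if the source instance is a yes\dash instance, and $OPT \ge 4$ otherwise. Next, suppose toward a contradiction that for some fixed $\epsilon > 0$ there is a polynomial\dash time $(4/3 - \epsilon)$\dash approximation algorithm $\mathcal{A}$ for $P \mid prec, dp_j \le 3, p_j = 1 \mid C_{max}$. Given a source instance, build the associated scheduling instance in polynomial time and run $\mathcal{A}$ on it, obtaining a schedule of makespan $M$. If the source instance is a yes\dash instance then $OPT = 3$ and $M \le (4/3 - \epsilon)\cdot 3 = 4 - 3\epsilon < 4$; since $M$ is a positive integer with $M \ge OPT = 3$, this forces $M = 3$. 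If the source instance is a no\dash instance then $M \ge OPT \ge 4$. Hence testing whether $M = 3$ decides the source problem in polynomial time, contradicting $\P \ne \NP$.

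The only real content is verifying that the cited construction delivers exactly the three features used above: a depth\dash bounded ($h = 3$) precedence graph, the relation $n = 3m$, and the clean separation between makespan $3$ and makespan $\ge 4$. This is precisely what the discussion before the theorem asserts about the reduction in \cite{lk78}, so the argument amounts to packaging that observation together with two elementary facts — integrality of the makespan for unit jobs and the load bound $\ceiling{n/m} = 3$ — which together leave no value strictly between $3$ and $4$ available and thus force any sub\dash$4/3$ approximation to be exact on these instances. I expect no step to be a genuine obstacle; the main point to state carefully is this integrality/load argument that closes the gap.
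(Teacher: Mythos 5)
Your proposal matches the paper's argument exactly: both package the Lenstra--Kan reduction (height-$3$ precedence graph, $n = 3m$ unit jobs, makespan $3$ iff the clique instance is a yes\dash instance) into a $3$\dash versus\dash $\ge 4$ gap and observe that a sub\dash$4/3$ approximation would resolve it, contradicting $\P \ne \NP$. Your write\dash up is simply more explicit about the integrality step that upgrades ``strictly larger than $3$'' to ``at least $4$,'' which the paper leaves implicit.
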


Despite the in-approximability result from Theorem~\ref{thm:hard-to-approximate}, in the next two sections, we will develop  approximation schemes that use only sublinear space or sublinear time for our studied problems when $m$ is upper bounded by $n$ times a factor. 

\section{Streaming Algorithms using Sublinear Space }\label{sec:stremaing-alg}

At the conceptual level, our streaming algorithms have the following two stages:
\begin{itemize}
    \item[] \textbf{Stage 1:} Generate and store a sketch of the input while reading the input stream.  
    \item[] \textbf{Stage 2:} Compute an approximation of the optimal value based on the sketch of the input.  
\end{itemize}

Roughly speaking, the sketch of the input is a summary of the input jobs which requires only sublinear space. Instead of storing the accurate processing times of the jobs, we map each job's processing time into the range of $[(1+\delta)^u,(1+\delta)^{u+1})$ where $\delta$ is a parameter. Thus we only need to store the number of jobs that are mapped in each range for each depth. 
We then use the rounded processing time  
for each job to obtain the approximation of the optimal makespan. 
A formal definition of the sketch for our problems is given below.  

\begin{definition}
For a given parameter $\delta$, and an instance of the problem $P \mid prec,  dp_j \le h, p_{max} \le c \cdot p_{min} \mid C_{max}$ or $P \mid prec, dp_j \le h, p_{[n]} \le c \cdot p_{[ (1 - \alpha) n)]} \mid C_{max}$, the \textbf{sketch of the input} with respect to  $\delta$,  
denoted as $SKJ_\delta = \{ (d,u, n_{d,u})\}$,  consists of a set of tuples, $(d, u, n_{d,u})$,  
where $n_{d,u}$ is the number of jobs with the depth $d$ and the processing time in the range of $[(1+\delta)^u,(1+\delta)^{u+1})$.  
\end{definition}

The size of the sketch, which is the number of tuples $(d,u, u_{d,u})$, may be different for different problems and different types of stream input.  In some cases, for example, we  disregard the jobs with small processing time.

In the following subsection, we will first present our streaming algorithms for the problem $P \mid prec,  dp_j \le h, p_{max} \le c \cdot p_{min} \mid C_{max}$. For the stream input, we consider both the case where $c$, $h$ and $dp_j$, $1 \le j \le n$, are given and the case  where these information is not directly given. We will then  adapt our algorithms to the more general problem $P \mid prec,  dp_j \le h,  p_{[n]} \le c \cdot p_{[ (1 - \alpha) n)]}  \mid C_{max}$.

\subsection{Streaming Approximation Schemes  for $P \mid prec,  dp_j \le h, p_{max} \le c \cdot p_{min} \mid C_{max}$}

\subsubsection{The parameters $c$, $h$, and $dp_j$ are known}
We study the problem under data stream model assuming $c$, $h$, and $dp_j$ are known. The jobs are given  via stream and each job $j$ is described by a pair $(p_j, dp_j)$, where $p_j$ and $dp_j$ are job $j$'s processing time and depth, respectively.  Without loss of generality, we can assume $p_j \in [1, c]$ in this case.
The algorithm is simple: 
scan the jobs from the stream 
and generate the sketch of the input, $SKJ_\delta = \{ (d,u, n_{d,u})\}$,  where $n_{d,u}$ is the number of jobs with the depth $d$ and the processing time in the range of $[(1+\delta)^u,(1+\delta)^{u+1})$; for each $d$, $1\le d \le h$, compute the length of the time interval where all the jobs with the depth $d$ can be feasibly scheduled, and then return the total length of these intervals. 
The complete algorithm is given in Streaming-Algorithm1. 

\begin{algorithm}
\renewcommand{\thealgorithm}{}	\caption {Streaming-Algorithm1}
   Input:   Parameters $\epsilon$, $m$, $c$ and $h$
         
  \hspace{0.4in}       Stream input: $(p_j, dp_j)$, $1 \le j \le n$.
         
    Output: An approximate value of the optimal makespan
\begin{algorithmic}[1]
   \State let $\delta = \tfrac{\epsilon}{3}$, $k = \floor{ \log_{1+\delta} c  }$
    \State read the input stream and generate the sketch of the input $SKJ_\delta$:
      \Indent
          \State  initialize the input sketch: $SKJ_\delta = \{ (d,u, n_{d,u}):  1 \le d \le h, 0 \le u \le k, n_{d,u} = 0\}$ 
      \For{each job $j$ with $(p_j, dp_j)$ in the stream input} 
        \State update $n_{d, u} = n_{d, u}+1$ where $d=dp_j$ and $u = \floor{\log_{1+ \delta} p_j}$
        \EndFor  
   \EndIndent
    \State compute the approximate makespan
       \Indent
        \State let $rp_k = c$
        \State for  each $u$, $0 \le u < k $
        \Indent
        \State let $rp_{u} =  (1 + \delta)^{u+1} $
        \EndIndent
       \For{ each $d$, $1 \le d \le h$}         
           \State let $A_d =   {\frac{1}{m} \sum_{u=0}^k(  n_{d,u}  \cdot rp_{u})}$ 
        \EndFor
        \State $A = \sum_{d=1}^h  (\floor{A_d} + c)$
        \EndIndent
        \State return $A$  
\end{algorithmic}
    \end{algorithm}

\begin{theorem}\label{thm-alg1}
For any $\epsilon$, when $ m   \le \tfrac  {2 n \epsilon} { 3 \cdot h \cdot c}$, Streaming-Algorithm1 is a one-pass streaming approximation scheme for $P  \mid prec, dp_j \le h, p_{max} \le c \cdot p_{min}   \mid C_{max}$ that uses $O({\tfrac{h \log c}{\epsilon}})$ space,   $O(1)$ update time for each job in the stream, and  $O({\tfrac{h \log c}{\epsilon}})$ time to return the approximate makespan.   
\end{theorem}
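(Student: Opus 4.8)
The plan is to prove that the value $A$ returned by Streaming-Algorithm1 satisfies $\mathrm{OPT}\le A\le(1+\epsilon)\,\mathrm{OPT}$ for the optimal makespan $\mathrm{OPT}$, and separately to read off the space and time bounds, which are essentially bookkeeping. Two structural observations drive the argument. \textbf{(a) The bucketing is faithful:} if job $j$ lands in bucket $u=\floor{\log_{1+\delta}p_j}$, then $p_j\le rp_u\le(1+\delta)\,p_j$; for $u<k$ this is immediate from $(1+\delta)^u\le p_j<(1+\delta)^{u+1}=rp_u$, and for $u=k$ it follows from $(1+\delta)^k\le p_j\le c=rp_k<(1+\delta)^{k+1}\le(1+\delta)\,p_j$. \textbf{(b) Depth classes are independent:} a job of depth $d$ has all its predecessors at depth strictly less than $d$, so no two jobs of the same depth are comparable, hence scheduling the depth classes $1,2,\dots,h$ in that order, each in its own contiguous time block, is always precedence-feasible. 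For the resource bounds I would observe that $SKJ_\delta$ stores one counter for each pair $(d,u)$ with $1\le d\le h$ and $0\le u\le k$, where $k=\floor{\log_{1+\delta}c}=O(\log c/\delta)=O(\log c/\epsilon)$, so the space is $O(h\log c/\epsilon)$; processing one stream item $(p_j,dp_j)$ only requires computing one bucket index and incrementing one counter, so the update time is $O(1)$; and forming the $h$ sums $A_d$, each over $k+1$ terms, costs $O(hk)=O(h\log c/\epsilon)$ time.

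For the lower bound $A\ge\mathrm{OPT}$: writing $P_d=\sum_{j:\,dp_j=d}p_j$, observation (a) gives $\sum_u n_{d,u}\,rp_u=\sum_{j:\,dp_j=d}rp_{u(j)}\ge P_d$, so $A_d\ge P_d/m$. Schedule the (mutually incomparable) depth-$d$ jobs optimally on the $m$ machines; by the classical list-scheduling bound their optimal makespan $\mathrm{OPT}_d$ satisfies $\mathrm{OPT}_d\le P_d/m+p_{max}(1-1/m)<P_d/m+c\le A_d+c<\floor{A_d}+c+1$, and since the processing times are integral $\mathrm{OPT}_d$ is an integer, hence $\mathrm{OPT}_d\le\floor{A_d}+c$. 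Concatenating these optimal blocks for $d=1,\dots,h$ produces, by (b), a precedence-feasible schedule of makespan $\sum_{d=1}^h\mathrm{OPT}_d\le\sum_{d=1}^h(\floor{A_d}+c)=A$, so $\mathrm{OPT}\le A$.

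For the upper bound $A\le(1+\epsilon)\,\mathrm{OPT}$ I would use $\floor{A_d}\le A_d$ together with $\sum_u n_{d,u}\,rp_u\le(1+\delta)P_d$ (again from (a)) to get $A\le\sum_{d=1}^h(A_d+c)\le\tfrac{1+\delta}{m}\sum_{d=1}^h P_d+hc=\tfrac{1+\delta}{m}P_{total}+hc$, where $P_{total}=\sum_{j=1}^n p_j$. Two lower bounds on $\mathrm{OPT}$ then finish it: the volume bound $\mathrm{OPT}\ge P_{total}/m$, and --- since we may assume $p_j\ge1$ as in the algorithm, so $P_{total}\ge n$ --- the bound $\mathrm{OPT}\ge n/m$. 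The hypothesis $m\le\tfrac{2n\epsilon}{3hc}$ rearranges to $hc\le\tfrac{2\epsilon}{3}\cdot\tfrac{n}{m}\le\tfrac{2\epsilon}{3}\,\mathrm{OPT}$; combining this with $\tfrac{1+\delta}{m}P_{total}\le(1+\delta)\,\mathrm{OPT}$ and $\delta=\epsilon/3$ gives $A\le(1+\delta)\,\mathrm{OPT}+\tfrac{2\epsilon}{3}\,\mathrm{OPT}=(1+\epsilon)\,\mathrm{OPT}$.

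The one genuinely load-bearing step --- and the place I expect to have to be careful --- is this last one: the algorithm spends an additive $c$ at each depth (morally, one ``largest job'' of list-scheduling slack at each of the $h$ levels), and the resulting $hc$ term can be charged to $\mathrm{OPT}$ only because the hypothesis $m\le\tfrac{2n\epsilon}{3hc}$, together with $p_{min}\ge1$, forces $\mathrm{OPT}\ge n/m$ to dominate it; this is exactly why a bound on $m$ is unavoidable, and it meshes with the $4/3$ inapproximability of Theorem~\ref{thm:hard-to-approximate} at $m=n/3$. A smaller thing to get right is that the block-concatenation schedule fits in $\floor{A_d}+c$ rather than $A_d+c$ at each depth, which I would justify via integrality of the processing times (equivalently $\ceiling{A_d}\le\floor{A_d}+1$). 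The remaining ingredients --- faithfulness of the bucketing and the resource counts --- are routine.
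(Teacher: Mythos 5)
Your proof is correct and follows the same overall strategy as the paper's: bucket jobs by $(d,u)$, schedule depth-by-depth by list scheduling to upper-bound the makespan, lower-bound $\mathrm{OPT}$ by total volume, and use the hypothesis on $m$ (together with $p_j\ge 1$, hence $\mathrm{OPT}\ge n/m$) to absorb the additive $hc$ term. The one substantive difference is that you argue directly on the original instance $I$ and justify the floor $\lfloor A_d\rfloor+c$ via integrality of the (unrounded) processing times, whereas the paper introduces the rounded instance $I'$ and asserts that list scheduling the depth-$d$ jobs of $I'$ fits into an interval of length $\lfloor A_d\rfloor+c$. That assertion is actually not immediate: the rounded lengths $(1+\delta)^{u+1}$ are generally non-integral, so the list-scheduling bound gives only $A_d+(1-\tfrac1m)c$, and one cannot simply take the floor of $A_d$ without something like your integrality observation on $I$. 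Your version --- schedule the original jobs, observe their makespan is an integer strictly less than $\lfloor A_d\rfloor+c+1$ --- closes this small gap cleanly; otherwise the two arguments are interchangeable (your $\sum_d\mathrm{OPT}_d$ plays exactly the role of the paper's $C'_{max}$, and your upper-bound chain $A\le\tfrac{1+\delta}{m}P_{total}+hc\le(1+\epsilon)\mathrm{OPT}$ is the same computation as the paper's $A\le C'_{max}+hc\le(1+\delta)C^*_{max}+hc$).
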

\begin{proof} 
First we consider the complexity. The space complexity is dominated by the sketch for which  we can use a two dimensional array of size   $ h \cdot k= O( \tfrac{ h \log c} { \epsilon })$. It is easy to see that the update time for each job is $O(1)$. Finally, it takes $O(h \cdot  k) = O({\tfrac{h \log c}{\epsilon}})$ time to compute and return the approximate value $A$. 

Now we consider the approximation ratio of the  algorithm. 
Let $I$ be the input instance, and let  
$I'$ be the instance corresponding to the sketch $SKJ_\delta$ which consists of $n_{d,u}$ jobs that have processing time $rp_{u}$  for each $d$, $u$.  Alternatively, we can also view $I'$ being obtained from $I$ by rounding up the processing time of each job. 
Let $C_{max}^*$  and $C'_{max}$ be the optimal makespan for the  instance $I$ and $I'$, respectively. It is easy to see that $ C_{max}^* \le C'_{max} \le (1 + \delta)  C_{max}^*$.  In the following, we prove that the returned value from Streaming-Algorithm1, $A$, satisfies the inequality, $C_{max}^* \le A \le (1 + \epsilon) C_{max}^*$. 

First, we show that 
$A$ is an upper bound of $C'_{max}$. 
Consider a schedule $S$ for the jobs of instance $I'$ which  schedules the jobs as follows:    the jobs at each depth $d$  are scheduled using list scheduling rule (i.e.  schedule the jobs one by one in the given order to the machine that is available at the earliest time), and jobs with depth $d+1$ can start only after  all jobs with depth $d$ complete. It is easy to see that in $S$ the jobs at  a  depth $d$ are scheduled into an interval of length at most $ \lfloor  \frac{1}{m} \sum_{u=0}^k(  n_{d,u}  \cdot rp_{u}) \rfloor + c = \floor{A_d} + c$. Therefore, the makespan of the feasible schedule $S$ is at most $\sum_{d=1}^h (\floor{A_d} + c) = A$, which implies that $A \ge C'_{max}$, where $C'_{max}$ is the optimal makespan for the instance $I'$.

On the other hand, it is obvious  that
$\sum_{d=1}^{h} A_d =   \sum_{d =1}^{h} ({\frac{1}{m} \sum_{u=0}^k(  n_{d,u}  \cdot rp_{u})})$  is a lower bound of  $C'_{max}$, and $ C_{max}^* \le C'_{max} \le (1 + \delta)  C_{max}^*$.
Thus, $$A  =    \sum_{d=1}^h  \left(  \floor  {A_d} + c\right) \le  \left( \sum_{d=1}^h    {A_d} \right)  + h \cdot  c
\le   C'_{max} + h \cdot  c
\le    (1 + \delta)  C_{max}^* + h \cdot  c.$$
Since $C_{max}^* \ge \tfrac{n}{m}$,   when  $ m   \le \tfrac  { 2 n \epsilon} { 3 \cdot h \cdot c}$, we have $ h \cdot c \le \tfrac {2 \epsilon}{3}  \tfrac {n} { m }  \le \tfrac {2 \epsilon}{3}  C_{max}^*$. Therefore,
$$A  \le  (1 + \delta) C_{max}^* + \tfrac {2 \epsilon}{3}  C_{max}^* = (1 + \tfrac{\epsilon}{3}) C_{max}^* + \tfrac {2 \epsilon}{3}  C_{max}^* \le (1 + \epsilon) C_{max}^* .$$

In summary, we have $ C_{max}^*  \le C'_{max} <  A \le (1 + \epsilon) C_{max}^*, $ and this completes the proof.
\end{proof}

Recall the inapproximability result of the problem $P \mid prec,   dp_j \le h, p_j = 1 \mid C_{max}$ from Theorem~\ref{thm:hard-to-approximate}, which tells us no approximation better than $\tfrac{4}{3}$ is possible in polynomial time  even if the height is bounded and all jobs have unit processing time unless P=NP. Our result from Theorem~\ref{thm-alg1} surprisingly shows that if $m$ is bounded by a fraction of $n$ then we can get a $(1+\epsilon)$-approximation even if the processing times are slightly different. 
For example, if $m \le \tfrac{n}{15}$,   $h=3$ and $c=1$, then we can get a $1.3$-approximation. 
 
Theorem~\ref{thm-alg1} also shows that Stream-Algorithm1 only takes constant time to read and process each job in the stream input, and then constant time and constant space to return an approximation of the makespan if the parameters are known. In some cases, however, we may not know the exact value of $c$, but we are given an estimate $\hat{c}$ of the parameter $c$. In these cases, we can  still apply Stream-Algorithm1 by using $\hat{c}$. As long as  $\tfrac{\hat{c}}{c}$ is a constant, we still have a $(1+\epsilon)$-approximation with the same space and time complexity.

\subsubsection{The parameters $c$, $h$, and $dp_j$, are unknown}
In this subsection, we consider the case that the parameters $c$ (or the estimate $\hat{c}$) and $h$ are not known, furthermore,  the depth of the jobs are not given directly as in the previous section. Instead, the stream input consists of all the jobs $(j, p_j)$ in arbitrary order followed by all the arcs $\langle i,j\rangle$ of the precedence graph in topological order.

In this case, we need to compute and update both the depth of each job and the sketch of the input dynamically as we read the input.  We use a B-tree to maintain the sketch tuples of the input $(d, u, n_{d,u})$ where $(d,u)$ is the key.  
We define a linear order to compare two tuples $(d_1, u_1, n_{d_1, u_1})$ and $(d_2, u_2, n_{d_2,u_2})$,  we say $(d_1, u_1, n_{d_1, u_1}) < (d_2, u_2, n_{d_2,u_2})$ if 1) $u_1 < u_2$, or 2)  $u_1 = u_2$ and $d_1 < d_2$.
Additionally we use an array $B$ to store the jobs' information: for each job $j$ with the processing time $p_j$, we maintain a pair $(dp_j, u_j)$, where $dp_j$ represents  its current depth, and $u_j = \floor{\log_{1+\delta} p_j}$.

When each job $(j, p_j)$ arrives in the stream input, we update job $j$'s entry in the array $B$ such that  $dp_j=1$ and $u_j = \floor{\log_{1+\delta} p_j}$, then create and insert a node $(1,u_j, n_{1,u_j})$ into the tree. Simultaneously we update the smallest processing time $p_{min}$ and the largest processing time $p_{max}$. After all the jobs are read in, we can get the final $p_{min}$ and $p_{max}$ and compute $c= \lceil{p_{max}}/{p_{min}} \rceil$. 

When each arc $\langle i, j\rangle$, which indicates job $i$ is the direct predecessor of job $j$, arrives in the stream input, we access job $i$'s and $j$'s entries in the array to obtain their keys $(d_i, u_i)$ and $(d_j, u_j)$, and compute $dp_j=\max{(d_j, d_i+1)}$. 
If $dp_j > d_j$, we will update the node $(d_j, u_j, n_{d_j, u_j})$ by setting $n_{d_j, u_j} = n_{d_j, u_j}-1$ or delete this node if $n_{d_j, u_j}$ becomes $0$; and then update the node $(dp_j, u_j, n_{dp_j, u_j})$ by setting $n_{dp_j, u_j} = n_{dp_ju_j}+1$ or insert a new node if the node with the key $(dp_j, u_j)$ does not exist in the tree. The job $j$'s entry in the array $B$ is also updated with $(dp_j, u_j)$. After all the arcs are read in, we can get the sketch of the stream input and the largest depth $h$. 
The complete algorithm is given in Streaming-Algorithm2.

\begin{algorithm}
\renewcommand{\thealgorithm}{}	\caption {Streaming-Algorithm2}
   Input:   Parameters $\epsilon$, $m$ 
         
  \hspace{0.4in}    Stream input: the set of jobs in arbitrary order, $(j, p_j)$, $1 \le j \le n$,  followed by 
  
 \hspace{0.4in} the set of arcs of the precedence graph in topological order 
    Output: An approximate value of the optimal makespan
\begin{algorithmic}[1]
    \State create an empty B-tree $T$ and an array $B$ of size $n$
    \State initialize $p_{min}=\infty$, $p_{max}=1$, $h=1$
    \State let $\delta = \tfrac{\epsilon}{3}$
    \State read the input stream and generate the sketch of the input $SKJ_\delta$:
      \Indent    \For{each job $j$ with $(p_j, dp_j)$ in the stream input} 
                      \State let $u = \floor{\log_{1+\delta} p_j}$  
            \State $B[j] = (1, u)$ 
              \If{  there is a node  $(1, u, n_{1,u})$ in the tree $T$}
                      \State    update this node by setting $n_{1,u}=n_{1,u}+1$
               \Else  
                    \State create and insert a node  $(1, u, 1)$ into $T$
                    \EndIf
             \State if $p_{min} > p_j$, $p_{min}=p_j$
            \State if $p_{max} < p_j$, $p_{max}=p_j$
        \EndFor  
  \For{ each arc $\langle i, j\rangle$,   in the stream input}
       
            \State  let  $(d_i, u_i) = B[i]$ and $(d_j, u_j) = B[j]$
            \If {$d_i+1  > d_j$  }       
                \State  $ dp_j =  d_i+1$ 
                \State  $B[j] = (dp_j, u_j)$
                \State update the node $(d_j, u_j, n_{d_j, u_j} )$ in $T$ as below
                 \Indent
                     \State  $n_{d_j, u_j} = n_{d_j, u_j}-1$
                      \State if $n_{d_j, u_j}=0$, delete this node 
                  \EndIndent
                \If{ the node with the key $(dp_j, u_j)$ does not exist in the tree}
                \State  insert a new node $(dp_j, u_j, 1)$
                \Else 
            \State update the node $(dp_j, u_j, n_{dp_j, u_j})$ in $T$ by setting $n_{dp_j, u_j} = n_{dp_j, u_j}+1$               
             \EndIf
          \EndIf
            \State if $h < dp_j $, set $h=dp_j$
       \EndFor
      \State traverse all the nodes $(d,u, n_{d,u})$ in $T$
         \State \hspace{0.2in}   let $SKJ_\delta = \{ (d,u, n_{d,u})\}$         
    \EndIndent
    \State compute the approximate makespan
       \Indent
         \State let $u_{-}=\floor{\log_{1+\delta} p_{min}}$ and $u_{+}=\floor{\log_{1+\delta} p_{max}}$
         \State let $rp_{u_{+}} = p_{max}$
         \State for each $ u_{-} \le  u < u_{+} $
         \Indent 
         \State let  $rp_{u} =  (1 + \delta)^{u+1} $
         \EndIndent
        \State  for each $d$
        \State \hspace{0.2in}let $A_d =  {\frac{1}{m} \sum_{u=u_{-}}^{u_{+}}(  n_{d,u}  \cdot rp_{u})}$   
        \State let $A = \sum_{d=1}^h  (\floor{A_d} + p_{max})$
          \EndIndent  
  \State  return $A$  
\end{algorithmic}
\end{algorithm}

\begin{theorem}\label{thm-alg2}
If parameters $c$ and $h$ are not known, both the jobs and the precedence graph in topological order are input via the stream, for any $\epsilon$, when   $ m   \le \tfrac  { 2 n \epsilon} { 3 \cdot h \cdot c}$, Streaming-Algorithm2 is a one-pass streaming approximation scheme for $P \mid prec,  dp_j \le h, p_{max} \le c \cdot p_{min} \mid C_{max}$ that uses $O(n)$ space, takes $O(  \log ( \tfrac{h}{\epsilon}  \log c ))$ update time for processing each job and each arc in the stream, 
  and  $O({\tfrac{h \log c}{\epsilon}})$ time to return the approximate makespan.  
\end{theorem}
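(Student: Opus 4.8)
The plan is to split the argument into three pieces. First, show that when the stream is exhausted the data structures hold the \emph{true} sketch $SKJ_\delta$ together with the true values of $h$ and $c=\ceiling{p_{max}/p_{min}}$. Second, show that from this point on the returned value $A$ obeys $C_{max}^*\le A\le(1+\epsilon)C_{max}^*$; this is a near-verbatim replay of the proof of Theorem~\ref{thm-alg1}, with $p_{max}$ taking the role the normalization constant $c$ played there. Third, read off the claimed space, update-time, and output-time bounds from the B-tree $T$ and the array $B$. The genuinely new content, and the step I expect to be the main obstacle to state cleanly, is the first piece: arguing that a single pass with purely local updates suffices precisely because the arcs arrive in topological order.

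For the depths I would induct on the processed prefix of the arc stream, carrying the invariant: after a prefix $P$ of the arcs has been read, the first coordinate of $B[j]$ equals the number of vertices on the longest directed path ending at $j$ and using only arcs of $P$. This holds after the job phase (every $B[j]=(1,u_j)$). For the inductive step, when the arc $\ang{i,j}$ is read, topological order guarantees that every arc with head $i$ already lies in the processed prefix, so the stored value $d_i$ is already the final depth $dp_i$; hence $dp_j=\max(d_j,d_i+1)$ correctly extends the invariant to $j$, and no later arc can disturb it. When the arc stream ends, $B[j]$ stores $dp_j$ for every $j$, the running maximum in $h$ is the true height, and (since $p_j$ is fixed once job $j$ is seen) the bucket index $u_j=\floor{\log_{1+\delta}p_j}$ never moves. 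In parallel I would maintain the B-tree invariant: at every moment $T$ contains exactly the tuples $(d,u,n_{d,u})$ with $n_{d,u}>0$, where $n_{d,u}$ counts the jobs whose current $B$-entry is $(d,u)$. Insertion at depth $1$ establishes it; each time an arc raises $dp_j$ from $d_j$ to something strictly larger, the algorithm decrements the count keyed $(d_j,u_j)$ (deleting the tuple at $0$) and increments the count keyed $(dp_j,u_j)$ (creating it if absent) --- exactly the bookkeeping that preserves the invariant. Hence at termination $n_{d,u}$ is the number of jobs of true depth $d$ with $p_j\in[(1+\delta)^u,(1+\delta)^{u+1})$, so $SKJ_\delta$ is genuine, and since $p_{min},p_{max}$ were tracked exactly during the job phase, $c=\ceiling{p_{max}/p_{min}}$ is a valid integer bound on the spread.

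For accuracy, with the true sketch in hand Streaming-Algorithm2 computes on the rounded-up instance $I'$ (top bucket capped at $p_{max}$) the analogue of the quantity produced by Streaming-Algorithm1, so the argument of Theorem~\ref{thm-alg1} applies mutatis mutandis: $C_{max}^*\le C'_{max}\le(1+\delta)C_{max}^*$; list-scheduling the depth-$d$ jobs of $I'$ for $d=1,\dots,h$ in turn packs depth $d$ into an interval of length at most $\floor{A_d}+p_{max}$, yielding a feasible schedule of $I'$ of makespan at most $A=\sum_{d=1}^h(\floor{A_d}+p_{max})$, so $A\ge C'_{max}$; and the total-work bound $\sum_{d=1}^h A_d\le C'_{max}$ gives $A\le C'_{max}+h\,p_{max}\le(1+\delta)C_{max}^*+h\,p_{max}$. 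The only change from Theorem~\ref{thm-alg1} is that I would use $C_{max}^*\ge\frac1m\sum_j p_j\ge\frac{n\,p_{min}}{m}$ and $p_{max}\le c\,p_{min}$ in place of $C_{max}^*\ge n/m$ and $p_{max}\le c$; under $m\le\tfrac{2n\epsilon}{3hc}$ these give $h\,p_{max}\le\tfrac{hcm}{n}C_{max}^*\le\tfrac{2\epsilon}{3}C_{max}^*$, and with $\delta=\epsilon/3$ one gets $A\le(1+\epsilon)C_{max}^*$, while $A\ge C'_{max}\ge C_{max}^*$.

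Finally, for the resources: the array $B$ occupies $n$ cells, and $T$ never holds more than $h(\floor{\log_{1+\delta}c}+2)=O(\tfrac{h\log c}{\epsilon})$ tuples, which under the standing bound on $m$ is $O(n)$, so the total space is $O(n)$. Each job triggers one B-tree search-or-insert plus $O(1)$ work updating $p_{min},p_{max}$; each arc triggers $O(1)$ array accesses plus a constant number of B-tree searches, one decrement-or-delete, and one increment-or-insert; each B-tree operation costs $O(\log(h(\floor{\log_{1+\delta}c}+2)))=O(\log(\tfrac h\epsilon\log c))$, which gives the stated update time for both jobs and arcs. Producing $A$ is one traversal of $T$ to assemble $SKJ_\delta$ followed by forming the $A_d$'s and the sum $\sum_{d=1}^h(\floor{A_d}+p_{max})$, i.e.\ $O(\tfrac{h\log c}{\epsilon})$ time.
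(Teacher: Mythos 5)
Your proof is correct and follows essentially the same route as the paper: replay the accuracy argument of Theorem~\ref{thm-alg1} with $p_{max}$ in place of $c$, using $C_{max}^*\ge n\,p_{min}/m$ and $p_{max}\le c\,p_{min}$, and then read the resource bounds off the B-tree and array. The one place you go noticeably beyond the paper is the explicit induction on the arc prefix showing that topological arrival order makes the single-pass local depth updates correct and that the B-tree always reflects the true $n_{d,u}$ counts; the paper dismisses this with a one-line remark that Streaming-Algorithm2 differs from Streaming-Algorithm1 only in implementation, so your added care is a genuine (if modest) strengthening of the exposition rather than a different proof.
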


\begin{proof} 
The main difference of Streaming-Algorithm2 and Streaming-Algorithm1 is the implementation. The analysis for approximation ratio is similar to Theorem~\ref{thm-alg1}. We will use the same notations as in the proof of Theorem~\ref{thm-alg1}. So $C^*_{max}$ is the optimal makespan of the input instance, $C'_{max}$ is the optimal makespan for the instance  $I'$ corresponding to the sketch $SKJ_{\delta}$.  We can construct a schedule $S$ for $I'$ whose makespan is at most 
$\sum_{d=1}^h (\floor{A_d} + p_{max}) = A$, which implies that $A \ge C'_{max}$.  

It is obvious  that 
$C'_{max} \ge \sum_{d=1}^{h} A_d $. 
Thus, $$A  =   \sum_{d=1}^h  \left(  \floor  {A_d} + p_{max}\right) \le  \left( \sum_{d=1}^h    {A_d} \right)  + h \cdot  p_{max}
\le   C'_{max} + h \cdot   p_{max}
\le    (1 + \delta)  C_{max}^* + h \cdot   p_{max}.$$
Since  $p_{max} \le c \cdot p_{min}$ and $C_{max}^* \ge \tfrac{n \cdot p_{min}}{m}$,  we have
$ h \cdot p_{max} \le  h \cdot c \cdot p_{min} \le h \cdot c \cdot \tfrac{m}{n} C_{max}^*$. when  $ m   \le \tfrac  { 2 n \epsilon} { 3 \cdot h \cdot c}$, we get $ h \cdot p_{max}   \le    \tfrac {2 \epsilon}{3}  C_{max}^*.$ 
Therefore,
$$A  \le  (1 + \delta) C_{max}^* + \tfrac {2 \epsilon}{3}  C_{max}^* = (1 + \tfrac{\epsilon}{3}) C_{max}^* + \tfrac {2 \epsilon}{3}  C_{max}^* =   (1 + \epsilon) C_{max}^* .$$
In summary, we have $ C_{max}^*  \le C'_{max} \le  A \le (1 + \epsilon) C_{max}^* $

Now we analyze the complexity of Streaming-Algorithm2, the number of nodes in B-tree $T$ is at most $O(h \log_{1+\delta} \lceil \tfrac{p_{max}}{p_{min}} \rceil) = O(h \log_{1+\delta} c) = O(\tfrac{h} {\epsilon} \log c)$. So when each job or arc is read from the stream input, the corresponding update time for search, insertion or update operation on the B-tree is always $O(\log (h \log_{1+\delta} c)) =  O(\log ( \tfrac{h}{\epsilon}   \log c))$. 
After the input is read in, it takes additional $O(\tfrac{h} {\epsilon} \log c)$ time to traverse B-tree  and compute the approximation of the optimal value.
The stream input size is $O(n+e)$, where $n$ is the number of jobs and $e$ is the number of arcs of the precedence graph.  
Streaming-Algorithm2, uses only   $O(n)$ space to store  the array $B$ and  the tree  $T$, which is sublinear considering the number of arcs usually has $e=O(n^{1+\beta})$, $0 < \beta  \le 1$ in a dense graph.
\end{proof}

\subsection{Streaming Approximation Algorithms for $P \mid prec,  dp_j \le h, p_{[n]} \le c \cdot p_{[ (1 - \alpha) n)]} \mid C_{max}$ }

In this section, we  consider the more general case where the largest $\alpha n$ jobs have no more than $c$ factor difference for some constant $0 < \alpha \le 1$. Apparently, the problem $P \mid prec,  dp_j \le h, p_{max} \le c \cdot p_{min} \mid C_{max}$ is  the special case where $\alpha = 1$. Following the same procedure of our streaming algorithms, 
we need to compute the sketch of the input $SKJ_\delta = \{(d, u, n_{d,u})\}$. 
However, different from the case $\alpha = 1$, i.e., the problem $P \mid prec,  dp_j \le h, p_{max} \le c \cdot p_{min} \mid C_{max}$,
for which there are only constant number $O(\tfrac{  h \log c}{\epsilon})$ of entries in the sketch of the input, for the problem $P \mid prec,  dp_j \le h, p_{[n]} \le c \cdot p_{[ (1 - \alpha) n)]} \mid C_{max}$, if we consider all jobs in the sketch there may be a very large number of entries in the sketch of the input since $p_{max}$ may be very large compared with $p_{min}$.  
We will show in the following that when we generate the sketch of the input we can ignore those small jobs whose  processing time is less than $\tfrac{p_{max}}{n^2}$ and still get a good approximation  of the optimal makespan using only sublinear space.

\subsubsection{ The parameters $c$, $h$, and $dp_j$, are known}
We study the streaming algorithm for $P \mid prec, dp_j \le h, p_{[n]} \le c \cdot p_{[ (1 - \alpha) n)]} \mid C_{max}$ when the parameters $c$, $h$, and $dp_j$ for all $1 \le j \le n$ are known. 
As mentioned above, the jobs with processing time less than $\tfrac{p_{max}}{n^2}$ will not be included in the sketch $SKJ_\delta$. 
Specifically, $SKJ_\delta = \{(d, u, n_{d,u}): u_{-} \le u \le u_{+}$, $1 \le d \le h  \}$, where $u_{-} = \lfloor \log_{1+\delta} \tfrac{p_{max}}{n^2} \rfloor$, and $u_{+} =\lfloor \log_{1+\delta} p_{max} \rfloor$.
Without loss of generality, we assume $p_{max}$ is not known until all input is read.  So $p_{max}$ in our algorithm represents  the current maximum processing time   of the jobs that  we have read so far.
We use a $B$-tree to store all the considered tuples, $(d, u, n_{d,u})$. When a job $j$ with $(p_j, dp_j)$  arrives, if $p_j < \tfrac{p_{max}}{n^2}$, we skip this job and continue to read the next job. Otherwise, let $d=dp_j$ and $u=\lfloor\log_{1+\delta} p_j \rfloor$, and we update B-tree as follows: if $(d, u, n_{d,u})$ exists in the tree, update this node with $(d, u, n_{d,u}+1)$; otherwise, insert a new node $(d, u, 1)$. To limit the number of nodes in the tree, whenever a new node is inserted,  we check   the node with  the smallest $u$, $(d', u', n_{d',u'})$, if $u' < \lfloor  \log_{1+\delta} \tfrac{p_{max}}{n^2} \rfloor$, we delete the smallest node. 
The final sketch of the input $SKJ_\delta$ includes only  the tuples $(d, u, n_{d,u})$  from the $B$-tree  such that $u_{-} \le u \le u_{+}$.
 We present our algorithm formally in Streaming-Algorithm3.

\begin{algorithm}
\renewcommand{\thealgorithm}{}	\caption {Streaming-Algorithm3}
   Input:   Parameters $\epsilon$, $m$, $n$, $\alpha$, $c$ and $h$
         
    \hspace{0.4in}     Stream input:  $(p_j, dp_j)$ for all jobs $1 \le j \le n$.
         
    Output: An approximate value of the optimal makespan
\begin{algorithmic}[1]
   \State let $\delta = \tfrac{\epsilon}{3}$
    \State create an empty B-tree $T$
    \State initialize $p_{max}=1$
    \State read the input stream and generate the sketch of the input $SKJ_\delta$:
     \Indent
        \For {each job $j$ with $(j, p_j)$ in the stream input}
       
        \If {$p_j < \tfrac{p_{max}}{n^2}$}
        \State skip this job and continue the next job
        \Else 
        
        \State if $p_{max} < p_j$, $p_{max}=p_j$
        \State let  $d=dp_j$, $u=\floor{ \log_{1+\delta} p_j}$, and update B-tree as follows:
           \If {node $(d, u, n_{d,u})$ exists in the tree }
             \State  \hspace{0.2in}  update  the node with  $  n_{d,u} = n_{d,u}+1$   
            \Else
                \State     insert a new node $(d, u, 1)$
                 \State   let $(d', u', n_{d',u'})$ be the node with the smallest $u$ 
                \State    if $  u'  <   \log_{1+\delta}\tfrac{p_{max}}{n^2}$, delete $(d', u', n_{d',u'})$  from the tree
            \EndIf
            \EndIf
        \EndFor    
        \State let $u_{-} = \floor{\log_{1+\delta} \tfrac{p_{max}}{n^2}}$,  $u_{+} =\floor{\log_{1+\delta} p_{max}}$
        \State traverse   $T$ and generate the sketch  using only the nodes with $u_{-} \le u \le u_{+}$ 
        \State \hspace*{0,2in} $SKJ_\delta =   \{ (d,u, n_{d,u} ): 1 \le d \le h, u_{-} \le u \le u_{+}\} $    
     \EndIndent
     \State compute the approximate makespan         
        \Indent 
         \State let $rp_{u_{+}} = p_{max}$
         \State  for each $ u_{-} \le  u < u_{+} $ 
         \Indent  \State let  $rp_{u} =  (1 + \delta)^{u+1} $
         \EndIndent
         \State  for each $d$ 
          \Indent
         \State let $A_d =      {\frac{1}{m} \sum_{u=u_{-}}^{u_{+}}(  n_{d,u}  \cdot rp_{u})}$ 
          \EndIndent
        \State let $A =( \sum_{d=1}^h  (\floor{A_d} + p_{max}))  + \lceil \tfrac{p_{max}}{n} \rceil$
       \EndIndent
      \State  return $A$  
\end{algorithmic}
\end{algorithm}

\begin{theorem}\label{thm-alg3}
When $ m  \le \tfrac { 2 n \alpha  \epsilon} { 3 (h+1) \cdot c} $, Streaming-Algorithm3 is a streaming approximation scheme  for the  problem $P \mid prec,  dp_j \le h,
 p_{[n]} \le c \cdot p_{[ (1 - \alpha) n)]},\mid C_{max}$ that uses $O(\tfrac{h}{\epsilon} \log n )$ space,  takes  $O(\log \tfrac{h}{\epsilon} + \log \log n)$ update time for each job in the stream, and $O(\tfrac{h}{\epsilon} \log n )$ time to return an approximate value that is at most $(1+\epsilon)$ times the optimal makespan.
\end{theorem}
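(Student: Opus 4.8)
The plan is to follow the proof of Theorem~\ref{thm-alg1}, with two adjustments: one for the B-tree implementation (the resource bounds) and one for the short jobs discarded while building the sketch (the approximation ratio). For the resource bounds, I would first bound $|T|$: each node has a key $(d,u)$ with $1\le d\le h$, and the keys surviving to the end have $u$ in the window $[\lfloor\log_{1+\delta}\tfrac{p_{max}}{n^2}\rfloor,\lfloor\log_{1+\delta}p_{max}\rfloor]$ (lower keys are purged as new nodes are inserted), a window of width $\log_{1+\delta}n^2=O(\tfrac{\log n}{\delta})=O(\tfrac{\log n}{\epsilon})$, so $|T|=O(\tfrac{h}{\epsilon}\log n)$, which is the claimed space. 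Each job then triggers $O(1)$ B-tree operations of cost $O(\log|T|)=O(\log\tfrac{h}{\epsilon}+\log\log n)$, and the final traversal plus the arithmetic over the sketch costs $O(\tfrac{h}{\epsilon}\log n)$.

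For correctness, let $I$ be the instance with optimal makespan $C_{max}^*$, let $I_{big}$ be $I$ restricted to the jobs appearing in the final sketch --- exactly the jobs $j$ with $p_j\ge\tfrac{p_{max}}{n^2}$, since any job ever skipped or deleted had processing time below the then-current threshold $\tfrac{p_{max}}{n^2}$, which only grows --- and let $I'$ be $I_{big}$ with each processing time rounded up to its value $rp_{u_j}$. Exactly as in Theorem~\ref{thm-alg1}, each $rp_u$ satisfies $p_j\le rp_u<(1+\delta)p_j$ for the jobs mapped to it (including $rp_{u_+}=p_{max}$), so $C_{max}^*(I')\le(1+\delta)C_{max}^*(I_{big})\le(1+\delta)C_{max}^*$. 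To see $A\ge C_{max}^*$, I would build a feasible schedule of $I$ of makespan $\le A$: schedule the retained jobs depth by depth by list scheduling on the rounded times (valid since $rp_u$ dominates the true time), fitting all of depth $d$ into an interval of length at most $\lfloor A_d\rfloor+p_{max}$ as in Theorem~\ref{thm-alg1}; then run all discarded jobs one after another in topological order on a single machine after everything else --- there are at most $n$ of them, each shorter than $\tfrac{p_{max}}{n^2}$, so they fit in an interval of length $\lceil\tfrac{p_{max}}{n}\rceil$, and every precedence constraint holds because each predecessor is a retained job (already finished) or a discarded job run earlier. This schedule has makespan $\sum_{d=1}^h(\lfloor A_d\rfloor+p_{max})+\lceil\tfrac{p_{max}}{n}\rceil=A$.

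For the other direction, from $\lfloor A_d\rfloor\le A_d$, $\lceil\tfrac{p_{max}}{n}\rceil\le p_{max}$, and $\sum_{d=1}^h A_d\le C_{max}^*(I')$ (the total rounded work over $m$ lower-bounds any schedule of $I'$), I get $A\le C_{max}^*(I')+(h+1)p_{max}\le(1+\delta)C_{max}^*+(h+1)p_{max}$. The one genuinely new estimate is the bound on $p_{max}$: the $\alpha n$ largest jobs each have length at least $p_{[(1-\alpha)n]}\ge\tfrac{p_{max}}{c}$, so the total work is at least $\tfrac{\alpha n\,p_{max}}{c}$ and hence $C_{max}^*\ge\tfrac{\alpha n\,p_{max}}{c\,m}$, i.e. $p_{max}\le\tfrac{c\,m}{\alpha n}C_{max}^*$. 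Then $m\le\tfrac{2n\alpha\epsilon}{3(h+1)c}$ gives $(h+1)p_{max}\le\tfrac{2\epsilon}{3}C_{max}^*$, so with $\delta=\tfrac{\epsilon}{3}$ we get $A\le(1+\tfrac{\epsilon}{3})C_{max}^*+\tfrac{2\epsilon}{3}C_{max}^*=(1+\epsilon)C_{max}^*$, which together with $A\ge C_{max}^*$ finishes the proof. I expect the delicate point to be the accounting around the a-priori unknown $p_{max}$: one must verify that the skip/delete rule never discards a job that belongs in the final sketch window, and that every discarded job is genuinely below $\tfrac{p_{max}}{n^2}$ for the final $p_{max}$, so that the appended tail of discarded jobs really costs only $\lceil\tfrac{p_{max}}{n}\rceil$; with that invariant in hand the rest is a routine adaptation of Theorem~\ref{thm-alg1}.
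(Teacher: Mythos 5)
Your construction of the schedule witnessing $A \ge C_{max}^*$ has a genuine gap. You place all discarded (small) jobs in a single block after \emph{everything else}, and argue feasibility by checking that each discarded job's predecessors have already run. But you never check the other direction: a discarded job $i$ can perfectly well be a predecessor of a \emph{retained} job $j$ (there is nothing ruling out $dp_i < dp_j$ with $p_i < p_{max}/n^2 \le p_j$), and in your schedule $j$ runs in its depth-$dp_j$ slot while $i$ is postponed to the tail, so $j$ starts before $i$ finishes. The schedule you describe is therefore infeasible in general, and the bound $A \ge C_{max}^*$ is not established. The paper avoids this by interleaving per depth: after list-scheduling the retained jobs of depth $d$ into an interval of length $\lfloor A_d\rfloor + p_{max}$, it appends all small jobs of depth $d$ on a single machine before opening the interval for depth $d+1$. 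Since the small jobs at depth $d$ finish before any job at depth $d+1$ begins, every precedence constraint is respected, and since the total length of all small jobs (across all depths) is at most $n \cdot p_{max}/n^2 = p_{max}/n$, the cumulative extra time is still at most $\lceil p_{max}/n\rceil$, giving a feasible schedule of makespan at most $A$.

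Apart from that construction, your argument matches the paper's: the space and update-time bounds via the window width $\log_{1+\delta}n^2 = O(\log n / \epsilon)$, the observation that the threshold $p_{max}/n^2$ only increases so purged jobs really are below the final cutoff, the chain $A \le C_{max}^*(I') + (h+1)p_{max} \le (1+\delta)C_{max}^* + (h+1)p_{max}$, and the key lower bound $C_{max}^* \ge \tfrac{\alpha n}{cm}p_{max}$ coming from the $\alpha n$ largest jobs. Replace the tail block with the paper's per-depth insertion of the small jobs and the proof is complete.
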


\begin{proof}  
We first analyze the approximation ratio. 
Let $I$ be the given instance. Let  $I'$ be the  instance  obtained from $I$ by rounding up all the jobs with the processing times greater than or equal to $\tfrac{p_{max}}{n^2}$, 
i.e. for each job $j$ in $I$, if $p_j \ge \tfrac{p_{max}}{n^2}$, we round it up to 
$ rp_u $ where $u = \lfloor\log_{1 + \delta} p_j \rfloor$; otherwise, we keep it same as before. Let $C_{max}^*$ and $ C'_{max}$ be the optimal makespan for $I$ and $I'$ respectively. 
Let $I''$ be  the instance  corresponding to the sketch $SKJ_\delta$. Apparently $I''$ can be obtained from $I'$ by removing the small jobs whose processing time is less than $\tfrac{p_{max}}{n^2}$. Let $C''_{max}$ be the optimal makespan for $I''$. Then we have $C''_{max} \ge \sum_{d=1}^h A_d$. It is easy to see that $C''_{max} \le C'_{max} \le (1 + \delta)  C_{max}^*$, and  $C_{max}^* \le C'_{max} \le C''_{max} + n \cdot \tfrac{p_{max}}{n^2} = C''_{max} +  \tfrac{ p_{max}}{n}$. 

As before, we can construct a schedule $S$ for $I''$ using list scheduling rule to schedule the jobs depth by depth starting with $d =1$. To get a schedule for all jobs in $I'$ based on $S$, for each depth $d$, we can simply insert into $S$ all the small jobs of this depth onto the first machine   after all  big jobs  of depth $d$ finish and before the first big job of   $d+1$ starts.  Let the new schedule be $S'$. Apparently the makespan of $S'$  is at least $C_{max}^*$  and at most  $A=\sum_{d=1}^{h} (\floor{A_d} + p_{max}) + \lceil \tfrac{ p_{max}}{n} \rceil$. Thus, we have  $A \ge C_{max}^*$ 
 and 
\begin{equation} \label{eq:alg3-error}
A =  \left(\sum_{d=1}^h  (\floor{A_d} + p_{max}) \right) +\lceil \tfrac{ p_{max}}{n} \rceil \le C''_{max} + h \cdot  p_{max} +\lceil \tfrac{ p_{max}}{n} \rceil.\end{equation}
Since the largest $\alpha n$ jobs have no more than $c$ factor difference,   each of  the largest $\alpha n$ jobs has   processing time  at least 
$\tfrac{p_{max}} {c}$. Thus, we have $$C_{max}^* \ge \alpha \cdot n \cdot \tfrac{p_{max}} {c} \cdot \tfrac{1}{m} = \tfrac{ \alpha n}{    c \cdot m} p_{max},$$
which implies $p_{max} \le \tfrac{ c \cdot m}{\alpha \cdot n } C^*_{max}$. If we plug this into   inequality~(\ref{eq:alg3-error}), we get 
 \begin{eqnarray*}
 A  
 & \le & C''_{max} + h \cdot  p_{max} +\lceil \tfrac{ p_{max}}{n} \rceil\\
    &\le&  C''_{max} + (h+1) \cdot   p_{max} \\
 &\le& C''_{max}  + (h + 1)\cdot   \tfrac{  c \cdot m}{ \alpha \cdot n}  C^*_{max} \\
&\le&  (1 + \delta) C_{max}^* + \tfrac{(h + 1) \cdot  c \cdot m}{    \alpha n}C_{max}^*\\
&\le&  (1 + \delta + \tfrac{(h + 1) \cdot  c \cdot m}{    \alpha n})C_{max}^* \\
&\le&  (1 + \tfrac{\epsilon}{3} + \tfrac{(h + 1) \cdot  c \cdot m}{    \alpha n})C_{max}^*.
\end{eqnarray*}
If $ m  \le \tfrac { 2 n \alpha  \epsilon} { 3 (h+1) \cdot c} $, we have $ A  \le (1 + \tfrac{\epsilon}{3} + \tfrac{2\epsilon}{3}) C_{max}^* = (1 + \epsilon) C_{max}^* $.

Now we consider the complexity. The space complexity is dominated by the B-tree. As the way it is implemented, each time a node is inserted into the tree, if there is a node $(d,u, n_{d,u})$ such  that $u <  \log_{1+\delta} \tfrac{p_{max}}{n^2}  $, the smallest such node will be deleted from the tree. 
In this way, 
 the number of nodes in the tree is at most $ h \cdot \log_{1+\delta} n^2 = O(\tfrac{h}{\epsilon} \log n)$. For each job in the stream input, a constant number of tree operations are needed, and thus the update time for processing each job is $O(\log(\tfrac{h}{\epsilon} \log n) = O(\log (\tfrac{h}{\epsilon}) + \log \log n)$ time. 
After reading all the jobs, the computation of the approximation is bounded by the size of the sketch which is $O(\tfrac{h}{\epsilon} \log n)$.
\end{proof}

\subsubsection{The  parameters $c$, $h$, and $dp_j$ are unknown}
We consider the problem $P \mid prec,  dp_j \le h, p_{[n]} \le c \cdot p_{[ (1 - \alpha) n)]} \mid C_{max}$ when the parameters $c$, $h$, and $dp_j$ are not known. In this case, the stream input would include the jobs followed by the arcs.   As in Streaming-Algorithm 2,  we   use a  B-tree to store the sketch information, 
and   an array to store the jobs' information.
Both the array and the tree are updated when we read the jobs and arcs from the stream input.    
The streaming algorithm will be similar to Streaming-Algorithm2 but with some nodes for small processing times excluded  as in Streaming-Algorithm3.
Using similar arguments as in the proof of Theorem 
~\ref{thm-alg2} and \ref{thm-alg3}, we can get the following theorem.

\begin{theorem}\label{thm-alpha-n-unknow-parameter}
If parameters $c$ 
and $h$ are not known,  the jobs, and the precedence graph  in topological order are input via the stream, for any $\epsilon$, 
when $ m  \le \tfrac { 2  n \alpha \epsilon} { 3 (h+1) \cdot c} $, there is a streaming approximation scheme  for the  problem $P \mid prec, dp_j \le h, p_{[n]} \le c \cdot p_{[ (1 - \alpha) n)]} \mid C_{max}$ that uses $O(n )$ space,   takes  $O(\log \tfrac{h}{\epsilon} + \log \log n)$ update time for each job in the stream, and $O(\tfrac{h}{\epsilon} \log n )$ time to return the approximate value.
\end{theorem}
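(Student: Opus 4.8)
The plan is to fuse the two techniques already developed: the online depth-tracking of Streaming-Algorithm2 (needed because $c$, $h$, and the $dp_j$ are no longer given) and the small-job pruning of Streaming-Algorithm3 (needed because under the weaker hypothesis $p_{[n]}\le c\cdot p_{[(1-\alpha)n]}$ the ratio $p_{max}/p_{min}$ may be unbounded). Concretely, I would keep an array $B$ of size $n$ storing, for each job $j$, the pair $(dp_j,u_j)$ with $u_j=\lfloor\log_{1+\delta}p_j\rfloor$ and $\delta=\epsilon/3$, together with a B-tree $T$ keyed on $(d,u)$ holding the counts $n_{d,u}$, but only for jobs that are currently ``large'', i.e.\ with $u_j\ge\lfloor\log_{1+\delta}(p_{max}/n^2)\rfloor$ for the running maximum $p_{max}$. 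In the job phase each job sets $B[j]=(1,u_j)$, refreshes $p_{max}$, and, if large, is inserted into $T$ at depth $1$; on every insertion we lazily delete the smallest-key node when its $u$ has dropped below the current threshold, exactly as in Streaming-Algorithm3. In the arc phase (arcs in topological order, so $d_i$ is final when $\langle i,j\rangle$ is read) we set $dp_j=\max(d_j,d_i+1)$, update $B[j]$, and, if $j$ is large, relocate it in $T$ by decrementing node $(d_j,u_j)$ (deleting it if the count hits $0$) and incrementing node $(dp_j,u_j)$. When the stream ends, $p_{max}$ is final; with $u_-=\lfloor\log_{1+\delta}(p_{max}/n^2)\rfloor$ and $u_+=\lfloor\log_{1+\delta}p_{max}\rfloor$ we keep from $T$ only the nodes with $u_-\le u\le u_+$ to obtain $SKJ_\delta$, set $rp_{u_+}=p_{max}$ and $rp_u=(1+\delta)^{u+1}$ for $u<u_+$, form $A_d=\frac1m\sum_{u=u_-}^{u_+}n_{d,u}\cdot rp_u$, and return $A=\bigl(\sum_{d=1}^h(\lfloor A_d\rfloor+p_{max})\bigr)+\lceil p_{max}/n\rceil$.

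For the approximation guarantee I would reuse the argument of Theorem~\ref{thm-alg3} essentially verbatim. Let $I$ be the input, $I'$ the instance in which every job with $p_j\ge p_{max}/n^2$ is rounded up to $rp_{u_j}$, and $I''$ the sketch instance (the large rounded jobs only), with optimal makespans satisfying $C^*_{max}\le C'_{max}$, $C''_{max}\le C'_{max}\le(1+\delta)C^*_{max}$, and $C''_{max}\ge\sum_{d=1}^h A_d$. Building the depth-by-depth list schedule for $I''$ places the jobs of each depth $d$ (whose rounded sizes are all $\le p_{max}$) into an interval of length $\le\lfloor A_d\rfloor+p_{max}$, as in the earlier proofs; appending, at each depth, the small jobs of that depth on machine $1$ costs at most $n\cdot(p_{max}/n^2)=p_{max}/n$ in total, so the result is a feasible schedule of makespan $\le A$, hence $A\ge C^*_{max}$. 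In the other direction, $A\le\sum_{d=1}^h A_d+h\,p_{max}+\lceil p_{max}/n\rceil\le C''_{max}+(h+1)p_{max}$. Since each of the largest $\alpha n$ jobs has size $\ge p_{max}/c$, we get $C^*_{max}\ge\frac{\alpha n}{cm}p_{max}$, so $p_{max}\le\frac{cm}{\alpha n}C^*_{max}$ and $A\le\bigl(1+\tfrac{\epsilon}{3}+\tfrac{(h+1)cm}{\alpha n}\bigr)C^*_{max}$, which is $\le(1+\epsilon)C^*_{max}$ precisely when $m\le\frac{2n\alpha\epsilon}{3(h+1)c}$.

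For the resources, $B$ uses $O(n)$ space and $T$ has $O(h\log_{1+\delta}n^2)=O(\tfrac{h}{\epsilon}\log n)$ nodes, since only large jobs sit in it and their $u$-values lie in a window of width $\log_{1+\delta}n^2$ over at most $h$ depths; hence the total space is $O(n)$, which is sublinear in the $O(n+e)$ input size on dense precedence graphs. Each job and each arc triggers $O(1)$ B-tree operations at cost $O(\log(\tfrac{h}{\epsilon}\log n))=O(\log\tfrac{h}{\epsilon}+\log\log n)$, and the final traversal of $T$ together with the computation of $A$ takes $O(\tfrac{h}{\epsilon}\log n)$ time.

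The one place that needs real care, beyond quoting Theorems~\ref{thm-alg2} and~\ref{thm-alg3}, is that $p_{max}$, the thresholds $u_-$ and $u_+$, and the job depths are all learned online, so a job judged ``large'' when read may turn small once $p_{max}$ grows, and its depth may still be changed by later arcs. I would neutralise this with three invariants: $p_{max}$ is monotone, hence a skipped job is small with respect to the final $p_{max}$ as well; a genuinely large job has a fixed $u_j\ge u_-$, so all of its tree nodes (before and after any relocation) stay inside the window $[u_-,u_+]$ and survive the final filtering; and a job that has become small only ever leaves stale nodes with $u<u_-$, which are discarded when $SKJ_\delta$ is formed and so contribute nothing to $A$. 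Verifying that these hold across both phases, and in particular that the repeated in-tree relocations during the arc phase never push the node count past $O(\tfrac{h}{\epsilon}\log n)$, is the crux of the argument; the remainder is bookkeeping inherited from the proofs of Theorems~\ref{thm-alg2} and~\ref{thm-alg3}.
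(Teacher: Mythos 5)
Your proposal matches the paper's approach exactly: the paper gives no detailed proof of Theorem~\ref{thm-alpha-n-unknow-parameter}, merely observing that one should combine the online depth-tracking of Streaming-Algorithm2 (array $B$ plus B-tree relocation during the arc phase) with the small-job pruning and lazy deletion of Streaming-Algorithm3, and then appeal to ``similar arguments'' as in Theorems~\ref{thm-alg2} and~\ref{thm-alg3}, which is precisely the fusion you construct. Your three invariants handling the monotonicity of $p_{max}$, the persistence of genuinely large jobs, and the harmlessness of stale nodes are a welcome level of detail that the paper omits but that is needed to make the ``similar arguments'' rigorous.
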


\subsection{The Sketch of the Schedule}\label{sec:Stream-Sketch-Schedule} 
All the   streaming algorithms we have presented so far   return an   approximate value of the optimal makespan. This may be  sufficient for some scheduling and planning applications.  
However, in many other applications,  
it would be desirable to  have a schedule whose makespan is the approximate value. In  the traditional data model, a schedule   explicitly or implicitly specifies for each job on which machine and in what time interval it is scheduled. This  means we need at least $\Omega(n)$ time complexity and space complexity to describe a schedule.  
For the big data model,   
we introduce the concept of {\bf sketch of a schedule} which is a condensed description of a schedule using only sublinear space. In the following, we first give a formal definition for the {\bf sketch of a schedule}, then we show that our previous algorithms can  compute simultaneously not only an approximate value, but also  the {\bf sketch of a schedule}, and finally we show how the sketch can be used to generate a real schedule that achieves the approximate value when the jobs are scanned in the second pass.

\begin{definition}
    For the problems $P \mid prec, dp_j \le h,  p_{max} \le c \cdot p_{max} \mid C_{max}$ and 
    $P \mid prec, dp_j \le h,  p_{[n]} \le c \cdot p_{[ (1 - \alpha) n)]} \mid C_{max}$, the {\bf sketch of a schedule} describes a feasible schedule 
    and consists of  a set of time instants $t_d$, $1 \le d \le h$, such that 
    all the jobs of depth $d$ can be scheduled during the interval $[t_{d-1}, t_d)$ for $1 \le d \le h$ where $t_0 = 0$. Mathematically we denote the {\bf sketch of a schedule} as $SKS = \{t_d: 1 \le d \le h\}$. 
\end{definition}

For the problem $P \mid prec, dp_j \le h, p_{max} \le c \cdot p_{min} \mid C_{max}$, 
by the proof of Theorem~\ref{thm-alg1} 
we know all the jobs of depth $d$ can be feasibly scheduled during an interval of length $\floor{A_d} + c $, which implies that  $SKS = \{t_d: 1 \le d \le h\}$ where $t_d = \sum_{k=1}^d (\floor{A_d} + c)$ for all $1 \le d \le h$, is the sketch of a schedule that can be computed by 
Streaming-Algorithm1. 
\begin{lemma}\label{lemma:sreaming-alg-sketch}
For the problem $P \mid prec, dp_j \le h, p_{max} \le c \cdot p_{min} \mid C_{max}$, Streaming-Algorithm1 
can compute a sketch of a schedule $SKS = \{t_d: 1 \le d \le h\}$ where $t_d = \sum_{k=1}^d (\floor{A_d} + c)$ for all $1 \le d \le h$.
\end{lemma}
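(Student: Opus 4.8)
The plan is to observe that this lemma is almost entirely a bookkeeping statement: Streaming-Algorithm1 already materializes, in its second stage, precisely the quantities $A_1,\dots,A_h$ out of which $SKS$ is assembled, and the feasibility of the resulting schedule is exactly what was certified in the proof of Theorem~\ref{thm-alg1}. First I would recall the schedule $S$ built there for the rounded instance $I'$: it applies list scheduling to the jobs of each depth $d$ separately and only starts jobs of depth $d+1$ once every depth-$d$ job has completed, and it was shown that the depth-$d$ jobs occupy an interval of length at most $\floor{A_d}+c$. Setting $t_0=0$ and, for $1\le d\le h$, $t_d=t_{d-1}+\floor{A_d}+c$ (equivalently $t_d=\sum_{k=1}^{d}(\floor{A_k}+c)$), placing the depth-$d$ jobs into $[t_{d-1},t_d)$ by list scheduling is feasible: the interval is long enough by the bound just quoted, and every precedence edge $\langle i,j\rangle$ runs from a strictly smaller depth to a larger one (since $dp_j\ge dp_i+1$), so job $i$ finishes within $[0,t_{dp_i})\subseteq[0,t_{dp_j-1})$, before job $j$ begins. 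Because rounding only increases processing times, the same placement works verbatim for the original instance $I$, so $SKS=\{t_d:1\le d\le h\}$ is a valid sketch of a schedule; a sanity check gives $t_h=\sum_{d=1}^h(\floor{A_d}+c)=A$, the value the algorithm already returns.

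Second, I would check that the $t_d$'s can be produced within the resource bounds stated in Theorem~\ref{thm-alg1}. In Stage~2, Streaming-Algorithm1 already computes $A_d$ for every $d$ and accumulates $A=\sum_{d=1}^h(\floor{A_d}+c)$; replacing that single accumulation by the running partial sum $t_d:=t_{d-1}+\floor{A_d}+c$ outputs all of $t_1,\dots,t_h$ at the cost of $O(h)$ extra additions and $O(h)$ extra words of storage, both absorbed into the $O(\tfrac{h\log c}{\epsilon})$ time and space Stage~2 already uses, and with no additional pass over the stream. Hence Streaming-Algorithm1 computes $SKS$.

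I do not anticipate a genuine obstacle here: the per-depth interval length $\floor{A_d}+c$ was already established to be feasible inside the proof of Theorem~\ref{thm-alg1}, so what remains is only to record the telescoping identity for $t_d$, to note that scheduling depths in increasing order respects $\prec$, and to observe that the $A_d$'s are exactly the intermediate data the algorithm already exposes. The mild subtlety worth stating explicitly is the last point — that the lemma is not asking for a new computation, only for the algorithm to emit partial sums of values it computes anyway — together with the remark that feasibility transfers from the rounded instance $I'$ to $I$ since processing times only shrink.
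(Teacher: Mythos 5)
Your proof takes the paper's own route: the paper's justification is the single sentence that precedes the lemma, citing the proof of Theorem~\ref{thm-alg1} for the fact that the depth-$d$ jobs pack into an interval of length $\floor{A_d}+c$ and then reading off $t_d$ as a partial sum, and you do the same, simply making explicit the details the paper leaves implicit (that depths strictly increase along precedence arcs, that a schedule for the rounded instance specializes to the original one since rounding only raises processing times, and that the partial sums cost only $O(h)$ extra work). Nothing here is a different decomposition or a new lemma.

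The one caution, inherited from the paper rather than introduced by you, is that the cited packing claim is not quite right as stated. The standard list-scheduling guarantee is $A_d + c(1-\tfrac{1}{m})$, which can exceed $\floor{A_d}+c$ once $m>c$ and the fractional part of $A_d$ exceeds $c/m$, and the greedy next-fit used in Algorithm~SketchToSchedule can fare no better. Concretely, take $m=10$, $c=2$, $h=1$, $\epsilon=0.99$, so $\delta=\epsilon/3$ and $rp_1=(1+\delta)^2\approx 1.769$; let $n=31$ depth-$1$ jobs each have size $1.768$. They all round to $rp_1$, giving $A_1\approx 5.48$ and $\floor{A_1}+c=7$, and the hypothesis $m \le 2n\epsilon/(3hc)\approx 10.2$ holds. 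Yet at most three jobs of size $1.768$ fit in a window of length $7$ on one machine (since $4\cdot 1.768>7$), so the ten machines accommodate only $30$ of the $31$ jobs and SketchToSchedule overruns its machine budget. Replacing $\floor{A_d}$ by $\ceiling{A_d}$ in the interval length (and in the returned value $A$, adjusting the additive error in Theorem~\ref{thm-alg1} by another $h$, which the hypothesis on $m$ still absorbs) repairs this; since your proposal adopts the floor bound by citation to ``it is easy to see,'' it should either verify that bound or make the ceiling adjustment.
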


Based on the sketch of the schedule, if we scan all the jobs in the second time, we can generate a feasible schedule using the Algorithm SketchToSchedule.

\begin{algorithm}
\renewcommand{\thealgorithm}{}
	\caption {SketchToSchedule}
   Input:   $SKS = \{t_d: 1 \le d \le h\}$ 
         
   \hspace{0.4in}        Stream input:  $(p_j, dp_j)$ for all jobs $1 \le j \le n$.
         
    Output: a feasible schedule $S$ of all the jobs
\begin{algorithmic}[1]
  \For {each $d$, $1 \le d \le h$}
     \State let $curM_d$ and $curT_d$ be the machine and time   where next job with depth d will be scheduled. 
     \State  $curM_d = 1$; $curT_d= t_{d-1}$ 
   \EndFor
  \State read the job stream input and generate the schedule:
   \Indent
        \For {each job $(p_j, dp_j)$  }     
         \State let $d = dp_j$
          \If { $curT_d + p_j \le t_{d}$  }
       \State    schedule job $j$ at time $curT_d$ on machine $curM_d$
         \State  set $curT_d = curT_d+p_j$
         \Else 
        \State schedule job $j$ at time $t_{d-1}$ on machine $curM_d+1$ 
        \State  set $curM_d  = curM_d + 1 $ and $curT_d  = t_{d-1}+p_j$
        \EndIf
        \EndFor
        \EndIndent
\end{algorithmic}
\end{algorithm}

By lemma~\ref{lemma:sreaming-alg-sketch} and the     Algorithm SketchToSchedule, we have the following theorem.

 \begin{theorem}\label{thm: sketch-alg1}
  For $P \mid prec, dp_j \le h,  p_{min} \le c \cdot p_{max}
 \mid C_{max}$, given   any $0 < \epsilon < 1$, when $ m   \le \tfrac  { 2 n \epsilon} { 3 \cdot h \cdot c}$, Streaming-Algorithm1 can compute  a sketch of the schedule $SKS$ which can be applied to Algorithm SketchToSchedule to generate  a feasible schedule with the makespan at most $(1+\epsilon)$ times the optimal makespan.
 \end{theorem}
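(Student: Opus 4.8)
The plan is to assemble three ingredients that are already in place: Lemma~\ref{lemma:sreaming-alg-sketch}, which says Streaming-Algorithm1 outputs the sketch $SKS=\{t_d:1\le d\le h\}$ with $t_d=\sum_{k=1}^d(\floor{A_k}+c)$; the analysis of Algorithm SketchToSchedule, which must be shown to turn $SKS$ together with a second pass over the job stream into a \emph{feasible} schedule; and Theorem~\ref{thm-alg1}, which already establishes that the returned value $A=t_h$ satisfies $A\le(1+\epsilon)C_{max}^*$ whenever $m\le\tfrac{2n\epsilon}{3hc}$. So the only genuinely new work is the correctness of SketchToSchedule; once that is done, the makespan bound is immediate.

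First I would check feasibility with respect to the precedence constraints. By the definition of depth and the assumption that $G$ has no transitive edges, every immediate predecessor of a depth-$d$ job has depth at most $d-1$. SketchToSchedule places every depth-$d$ job inside the window $[t_{d-1},t_d)$: it initializes $curT_d=t_{d-1}$, advances $curT_d$ as it packs jobs of that depth, and whenever a job would overrun $t_d$ it opens a fresh machine starting again at $t_{d-1}$. Hence, by induction on $d$, all jobs of depth $\le d$ complete by time $t_d$; in particular, when a depth-$d$ job starts (at a time $\ge t_{d-1}$) all of its predecessors have already finished, so precedence is respected. It follows that the makespan of the produced schedule is at most $t_h=\sum_{d=1}^h(\floor{A_d}+c)=A$.

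Second — and this is the step that needs the most care — I would verify that SketchToSchedule never uses more than $m$ machines on any single depth. Fix $d$. The window has length $L_d:=t_d-t_{d-1}=\floor{A_d}+c$, and whenever the algorithm closes a machine and moves to the next one, the machine it just closed carries a depth-$d$ load exceeding $L_d-p_{max}\ge L_d-c=\floor{A_d}$. Since the true processing times are bounded above by their rounded values $rp_{u}$, the total depth-$d$ load is at most $\sum_{j:\,dp_j=d}rp_{u_j}=mA_d$, which is precisely the quantity whose scaled rounding defines $L_d$ and is the same estimate used for the interval length in the proof of Theorem~\ref{thm-alg1}. Combining ``each closed machine has load $>\floor{A_d}$'' with ``total depth-$d$ load $\le mA_d$'' — and, in the degenerate case $\floor{A_d}=0$, using $p_j\ge1$ (so there are fewer than $m$ such jobs) — yields that the number of machines used for depth $d$ is at most $m$. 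Summing over all depths, SketchToSchedule outputs a genuinely feasible schedule on $m$ machines.

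Finally I would close the argument: the feasible schedule just produced has makespan at least $C_{max}^*$ and at most $A$, and by Theorem~\ref{thm-alg1} we have $A\le(1+\epsilon)C_{max}^*$ under the hypothesis $m\le\tfrac{2n\epsilon}{3hc}$; hence its makespan lies in $[C_{max}^*,(1+\epsilon)C_{max}^*]$, as claimed. The main obstacle is exactly the machine-count bound of the previous paragraph: SketchToSchedule fills one machine at a time (a next-fit style rule) rather than assigning to the earliest available machine as in the proof of Theorem~\ref{thm-alg1}, so one must argue carefully, using the precise relationship between $\floor{A_d}$, $c$ and $p_{min}=1$, that this greedy packing still respects the $m$-machine budget implicit in the window lengths $t_d-t_{d-1}$.
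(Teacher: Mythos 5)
Your proposal follows essentially the same decomposition as the paper — Lemma~\ref{lemma:sreaming-alg-sketch} supplies the sketch, Algorithm SketchToSchedule realizes it as a schedule, and Theorem~\ref{thm-alg1} supplies the ratio — but you fill in the feasibility step that the paper dismisses with ``it is easy to see.'' You correctly identify the subtlety the paper glosses over: SketchToSchedule is a \emph{next-fit} packing rule (fill a machine until the window is exhausted, then open a fresh one), whereas the window length $\floor{A_d}+c$ in the proof of Theorem~\ref{thm-alg1} is justified by \emph{list scheduling} (assign to the earliest-available machine), so the claim that depth $d$ uses at most $m$ machines is genuinely something that needs proving. Flagging this is a real improvement over the source.

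However, the arithmetic you give for the machine count does not quite close. From ``each closed machine carries load $>\floor{A_d}$'' and ``total depth-$d$ load $\le mA_d$'' you conclude at most $m$ machines, but these two facts alone do not imply it: take $m=3$, $A_d=2.9$, so $\floor{A_d}=2$ and $mA_d=8.7$; closed machines could each carry load $2.1>\floor{A_d}$, allowing four machines to be closed and a fifth opened. The missing step is the integrality you allude to but never invoke in the main case: because processing times are integers, every closed machine's load is an integer strictly greater than $\floor{A_d}$, hence at least $\floor{A_d}+1>A_d$; then $(q-1)(\floor{A_d}+1)\le mA_d<m(\floor{A_d}+1)$ forces $q\le m$. (Your $\floor{A_d}=0$ case already uses $p_j\ge 1$ and is fine.) With that one line made explicit your proof is complete, and in fact more careful than the paper's.
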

 \begin{proof}
 Since the total length of the jobs at depth $d$ after rounding  is $m \cdot A_d$, and the largest processing time is $c$, 
it is easy to see that Algorithm SketchToSchedule generates a feasible schedule of these jobs 
 during the interval $[t_{d-1}, t_d]$, where $t_d = t_{d-1} + \floor{A_d} + c$. 
The final schedule of all $n$ jobs has the makespan at most $ t_h = \sum_{d=1}^h \floor{A_d} + c$, which is at most by $(1 + \epsilon)C_{max}^*$ by the proof of  Theorem~\ref{thm-alg1}.
\end{proof}

Similarly,  Streaming-Algorithm2  
can compute a sketch of a schedule $SKS = \{t_d: 1 \le d \le h\}$ where $t_d = \sum_{k=1}^d (\floor{A_d} + p_{max})$ for all $1 \le d \le h$, and we have 
the following theorem.
\begin{theorem}\label{thm: sketch-alg2}
  For $P \mid prec, dp_j \le h, p_{min} \le c \cdot p_{max} \mid C_{max}$, given   any $0 < \epsilon < 1$, when $ m   \le \tfrac  { 2 n \epsilon} { 3 \cdot h \cdot c}$  Streaming-Algorithm2 can compute  
  a sketch of a schedule  $SKS$ which can be applied to Algorithm SketchToSchedule to generate  a feasible schedule with the makespan at most $(1+\epsilon)$ times the optimal makespan. 
 \end{theorem}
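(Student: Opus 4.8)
The plan is to mirror the proof of Theorem~\ref{thm: sketch-alg1} almost verbatim, changing only the two places where Streaming-Algorithm1 exploited the known constant $c$: since Streaming-Algorithm2 does not know $c$, it substitutes the running maximum processing time $p_{max}$, exactly as in the passage from Theorem~\ref{thm-alg1} to Theorem~\ref{thm-alg2}. First I would note that Streaming-Algorithm2 already produces, as by-products of computing the returned value $A$, the numbers $A_d=\tfrac1m\sum_{u=u_-}^{u_+} n_{d,u}\,rp_u$ for $1\le d\le h$ together with the final value $p_{max}$. So I would record, as the analogue of Lemma~\ref{lemma:sreaming-alg-sketch}, that Streaming-Algorithm2 can also output the sketch of a schedule $SKS=\{t_d:1\le d\le h\}$ with $t_0=0$ and $t_d=\sum_{k=1}^d(\floor{A_k}+p_{max})$; this costs no extra asymptotic time or space, since it only reads the sketch $SKJ_\delta$ and $p_{max}$, and moreover $t_h=A$, the value already returned.

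Next I would verify that feeding this $SKS$ to Algorithm SketchToSchedule, on a second pass over the job stream, yields a feasible schedule $S$ of makespan at most $t_h=A$. For feasibility the point is that SketchToSchedule places every job of depth $d$ inside the window $[t_{d-1},t_d)$ of length $\floor{A_d}+p_{max}$, and these windows are disjoint and ordered by depth, so any arc $i\prec j$ (which forces $dp_i<dp_j$) is respected. To see that the depth-$d$ jobs actually fit, I would use the same list-scheduling estimate as in the proof of Theorem~\ref{thm-alg1}: the total rounded processing time of the depth-$d$ jobs is exactly $m\,A_d$, each of them is at most $p_{max}$, and SketchToSchedule opens a new machine for depth $d$ only when the current machine already carries load exceeding $(t_d-t_{d-1})-p_{max}=\floor{A_d}$, so a counting argument bounds the number of machines used at each depth by $m$. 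Then I would invoke the inequalities from the proof of Theorem~\ref{thm-alg2}, namely $C^*_{max}\le C'_{max}\le(1+\delta)C^*_{max}$ and, under $m\le\tfrac{2n\epsilon}{3hc}$, $A=\sum_{d=1}^h(\floor{A_d}+p_{max})\le(1+\delta)C^*_{max}+h\,p_{max}\le(1+\epsilon)C^*_{max}$, using $p_{max}\le c\,p_{min}$ and $C^*_{max}\ge n\,p_{min}/m$; this gives $C^*_{max}\le C_{max}(S)\le t_h=A\le(1+\epsilon)C^*_{max}$.

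The only step that needs genuine care is the machine-count bound for SketchToSchedule, i.e. confirming that the window length $\floor{A_d}+p_{max}$ is wide enough that the greedy, machine-by-machine packing never exceeds $m$ machines; this is where the floor must be handled carefully. One uses that each machine closed before the last carries load strictly greater than $\floor{A_d}$, that the total depth-$d$ load is $m\,A_d$, and that $A_d\ge 1$ in the regime of interest (which follows from $n\ge m$), so that $m\,A_d$ cannot be spread over more than $m$ such machines. Everything else is a direct transcription of the arithmetic already carried out for Theorems~\ref{thm-alg1}, \ref{thm-alg2} and~\ref{thm: sketch-alg1}, with the constant $c$ replaced throughout by the computed bound $p_{max}$.
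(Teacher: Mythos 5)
Your overall route matches the paper's, which (essentially without further argument) transfers the proof of Theorem~\ref{thm: sketch-alg1} to Streaming-Algorithm2 by replacing $c$ everywhere with the computed $p_{max}$: the sketch $SKS=\{t_d\}$ with $t_d=\sum_{k\le d}(\floor{A_k}+p_{max})$ is a by-product of the algorithm with no extra asymptotic cost, $t_h=A$, SketchToSchedule confines every depth-$d$ job to the window $[t_{d-1},t_d)$ of length $\floor{A_d}+p_{max}$ so precedence constraints are respected, and the $(1+\epsilon)$ bound on $t_h=A$ is exactly the inequality chain already proved for Theorem~\ref{thm-alg2}. So the shape and all the ingredients of your argument are correct, and you correctly identify the only point requiring a real check.

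However, the argument you give at that one point does not actually yield the machine bound. From ``each closed machine carries load $>\floor{A_d}$,'' ``total depth-$d$ load is at most $mA_d$,'' and ``$A_d\ge 1$'' you get $(k-1)\floor{A_d}<mA_d$, hence $k-1<mA_d/\floor{A_d}$; since $A_d/\floor{A_d}$ can approach $2$, this only gives roughly $k\le 2m$, not $k\le m$. The missing ingredient is the integrality of the processing times (an assumption the paper does use explicitly, e.g.\ in the proof of Theorem~\ref{thm: sketch-sublinear-time-alg}): each closed machine's load is an integer that is $>\floor{A_d}$, hence $\ge\floor{A_d}+1$, and then $(k-1)(\floor{A_d}+1)\le\text{total load}\le mA_d<m(\floor{A_d}+1)$ gives $k\le m$. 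This also disposes of the case $\floor{A_d}=0$ (each closed machine has load $\ge 1$), so the hypothesis $A_d\ge 1$ is neither needed nor, on its own, sufficient. Replacing that step with the integrality argument completes the proof.
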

 
For the problem $P \mid prec, dp_j \le h, p_{[n]} \le c \cdot p_{[ (1 - \alpha) n)]} \mid C_{max}$, Streaming-Algorithm3 gives an approximate value of the optimal makespan. However, the small jobs from depth $d$ are not considered when we calculate $A_d$, so the sketch of the schedule is slightly different from previous problem. We will show that  in this case, the sketch of a schedule is given by  $SKS = \{t_d: t_d = t_{d-1} + ( \floor{A_d} + p_{max} + \lceil \tfrac{p_{max}}{n}\rceil), 1 \le d \le h \}$, $t_0 = 0$.   

\begin{theorem}\label{thm: sketch-alg3}
  For the  problem $P \mid prec, dp_j \le h, p_{[n]} \le c \cdot p_{[ (1 - \alpha) n)]} \mid C_{max}$, given   any $0 < \epsilon < 1$, when $ m  \le \tfrac { 2 n \alpha  \epsilon} { 3 (h+1) \cdot c} $, 
Streaming-Algorithm3 can compute  a sketch of the schedule $SKS= \{t_d: t_d = t_{d-1} + ( \floor{A_d} + p_{max} + \lceil\tfrac{p_{max}}{n} \rceil), 1 \le d \le h \}$, $t_0 = 0$, and based on $SKS$, Algorithm SketchToSchedule can generate a feasible schedule with the makespan at most $(1+\epsilon)$ times the optimal makespan.
\end{theorem}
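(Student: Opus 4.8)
\noindent The plan is to follow the same two-part template as Theorems~\ref{thm: sketch-alg1} and \ref{thm: sketch-alg2}: first verify that the stated $SKS$ is produced by Streaming-Algorithm3, then show that running Algorithm SketchToSchedule on $SKS$ together with a second pass over the jobs yields a feasible schedule of makespan at most $(1+\epsilon)C_{max}^*$. The one new ingredient is that the jobs of processing time below $\tfrac{p_{max}}{n^2}$, which Streaming-Algorithm3 deliberately drops from $SKJ_\delta$, still have to be placed, and the extra summand $\lceil\tfrac{p_{max}}{n}\rceil$ in every window length is precisely what pays for them. For Step~1, note that after the stream is read the algorithm knows $p_{max}$ and, from $SKJ_\delta$, the values $A_d$ and $\floor{A_d}$ for $1\le d\le h$, so the prefix sums $t_d=\sum_{k=1}^{d}\bigl(\floor{A_k}+p_{max}+\lceil\tfrac{p_{max}}{n}\rceil\bigr)$, $t_0=0$, cost only $O(h)$ extra time and leave the bounds of Theorem~\ref{thm-alg3} unchanged.

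For Step~2, SketchToSchedule confines all depth-$d$ jobs to the window $[t_{d-1},t_d)$ of length $L_d=\floor{A_d}+p_{max}+\lceil\tfrac{p_{max}}{n}\rceil$ and places them by next-fit across the $m$ machines. Precedence holds because every arc of $G$ increases depth by at least one, so each predecessor of a depth-$d$ job has depth $\le d-1$ and finishes by $t_{d-1}$, while $0=t_0<t_1<\cdots<t_h$; and no machine runs two jobs at once by construction. For capacity I would split the depth-$d$ jobs into the ``big'' ones recorded in $SKJ_\delta$ --- true total length at most $\sum_u n_{d,u}\,rp_u=mA_d$, each of length at most $p_{max}$ --- and the ``small'' ones, at most $n$ in number, each below $\tfrac{p_{max}}{n^2}$, hence of total length below $\tfrac{p_{max}}{n}\le\lceil\tfrac{p_{max}}{n}\rceil$; then argue that next-fit uses at most $m$ machines in the window, since every machine it closes carries load exceeding $L_d-p_{max}=\floor{A_d}+\lceil\tfrac{p_{max}}{n}\rceil$, so an $(m{+}1)$-st machine would force the depth-$d$ work above $m\bigl(\floor{A_d}+\lceil\tfrac{p_{max}}{n}\rceil\bigr)$, contradicting the total-work bound just derived. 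This reproduces the schedule built in the proof of Theorem~\ref{thm-alg3}, with the small jobs of each depth absorbed in the $\lceil\tfrac{p_{max}}{n}\rceil$ slack.

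For Step~3, the schedule ends by $t_h=\bigl(\sum_{d=1}^{h}\floor{A_d}\bigr)+h\,p_{max}+h\lceil\tfrac{p_{max}}{n}\rceil$, which I would bound exactly as in the proof of Theorem~\ref{thm-alg3}: with $I''$ the instance of $SKJ_\delta$ and $C''_{max}$ its optimum, $\sum_d \floor{A_d}\le\sum_d A_d\le C''_{max}\le(1+\delta)C_{max}^*$; each of the $\alpha n$ largest jobs has length at least $\tfrac{p_{max}}{c}$, so $C_{max}^*\ge\tfrac{\alpha n}{cm}p_{max}$, i.e.\ $p_{max}\le\tfrac{cm}{\alpha n}C_{max}^*$; and using $h\le n$, $\lceil\tfrac{p_{max}}{n}\rceil\le\tfrac{p_{max}}{n}+1$ and $C_{max}^*\ge p_{max}$, the two $p_{max}$-terms together are at most $\tfrac{2\epsilon}{3}C_{max}^*$ once $m\le\tfrac{2n\alpha\epsilon}{3(h+1)c}$. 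Hence $t_h\le(1+\tfrac{\epsilon}{3})C_{max}^*+\tfrac{2\epsilon}{3}C_{max}^*=(1+\epsilon)C_{max}^*$, while $t_h$ is at least the optimal makespan $C_{max}^*$.

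I expect the capacity check in Step~2 to be the main obstacle. Unlike Theorems~\ref{thm: sketch-alg1}--\ref{thm: sketch-alg2}, where next-fit packs a single pool of items, here the big and small depth-$d$ jobs arrive interleaved in the stream, so one cannot simply append the small jobs after the big ones; the role of the $\lceil\tfrac{p_{max}}{n}\rceil$ cushion in $L_d$ is to make the ``at most $m$ machines'' count survive that interleaving, and pushing next-fit's per-machine loss (up to $p_{max}$) and the at-most-$\tfrac{p_{max}}{n}$ of leftover small work simultaneously against the bound $mA_d+\tfrac{p_{max}}{n}$ is the step that needs care. Everything else is a routine adaptation of the proof of Theorem~\ref{thm-alg3}.
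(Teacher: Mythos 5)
Your plan mirrors the paper's proof (produce $SKS$, run SketchToSchedule, bound $t_h$), and you are actually \emph{more} careful than the paper in Step~2: the paper simply cites the schedule built in the proof of Theorem~\ref{thm-alg3}, which appends the small depth-$d$ jobs onto one machine after the big ones finish, and asserts that SketchToSchedule achieves the same windows; it never checks that next-fit on the stream's arbitrary interleaving of big and small jobs stays within $m$ machines. You correctly flag that this is the step needing work. However, your capacity contradiction does not close as written. You compare ``work forced above $m\bigl(\floor{A_d}+\lceil p_{max}/n\rceil\bigr)$'' against the upper bound $mA_d + p_{max}/n$, but these two statements are compatible: $mA_d$ can exceed $m\floor{A_d}$ by nearly $m$, while $m\lceil p_{max}/n\rceil$ exceeds $p_{max}/n$ by as little as $m-1$ (take $\{A_d\}$ near $1$ and $\lceil p_{max}/n\rceil=1$). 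What saves the argument is integrality of processing times: each closed machine's window load is an integer strictly greater than the integer $L_d-p_{max}=\floor{A_d}+\lceil p_{max}/n\rceil$, hence at least $\floor{A_d}+\lceil p_{max}/n\rceil+1$, giving $W_d\ge m\floor{A_d}+m\lceil p_{max}/n\rceil+m$. This does contradict $W_d<m\floor{A_d}+m+p_{max}/n$, since $m\lceil p_{max}/n\rceil\ge p_{max}/n$. Without this integer step you have not ruled out the $(m{+}1)$-st machine; the same integrality appeal is also silently needed in the paper's Theorem~\ref{thm: sketch-alg1} argument.

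In Step~3 the constants overshoot. Here $t_h=(\sum_d\floor{A_d})+h\,p_{max}+h\lceil p_{max}/n\rceil$, and with $\lceil p_{max}/n\rceil\le p_{max}/n+1$, $h\le n$, $p_{max}\ge1$ and $C_{max}^*\ge p_{max}$, the best one gets from these estimates is $h\,p_{max}+h\lceil p_{max}/n\rceil\le(h+1)p_{max}+h$, and while $(h+1)p_{max}\le\tfrac{2\epsilon}{3}C_{max}^*$ under the hypothesis on $m$, the residual $h$ is only bounded by another $\tfrac{2\epsilon}{3}C_{max}^*$ (via $h+1\le\tfrac{2\epsilon}{3}C_{max}^*/p_{max}\le\tfrac{2\epsilon}{3}C_{max}^*$), so the two extra terms sum to at most $\tfrac{4\epsilon}{3}C_{max}^*$, not $\tfrac{2\epsilon}{3}C_{max}^*$, and the final ratio is $1+\tfrac{5\epsilon}{3}$. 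To be fair, the paper's displayed step $\sum_d(\floor{A_d}+p_{max}+\lceil p_{max}/n\rceil)\le(\sum_d\floor{A_d})+(h+1)p_{max}$ also fails in general (it requires $h\lceil p_{max}/n\rceil\le p_{max}$, which is false e.g.\ when $p_{max}<n$ and $h>p_{max}$), so this is an error you inherited rather than introduced; nevertheless your sketch should either tighten the accounting or acknowledge a slightly weaker ratio, rather than asserting $\tfrac{2\epsilon}{3}C_{max}^*$.
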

\begin{proof}
From the proof of Theorem~\ref{thm-alg3}, we know that 
the interval with the length $\floor{A_d} + p_{max}$ can feasibly fit in all the jobs of depth $d$ and with the process time at least $\tfrac{p_{max}}{n^2}$. If we add  additional length of $ \lceil n \cdot \tfrac{p_{max}}{n^2} \rceil = \lceil \tfrac{p_{max}}{n} \rceil$ to the interval, we can guarantee that both the large jobs and the small jobs of depth $d$ can be fit in. Hence, $SKS= \{t_d: t_d = t_{d-1} + ( \floor{A_d} + p_{max} + \lceil \tfrac{p_{max}}{n} \rceil), 1 \le d \le h \}$, $t_0 = 0$, describes a feasible schedule such that all the jobs of depth $d$ can be scheduled during the interval $[t_{d-1}, t_d]$.
Based on the sketch $SKS$, we can use Algorithm SketchToSchedule to generate  a feasible schedule with the makespan at most $$ t_h = \sum_{d=1}^h  (\floor{A_d} + p_{max} +\lceil \tfrac{p_{max}}{n} \rceil  )   
 \le \left(\sum_{d=1}^h  \floor{A_d}\right) +     (   h +1) p_{max}.$$
From the proof of Theorem~\ref{thm-alg3}, we know $t_h \le   C''_{max} +  (   h +1) p_{max} \le (1+\epsilon)C^*_{max}$.
\end{proof}

In summary, if we can read the input in two passes, based on the sketch of the schedule produced by all our streaming algorithms, the Algorithm SketchToSchedule can generate a feasible schedule with the makespan at most $(1+\epsilon)$ times the optimal value.

\begin{theorem}\label{thm: sketch-to-schedule}
For the  problems $P \mid prec, dp_j \le h,  p_{[min]} \le c \cdot p_{max} \mid C_{max}$, and $P \mid prec, dp_j \le h, p_{[n]} \le c \cdot p_{[ (1 - \alpha) n)]}  \mid C_{max}$,  when $ m   \le \tfrac  {2 n \epsilon} { 3 \cdot h \cdot c}$ and 
$ m  \le \tfrac { 2 n \alpha  \epsilon} { 3 (h+1) \cdot c} $, respectively,  there exist streaming approximation schemes that  can return an approximate value and a sketch of a schedule in one pass, and output a schedule for each job  in constant time   in  the second pass.
\end{theorem}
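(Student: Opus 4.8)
The plan is to assemble the pieces already established rather than to re-derive anything. For the problem $P \mid prec, dp_j \le h, p_{max} \le c \cdot p_{min} \mid C_{max}$ under the bound $m \le \tfrac{2n\epsilon}{3 h c}$, I would run Streaming-Algorithm1 when $c$, $h$ and the depths are supplied in the stream, and Streaming-Algorithm2 otherwise; by Lemma~\ref{lemma:sreaming-alg-sketch} together with Theorems~\ref{thm: sketch-alg1} and~\ref{thm: sketch-alg2}, a single pass of either algorithm produces, alongside the approximate value $A$, the sketch of a schedule $SKS = \{t_d : 1 \le d \le h\}$ with $t_d = \sum_{k=1}^d (\floor{A_k} + c)$ (respectively $t_d = \sum_{k=1}^d (\floor{A_k} + p_{max})$). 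For $P \mid prec, dp_j \le h, p_{[n]} \le c \cdot p_{[(1-\alpha)n]} \mid C_{max}$ under $m \le \tfrac{2 n \alpha \epsilon}{3 (h+1) c}$, I would run Streaming-Algorithm3, which by Theorem~\ref{thm: sketch-alg3} likewise yields $A$ and the sketch $SKS = \{t_d : t_d = t_{d-1} + (\floor{A_d} + p_{max} + \lceil \tfrac{p_{max}}{n} \rceil)\}$ in one pass. In every case the hypothesis on $m$ is exactly the one required by the cited theorem, so no new constraint is introduced, and the claimed space/update bounds for the first pass are those of Theorems~\ref{thm-alg1}--\ref{thm-alpha-n-unknow-parameter}.

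Second, in the second pass I would feed $SKS$ to Algorithm SketchToSchedule. Storing $SKS$ as an array $t_0, t_1, \ldots, t_h$ of $h+1$ time instants gives $O(1)$ random access to each $t_d$, so handling a job $j$ costs only the lookup of $d = dp_j$, the single comparison $curT_d + p_j \le t_d$, and one of two constant-work updates of $(curM_d, curT_d)$; hence each job is emitted in constant time and the extra working space is $O(h)$ on top of whatever the first pass used. When the depths are not part of the stream (the Streaming-Algorithm2 setting), the depth array $B$ built during the first pass is retained and consulted in the second pass in place of an explicit $dp_j$ field, which does not change the per-job cost.

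Finally, correctness of the emitted schedule is inherited from the proofs of Theorems~\ref{thm-alg1}, \ref{thm-alg3}, \ref{thm: sketch-alg1}, \ref{thm: sketch-alg2} and~\ref{thm: sketch-alg3}: for each depth $d$ the interval length $t_d - t_{d-1}$ was chosen precisely so that list scheduling --- which is what SketchToSchedule performs when restricted to the jobs of a fixed depth --- fits all jobs of depth $d$ inside $[t_{d-1}, t_d)$; since the $t_d$ are strictly increasing and every predecessor of a depth-$(d{+}1)$ job has depth at most $d$, all precedence constraints are respected, and the makespan of the resulting schedule is $t_h \le (1+\epsilon) C_{max}^{*}$. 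The only point needing any care is the constant-time-per-job claim in the second pass, and it reduces entirely to the two observations above (array access to $SKS$, and reuse of the depth array when depths are not streamed); everything else is a direct citation of the earlier results.
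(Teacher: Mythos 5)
Your proposal is correct and matches the paper's intent exactly: Theorem~\ref{thm: sketch-to-schedule} is stated in the paper as a summary with no standalone proof, so assembling Lemma~\ref{lemma:sreaming-alg-sketch} and Theorems~\ref{thm: sketch-alg1}--\ref{thm: sketch-alg3} (choosing Streaming-Algorithm1/2 or 3 according to which problem and which parameter-availability regime applies), and then observing that Algorithm SketchToSchedule does $O(1)$ array lookups and arithmetic per job, is precisely the argument the paper relies on. Your additional remark about retaining the depth array $B$ to supply $dp_j$ during the second pass in the Streaming-Algorithm2 setting is a small but genuine detail the paper leaves implicit, and it is handled correctly.
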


\section{Randomized Sublinear Time Algorithm}
In the previous section, we studied the streaming algorithms, which scan all the input data, generate a sketch of the input data, and use it to compute an approximate value of the optimal makespan and a sketch of a schedule that describes a feasible schedule with the approximated makespan. In this section, we study a different computing paradigm, sublinear time algorithms which are also inspired by the boost of multitude of data in manufacturing and service industry.
For sublinear time algorithms, our goal is to compute an approximate value of the optimal solution by   considering only a   fraction of the input  data. As most sublinear time algorithms,  our algorithms are randomized. 
Like streaming algorithms, our sublinear time algorithms also use  the sketch of the input to compute the approximate value and 
the sketch of the schedule.
The concept of the sketch of the input and the sketch of the schedule are similar to the ones that we defined in the streaming algorithms. However, since we do not read all input, the sketches are not accurate. We call them  {\bf estimated sketch of the input}, and {\bf estimated sketch of the schedule}. 

The {\bf estimated sketch of the input} is an estimated summary of the $n$ input jobs that is computed based on the sketch of   $n'$  sample jobs. The sample size $n'$ is determined by the approximation ratio $\epsilon$, and other parameters.  
We will show that with appropriate sample size, the estimated sketch of the input can give a good approximation of the accurate sketch of the input with high probability, and thus can give a good approximation of the optimal makespan. 
Formally, the {\bf estimated sketch of the input} is defined as follows:
\begin{definition}
For a given parameter $\delta$, and an instance of the problem $P \mid prec,  dp_j \le h, p_{max} \le c \cdot p_{min} \mid C_{max}$ or $P \mid prec, dp_j \le h, p_{[n]} \le c \cdot p_{[ (1 - \alpha) n)]} \mid C_{max}$, the \textbf{estimated sketch of the input} with respect to  $\delta$,   
is denoted as $\widehat{SKJ}_{\delta} = \{ (d, u, \hat{e}_{d,u})\}$ where $\hat{e}_{d,u}$ is the estimated number of jobs with the depth $d$ and the processing time in the range of $[(1+\delta)^u, (1+\delta)^{u+1})$. 
\end{definition}

Similarly, the {\bf estimated sketch of a schedule} is a concise description of a schedule. Based on the estimated sketch of the schedule, with high probability, we can generate a feasible schedule  with the makespan of at most $(1+\epsilon)$ times the optimal makespan. Formally the {\bf estimated sketch of a schedule} is defined as follows:

\begin{definition}
    For the problems $P \mid prec, dp_j \le h,  p_{max} \le c \cdot p_{max} \mid C_{max}$ and 
    $P \mid prec, dp_j \le h,  p_{[n]} \le c \cdot p_{[ (1 - \alpha) n)]} \mid C_{max}$, the {\bf estimated sketch of a schedule} describes a schedule 
    and consists of  a set of time instants $t_d$, $1 \le d \le h$, such that 
   all the jobs of depth $d$ are scheduled during the interval $[t_{d-1}, t_d)$ for $1 \le d \le h$ where $t_0 = 0$. Mathematically we denote the {\bf estimated sketch of a schedule} as $\widehat{SKS} = \{t_d: 1 \le d \le h\}$. 
\end{definition}

At the conceptual level, our  sublinear time algorithms have the following three steps:
\begin{itemize}
    \item[] \textbf{Step 1:} Compute the sample size $n'$ that is sublinear in $n$ but is sufficient for computing an estimated sketch of the input jobs that is  close to the accurate sketch for the original input data.  
    \item[] \textbf{Step 2:} Sample $n'$ jobs  uniformly at random from the input, find the sketch of the sampled jobs, $SKJ'_{\delta}=\{(d, u, n'_{d,u})\}$, and calculate the estimated sketch of all  $n$ jobs, $\widehat{SKJ}_{\delta} = \{(d, u, \hat{e}_{d,u})\}$.
    \item[] \textbf{Step 3:} Based on the estimated sketch of the input, compute an approximation of the optimal value and an estimate sketch of a schedule. 
\end{itemize}

In the following, we first develop a sublinear time algorithm for  $P \mid prec,  dp_j \le h, p_{max} \le c \cdot p_{ min} \mid C_{max}$ and then adapt it to solve the general problem $P \mid prec, dp_j \le h,  p_{[n]} \le c \cdot p_{[ (1 - \alpha) n)]} \mid C_{max}$. 

\subsection{Sublinear Time Algorithm  for  $P \mid  prec,  dp_j \le h, p_{max} \le c \cdot p_{ min} \mid C_{max}$ } 
 As before, each job  $j$ is represented by a pair $(p_j, dp_j)$. 
 Without loss of generality, we assume that $1 \le p_j \le c$ for all $1 \le j \le n$. Our algorithm mainly consists of the three steps as described above: (1) compute the sample size $n'$; (2) sample $n'$ jobs uniformly at random, find the sketch of the sampled jobs, $SKJ'_{\delta}=\{(d, u, n'_{d,u})\}$, where $n'_{d,u}$ is the number of sampled jobs with the depth $d$ and the processing time in the range of $[(1+\delta)^u, (1+\delta)^{u+1})$, and calculate the estimated sketch of the input for $n$ jobs, $\widehat{SKJ}_{\delta} = \{(d, u, \hat{e}_{d,u})\}$ such that $\hat{e}_{d,u}= \tfrac{n}{n'} n'_{d,u}$, and 
 $\hat{e}_{d,u}  \ge 2\tau(n, h, c)$, where $\tau(n, h, c)$ is determined by some parameters; (3) compute an approximation of the optimal value. The algorithm is formally described in Randomized-Algorithm1.

\begin{algorithm} 
\renewcommand{\thealgorithm}{}
\caption {Randomized-Algorithm1}
   Input:   Parameters: $m$,  $c$, $h$, $\epsilon$ 
   
  \hspace{0.4in}          Jobs:  $(p_j, dp_j)$, $1 \le j \le n$,  $1 \le dp_j \le h$
  
  Output:  An  approximation of the optimal makespan
\begin{algorithmic}[1]
\State compute the sample size $n'$
      \Indent
       \State let $\delta = \tfrac{\epsilon}{20}$, 
                 and  $k = \floor{ \log_{1+\delta} c   }$
      \State  let  $p = \tfrac{5 \delta}{2 c\cdot h\cdot k\cdot m} $, and $\beta =\delta p$
       \State  let  $n'  =  \tfrac{3} {  \beta ^2} \cdot {\ln {\tfrac{2}{\gamma}}}$, where $\gamma=\tfrac{1}{10hk}$
       \EndIndent
\State sample $ n'$  jobs uniformly at random, and compute  the sketch of the sampled jobs $SKJ'_{\delta}=\{(d, u, n'_{d,u})\}$ 
\State compute the estimated sketch of all jobs $\widehat{SKJ}_{\delta}$ 
\Indent 
 \State \label{line:random1-tau} let $\tau(n, h, c)= n \cdot p$ 
 \State  $\widehat{SKJ}_{\delta} = \emptyset$
 \For {each $ (d, u, n'_{d,u}) \in SKJ'_{\delta}$}
        \State \label{line:random1-e-d-u}let $\hat{e}_{d, u}=n \cdot \tfrac{n'_{d, u}}{n'}$
    \State  if $\hat{e}_{d, u} > 2\tau(n, h, c) $ 
      \State \hspace{0.2in} 
    $\widehat{SKJ}_{\delta} = \widehat{SKJ}_{\delta} \cup \{(d, u, \hat{e}_{d,u})\}$
 \EndFor
\EndIndent
\State \label{step: random1-compute-makespan} compute  the estimated    makespan
    \Indent
        \State let $rp_k = c$
        \State for each $u$, $0 \le u < k $ 
        \Indent \State let $rp_{u} =  (1 + \delta)^{u+1} $  
   \EndIndent
        \For {each $d$, $1 \le d \le h$}
         \State let $\hat{A}_d =      \frac{1}{m} \sum_{u=0}^k(  \hat{e}_{d,u}  \cdot rp_{u})$ ,    where $(d, u, \hat{e}_{d,u}) \in \widehat{SKJ}_{\delta}  $  
          \EndFor
        \State let $\hat{A} = \sum_{d=1}^h \left( \lfloor \hat{A}_d \rfloor  + c\right)$
    \EndIndent
\State  return $\hat{A}$  
\end{algorithmic}
\end{algorithm}
  
Now we give the performance analysis for the above  algorithm. The time complexity is dominated by the sampling operation. Thus we have the following lemma.  

\begin{lemma}\label{time-lemma}
The running time of the algorithm is $O(\tfrac{c^2 h^2 \log ^ 2 c } {\epsilon^6}  \log (\tfrac{h}{\epsilon} \log c) \cdot {  m^2}) $.
\end{lemma}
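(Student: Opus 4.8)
The plan is to take literally the remark preceding the lemma: the running time is dominated by the sampling step, together with the construction of the two sketches, and everything else in Randomized-Algorithm1 costs only $O(hk)$; so the proof reduces to (i) checking that the sampling step and the sketch bookkeeping run in time $O(n'+hk)$, and (ii) evaluating the sample size $n'$ by substituting the parameter settings from the first block of the algorithm.

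First I would bound $k=\floor{\log_{1+\delta}c}$. Since $\delta=\tfrac{\epsilon}{20}$ is small we have $\ln(1+\delta)=\Theta(\delta)$, hence $\log_{1+\delta}c=\tfrac{\ln c}{\ln(1+\delta)}=\Theta\!\big(\tfrac{\log c}{\epsilon}\big)$ and so $k=O\!\big(\tfrac{\log c}{\epsilon}\big)$. With this in hand: initializing the array that represents $SKJ'_\delta$ over the index set $\{1,\dots,h\}\times\{0,\dots,k\}$ takes $O(hk)$ time; each of the $n'$ sampled jobs is inserted into its bucket in $O(1)$ time (computing $u=\floor{\log_{1+\delta}p_j}$ is constant time); scaling the nonzero buckets by $\tfrac{n}{n'}$ and discarding those below $2\tau(n,h,c)$ to form $\widehat{SKJ}_\delta$ is $O(hk)$; and computing the $\hat A_d$'s and then $\hat A$ is $O(hk)$. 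Assuming the $n$ jobs are available with random access, drawing $n'$ uniform samples costs $O(n')$ time (and if $n'>n$ one simply reads all $n$ jobs, still $O(n')$). So the total is $O(n'+hk)=O(n')$, because $n'=\Omega(h^2k^2m^2/\epsilon^4)$ dominates $hk$ in the relevant parameter range.

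Next I would substitute. Unwinding the definitions, $p=\tfrac{5\delta}{2chkm}$, $\beta=\delta p=\tfrac{5\delta^2}{2chkm}$, and $\gamma=\tfrac{1}{10hk}$, so
$$n'=\frac{3}{\beta^2}\ln\frac{2}{\gamma}=\frac{12\,c^2h^2k^2m^2}{25\,\delta^4}\,\ln(20hk).$$
Using $\delta=\tfrac{\epsilon}{20}$ (so $1/\delta^4=O(1/\epsilon^4)$), $k=O(\tfrac{\log c}{\epsilon})$ (so $k^2=O(\tfrac{\log^2 c}{\epsilon^2})$), and $\ln(20hk)=O\!\big(\log(\tfrac{h}{\epsilon}\log c)\big)$, this becomes
$$n'=O\!\left(\frac{c^2h^2m^2\log^2 c}{\epsilon^6}\,\log\!\Big(\tfrac{h}{\epsilon}\log c\Big)\right),$$
which is exactly the claimed bound.

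The argument is essentially a bookkeeping exercise, and I do not anticipate a genuine obstacle; the one point that needs a bit of care is the estimate $\log_{1+\delta}c=\Theta(\log c/\epsilon)$, which is what converts the base-$(1+\delta)$ logarithms hidden in $k$ and in $\ln(2/\gamma)$ into the $\log c$ and $\epsilon$ of the statement, together with the observation that the $O(hk)$ cost of building and scanning the sketches is absorbed into $O(n')$. I would also flag separately — though the lemma itself does not assert it — that for the algorithm to be genuinely sublinear in $n$ one additionally needs $m$ suitably bounded, as in the hypotheses of the later theorems.
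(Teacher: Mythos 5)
Your proof is correct and follows essentially the same route as the paper: both observe that the running time is dominated by the $O(n')$ sampling work, then substitute the parameter settings $\beta=\delta p$, $\gamma=\tfrac{1}{10hk}$, $\delta=\tfrac{\epsilon}{20}$, and $k=O(\tfrac{\log c}{\epsilon})$ into $n'=\tfrac{3}{\beta^2}\ln\tfrac{2}{\gamma}$ to obtain the stated bound. Your version is slightly more careful in explicitly bounding the $O(hk)$ sketch-bookkeeping cost and in justifying $\log_{1+\delta}c=\Theta(\log c/\epsilon)$, but the argument is the same.
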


\begin{proof} The algorithm takes $n'$ random samples and the processing time for each sampled job is O(1). So the running time of the algorithm is
$O(n')=O(\tfrac{1}{\beta^2} \cdot \ln \tfrac{2}{\gamma})= O(\tfrac{c^2 h^2 k^2 m^2}{\epsilon^4} \log(hk)) =  O(\tfrac{c^2 h^2 \log ^ 2 c } {\epsilon^6}  \log (\tfrac{h}{\epsilon} \log c) \cdot {  m^2}) $.
\end{proof}

From now on we focus on the accuracy analysis for our algorithm. Since in our analysis we use the bounds that  Ma  \cite{ma00} has obtained based on  the well-known Chernoff bounds (see~\cite{mrp13}), and the  union bound from probability theory, we list them in the following for reference.

\begin{lemma}[Lemma 3 in Ma \cite{ma00}]\label{chernoff-lemma}
Let $X_1,\ldots , X_n$ be $n$ independent random $0$-$1$ variables and $X=\sum_{i=1}^n X_i$.
\begin{enumerate}

 \item  If $X_i$ takes $1$ with probability at most $p$ for $i=1,\ldots ,
n$, then for any $\beta>0$, $\Pr(X>pn+\beta
n)<e^{-\tfrac{1}{ 3}n\beta^2}$.

\item If  $X_i$ takes $1$ with probability at least $p$ for
$i=1,\ldots , n$, then for any $\beta>0$, $\Pr(X<pn-\beta
n)<e^{- \tfrac{1}{  2}n\beta^2}$.
\end{enumerate}
\end{lemma}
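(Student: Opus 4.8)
The plan is to derive both inequalities from the standard exponential-moment (Chernoff) argument; since this is a textbook computation I will only lay out the steps and indicate which estimates make the constants $\tfrac13$ and $\tfrac12$ come out. Write $p_i = \Pr(X_i=1) \in [0,1]$. First I would discard the degenerate ranges: for part (i), if $p \ge 1$ or $p+\beta \ge 1$ then $(p+\beta)n \ge n \ge X$ and the probability is $0$; for part (ii), if $\beta \ge p$ then $(p-\beta)n \le 0 \le X$ and the probability is $0$, and if $p=1$ then every $X_i \equiv 1$, so again the probability is $0$. Thus it suffices to handle $0 \le p < 1$ with $p+\beta<1$ in part (i), and $0 < \beta < p < 1$ in part (ii).

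For part (i), I would fix a parameter $t>0$ and apply Markov's inequality to $e^{tX}$: $\Pr(X > (p+\beta)n) \le e^{-t(p+\beta)n}\,\E[e^{tX}]$. By independence and the estimate $\E[e^{tX_i}] = 1 + p_i(e^t-1) \le 1 + p(e^t-1) \le e^{p(e^t-1)}$ (the middle step using $p_i\le p$ and $e^t-1>0$, the last using $1+y\le e^y$), the right side is at most $e^{n(p(e^t-1)-t(p+\beta))}$. Optimizing the exponent over $t>0$ gives $e^t = (p+\beta)/p$, turning the bound into $e^{-nD}$ with $D = (p+\beta)\ln\tfrac{p+\beta}{p} - \beta$. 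Writing $x = \beta/p > 0$ one has $D = p\left((1+x)\ln(1+x)-x\right)$, and the elementary inequality $(1+x)\ln(1+x)-x \ge \tfrac{3x^2}{6+2x}$ (the standard estimate behind the Chernoff bound; see \cite{mrp13}) yields $D \ge \tfrac{3\beta^2}{6p+2\beta}$. Since $6p+2\beta < 6(p+\beta) < 6$ in the range under consideration, $D > \tfrac{\beta^2}{2} > \tfrac{\beta^2}{3}$, which gives the asserted strict bound.

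Part (ii) is the mirror argument with $t<0$: Markov on $e^{tX}$ gives $\Pr(X < (p-\beta)n) \le e^{-t(p-\beta)n}\,\E[e^{tX}] \le e^{n(p(e^t-1)-t(p-\beta))}$, where now $e^t-1<0$ together with $p_i\ge p$ is what makes $\E[e^{tX_i}]\le e^{p(e^t-1)}$. Optimizing over $t<0$ gives $e^t = (p-\beta)/p < 1$ and a bound $e^{-nD'}$ with $D' = p\left(x+(1-x)\ln(1-x)\right)$, $x=\beta/p\in(0,1)$; the elementary inequality $x+(1-x)\ln(1-x) \ge \tfrac{x^2}{2}$ then gives $D' \ge \tfrac{\beta^2}{2p} > \tfrac{\beta^2}{2}$ (using $p<1$), hence $\Pr(X < (p-\beta)n) < e^{-n\beta^2/2}$.

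I do not expect a real obstacle: everything reduces to a single-variable optimization and two well-known logarithmic inequalities. The only points that need care are keeping the optimal $t$ of the correct sign, which is why I first restrict to the nondegenerate parameter range, and verifying that the relative-entropy-type exponents $D$ and $D'$ strictly exceed $\beta^2/3$ and $\beta^2/2$. Since the statement is quoted from \cite{ma00}, an alternative is to cite it directly; the sketch above records the argument for completeness.
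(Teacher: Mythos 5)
The paper does not actually prove this lemma: it is quoted verbatim as Lemma~3 of Ma~\cite{ma00}, so there is no in-paper argument to compare against. Your blind proof is a correct and complete exponential-moment (Chernoff) derivation. You dispose of the degenerate parameter ranges up front, keep the optimizing parameter $t$ on the correct side of zero in each direction (which is exactly what makes the monotonicity $\E[e^{tX_i}] \le 1+p(e^t-1)$ go the right way under the hypotheses $p_i \le p$ and $p_i \ge p$, respectively), and reduce both exponents to the relative-entropy expressions $D = p\bigl((1+x)\ln(1+x)-x\bigr)$ and $D' = p\bigl(x+(1-x)\ln(1-x)\bigr)$ with $x=\beta/p$. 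The two logarithmic inequalities you invoke, $(1+x)\ln(1+x)-x \ge \tfrac{3x^2}{6+2x}$ and $x+(1-x)\ln(1-x)\ge \tfrac{x^2}{2}$, are standard and strict for $x>0$, and together with $6p+2\beta<6$ (from $p+\beta<1$) and $p<1$ they yield the required strict tail bounds. Two small remarks: in part~(i) you implicitly need $p>0$ to set $e^t=(p+\beta)/p$, but the leftover case $p=0$ forces $X\equiv 0$ and is immediate, so there is no real gap; and your argument in part~(i) actually delivers the stronger exponent $e^{-n\beta^2/2}$ (indeed Hoeffding gives $e^{-2n\beta^2}$), of which the stated $e^{-n\beta^2/3}$ is a weakening. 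Since the paper only cites this bound, citing \cite{ma00} directly would also be acceptable, as you yourself note.
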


\begin{fact}[Union bound]\label{union-bound}
Let $E_1,E_2,\ldots, E_m$ be $m$ events that may not be independent, we have the inequality $$\Pr(E_1\cup E_2 \ldots \cup E_m)\le \Pr(E_1)+\Pr(E_2)+\ldots+\Pr(E_m).$$
\end{fact}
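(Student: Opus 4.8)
The plan is to derive the union bound from the three defining properties of a probability measure — non‑negativity, $\Pr(\Omega)=1$, and finite additivity on disjoint events — together with a short induction on $m$. Since the events $E_1,\dots,E_m$ are not assumed disjoint, the only real idea is to rewrite the union as a \emph{disjoint} union without increasing the total probability, and then sum term by term; no independence is needed anywhere.

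First I would prove the two‑event case $\Pr(A\cup B)\le \Pr(A)+\Pr(B)$. Write $A\cup B = A\,\sqcup\,(B\setminus A)$, a disjoint union, so finite additivity gives $\Pr(A\cup B)=\Pr(A)+\Pr(B\setminus A)$. Then, since $B=(B\setminus A)\,\sqcup\,(A\cap B)$ is again a disjoint union, additivity and non‑negativity give $\Pr(B\setminus A)=\Pr(B)-\Pr(A\cap B)\le\Pr(B)$; combining the two identities yields the claim. (Equivalently one may quote inclusion–exclusion, $\Pr(A\cup B)=\Pr(A)+\Pr(B)-\Pr(A\cap B)$, and discard the last, non‑negative, term.)

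Next I would induct on $m$. The base case $m=1$ is the trivial equality $\Pr(E_1)\le\Pr(E_1)$. For the step, set $F=\bigcup_{i=1}^{m-1}E_i$ and apply the two‑event bound to $F$ and $E_m$:
$$\Pr\!\Big(\bigcup_{i=1}^{m}E_i\Big)=\Pr(F\cup E_m)\le \Pr(F)+\Pr(E_m)\le\sum_{i=1}^{m-1}\Pr(E_i)+\Pr(E_m)=\sum_{i=1}^{m}\Pr(E_i),$$
where the second inequality is the induction hypothesis. This closes the induction and proves the Fact.

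Candidly, there is no genuine obstacle here: this is a textbook fact, and the only point requiring any care is that the $E_i$ are not assumed disjoint (which is precisely why the disjointification $A\cup B=A\sqcup(B\setminus A)$ is used instead of a naive appeal to additivity) and not assumed independent (which is a non‑hypothesis — independence is never invoked). It is also worth noting, for the later accuracy analysis where the Fact is applied, that the bound is tight exactly when the $E_i$ are pairwise disjoint, so nothing is lost by stating it in this generality.
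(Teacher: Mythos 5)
Your proof is correct and is the standard textbook argument: disjointify $A\cup B = A\sqcup(B\setminus A)$ to get the two-event bound, then induct on $m$. The paper itself states this as a Fact without proof (it is a well-known consequence of finite additivity and monotonicity of probability measures), so there is no authorial proof to compare against; your derivation fills that in correctly and your remark that neither independence nor disjointness is needed is exactly the right point to emphasize given how the Fact is invoked later in Lemma~\ref{lemma:rounded-big-vs-estimated}.
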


We will use Lemma~\ref{chernoff-lemma} and Fact~\ref{union-bound} to show that $\hat{e}_{d,u}$, computed by Randomized-Algorithm1, is a good estimate of the exact number  of jobs with the depth $d$ and  processing time in the range of $[(1+\delta)^u, (1+\delta)^{u+1})$, $n_{d,u}$.
Specifically, we have that with high probability: (1) if   $n_{d,u}$ is at least $\tau(n,h,c)$, then our estimate, $\hat{e}_{d,u}$, is in the range of $[(1-\delta)n_{d,u}, (1+\delta)n_{d,u}]$; and (2) if $n_{d,u} < \tau(n,h,c)$, our estimated number of jobs, $\hat{e}_{d,u}$, is no more than $2\tau(n,h,c)$.

\begin{lemma} \label{lemma:prob-e-d-u-is-good}
For any $d$, $u$, let  $\hat{e}_{d,u}$ be the value computed by Randomized-Algorithm1, then we have: 
\begin{enumerate}
\item[(i)] If $n_{d, u}\ge \tau(n,h,c)$, 
$\Pr( (1-\delta)n_{d, u}\le \hat{e}_{d, u}\le (1+\delta)n_{d, u}) \ge 1-\gamma$; and 
\item [(ii)]
If $n_{d, u}< \tau(n,h,c)$, 
$\Pr(\hat{e}_{d, u} \le 2\tau(n,h, c)) \ge 1 - \gamma$.
\end{enumerate}
\end{lemma}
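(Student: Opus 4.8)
The plan is to reduce both statements to the Chernoff bounds of Lemma~\ref{chernoff-lemma} together with the union bound of Fact~\ref{union-bound}. Fix a pair $(d,u)$ and let $q = n_{d,u}/n$ be the true fraction of jobs with depth $d$ and processing time in $[(1+\delta)^u,(1+\delta)^{u+1})$. Treat the $n'$ sampled jobs as drawn i.i.d.\ uniformly at random (sampling with replacement; the same tail bounds hold for sampling without replacement), and let $X_i$ be the indicator that the $i$-th sample falls in bucket $(d,u)$, so that $X = \sum_{i=1}^{n'} X_i = n'_{d,u}$ and $\hat e_{d,u} = \tfrac{n}{n'}X$. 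Each $X_i$ equals $1$ with probability exactly $q$, so $X$ has mean $qn'$. First I would record the one numerical fact driving everything: since $n' = \tfrac{3}{\beta^2}\ln\tfrac{2}{\gamma}$, we have $\tfrac13 n'\beta^2 = \ln\tfrac2\gamma$, hence $e^{-\frac13 n'\beta^2} = \gamma/2$ and $e^{-\frac12 n'\beta^2}\le\gamma/2$.

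For part (i), suppose $n_{d,u}\ge\tau(n,h,c) = np$, so $q\ge p$ and therefore $\delta q \ge \delta p = \beta$. Rewriting in terms of $X$, the bad event $\hat e_{d,u} > (1+\delta)n_{d,u}$ is exactly $X > (1+\delta)qn' = qn' + \delta q n'$, which (since $\delta q n' \ge \beta n'$) is contained in $\{X > qn' + \beta n'\}$; by Lemma~\ref{chernoff-lemma}(1) this has probability less than $e^{-\frac13 n'\beta^2} = \gamma/2$. Symmetrically $\hat e_{d,u} < (1-\delta)n_{d,u}$ is contained in $\{X < qn' - \beta n'\}$, of probability less than $e^{-\frac12 n'\beta^2}\le\gamma/2$ by Lemma~\ref{chernoff-lemma}(2). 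Fact~\ref{union-bound} then gives $\Pr\big((1-\delta)n_{d,u}\le\hat e_{d,u}\le(1+\delta)n_{d,u}\big) \ge 1-\gamma$.

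For part (ii), suppose $n_{d,u} < \tau(n,h,c) = np$, so $q < p$, i.e.\ each $X_i$ takes $1$ with probability at most $p$. Because $\delta = \epsilon/20 < 1$ we have $\beta = \delta p < p$, so the bad event $\hat e_{d,u} > 2\tau(n,h,c)$, which is $X > 2pn' = pn' + pn'$, is contained in $\{X > pn' + \beta n'\}$; by Lemma~\ref{chernoff-lemma}(1) its probability is less than $e^{-\frac13 n'\beta^2} = \gamma/2 \le \gamma$, and complementing yields $\Pr(\hat e_{d,u}\le 2\tau(n,h,c))\ge 1-\gamma$.

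I expect the only delicate point to be the bookkeeping that converts the multiplicative window $(1\pm\delta)n_{d,u}$ (in part (i)) and the additive threshold $2\tau$ (in part (ii)) into the form $qn'\pm\beta n'$, respectively $pn'+\beta n'$, demanded by Lemma~\ref{chernoff-lemma}; this hinges on the two trivial inequalities $\delta q\ge\beta$ whenever $q\ge p$, and $\beta< p$, both immediate from $\beta=\delta p$ and $\delta<1$. A minor secondary issue is justifying the i.i.d.\ assumption (or, if the samples are taken without replacement, invoking the standard without-replacement version of the Chernoff bound); beyond that the argument is a direct substitution of the chosen $n'$.
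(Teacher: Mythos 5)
Your proof is correct and follows essentially the same route as the paper's: introduce the indicator variables $X_i$, note that $\Pr(X_i=1)$ equals the true fraction $n_{d,u}/n$, apply the two tails of the Chernoff bounds from Lemma~\ref{chernoff-lemma}, and use the choice of $n'$ to make each tail at most $\gamma/2$. The only cosmetic difference is that you absorb the inequality $\delta q \ge \beta$ (resp. $\beta < p$) at the event-containment step, whereas the paper applies it inside the exponent of the Chernoff bound; both yield the same estimate.
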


\begin{proof}
Let $X_{i}$ denote the  indicator random variable for the event that   the   $i$-th sample job has depth $d$, and the processing time is in $[(1+\delta)^u, (1+\delta)^{u+1})$. Then  $n'_{d,u} = \sum_{i=1}^{n'} X_i$.  
Since $n'$ jobs are   sampled uniformly at random from $n$ jobs, we have $\Pr(X_i = 1) = \tfrac{n_{d, u}}{n}$. 
For convenience, we let $p_0=\tfrac{n_{d, u}}{n}$. By line~\emph{\ref{line:random1-tau}} of our algorithm, $p = \tfrac{\tau(n,h,c)}{n}$.
 
We first prove (i): if $n_{d, u}\ge \tau(n,h,c)$, 
$\Pr( (1-\delta)n_{d, u}\le \hat{e}_{d, u}\le (1+\delta)n_{d, u}) \ge 1-\gamma$. It is sufficient to show that $\Pr( \hat{e}_{d, u} 
 \le (1-\delta)n_{d, u}) \le \tfrac{\gamma}{2}$ and   $Pr( \hat{e}_{d, u} \ge(1+\delta)n_{d, u}) \le \tfrac{\gamma}{2}$. 
 By  line~\emph{\ref{line:random1-e-d-u}} of the algorithm, $\hat{e}_{d,u} = n \cdot \tfrac{n'_{d,u}}{n'}$, thus we have
\begin{eqnarray*}
\Pr( \hat{e}_{d,u}   \le   (1 - \delta) n_{d,u})  
  &  =  &   \Pr( n \cdot \tfrac{n'_{d,u}}{n'}   \le   (1 - \delta) n_{d,u})    \\
 & = & \Pr( n'_{d,u} \le (1 - \delta)  \tfrac{n_{d, u}}{n} \cdot n') \\
 & = & \Pr( n'_{d,u} \le (1 - \delta)  p_0 {n'} ) \\
 & = & \Pr( n'_{d,u} \le (p_0 - \delta p_0)   {n'}) 
\end{eqnarray*}
If $n_{d, u}\ge \tau(n,h,c)$, then $Pr(X_i = 1) = p_0 = \tfrac{n_{d,u}}{n} \ge \tfrac{\tau(n,h,c)}{n} = p$.  Using this fact, and applying Lemma~\ref{chernoff-lemma} for the variable  $n'_{d,u}$, $n'_{d,u} = \sum_{i=1}^{n'} X_i$, we get 
$$ \Pr( n'_{d,u} \le (p_0 - \delta p_0) n')    \le   e^{-\tfrac{1}{2}n'(\delta p_0)^2}  
\le  e^{-\tfrac{1}{2}n'(\delta p)^2}   
 \le   e^{-\tfrac{1}{2}n'\beta^2} \le \tfrac{\gamma}{2},$$  which means that $   \Pr( \hat{e}_{d,u}   \le   (1 - \delta) n_{d,u}) \le \tfrac{\gamma}{2}$.   Similarly, we have 
\begin{eqnarray*}
\Pr( \hat{e}_{d,u}  \ge   (1 + \delta) n_{d,u}) 
& = & \Pr( n'_{d,u} \ge (p_0+ \delta p_0) n'  ) \\
& \le &  e^{-\tfrac{1}{3} n' (\delta p_0)^2} \\ 
& \le & e^{-\tfrac{1}{3}n'(\delta p)^2}  \\  
& \le & e^{-\tfrac{1}{3}n'\beta^2}\\
& \le & \tfrac{\gamma}{2}.
\end{eqnarray*}

  Next, we show  (ii): if $n_{d, u}< \tau(n,h,c)$,  $\Pr(\hat{e}_{d, u} \le 2\tau(n,h, c)) \ge 1 - \gamma$. As for (i), we prove that $\Pr(\hat{e}_{d, u} > 2\tau(n,h, c)) \le   \gamma$. By line~\emph{\ref{line:random1-tau}} of the algorithm, $ \tau(n,h,c) =   n \cdot p$, and $\hat{e}_{d,u} = n \cdot \tfrac{n'_{d,u}}{n'}$. 
$$\Pr( \hat{e}_{d,u}  >  2\tau(n,h, c))    =    \Pr( \hat{e}_{d,u}  >  2  n p ) 
  =   \Pr( n'_{d,u} > 2   n' p)  
   \le   \Pr( n'_{d,u} > (p + \delta p )   n' ) .$$
If $n_{d, u} < \tau(n,h,c)$, then $Pr(X_i = 1) =  \tfrac{n_{d,u}}{n} \le \tfrac{\tau(n,h,c)}{n} = p$.  Using this fact, and applying Lemma~\ref{chernoff-lemma} for the variable  $n'_{d,u}$, $n'_{d,u} = \sum_{i=1}^{n'} X_i$, we get 
$$ \Pr( n'_{d,u} > (p + \delta p )   n' )   
  \le    e^{-\beta^2 \tfrac{n'} {3}} \le \tfrac{\gamma}{2},$$ which implies that   
$\Pr( \hat{e}_{d,u}  >  2\tau(n,h, c)) \le \tfrac{\gamma}{2} < \gamma$.  This completes the proof.
\end{proof}
Lemma~\ref{lemma:prob-e-d-u-is-good} tells us that the estimated sketch of input $\widehat{SKJ}_{\delta}$ approximates the exact sketch of input $SKJ_{\delta}$ very well. Based on this, we will show that the estimated   makespan, $\hat{A}$, computed from the estimated sketch, is a good approximation of the optimal makespan. For the ease of  our analysis and proof later, we summarize all the symbols we use in the following: 
\begin{itemize}
    \item $I$: the input  instance 
    for the algorithm
    \item $SKJ_\delta = \{(d,u, n_{d,u})\}$: the exact sketch of all jobs in $I$ where $n_{d,u}$ is the number of jobs in $I$ with the depth $d$ and the processing time in the range of $[(1+\delta)^u, (1+\delta)^{u+1})$ for all $1 \le d \le h$ and $1 \le u \le k$
    \item $I_{round}$: the instance corresponding to the sketch $SKJ_\delta$ with the rounded processing times for all the jobs, that is, for each $(d, u, n_{d,u}) \in SKJ_{\delta}$, there are $n_{d,u}$ jobs at depth $d$ whose processing times are  $rp_u$ 
         \item $I_{big}$: the instance obtained from the instance $I_{round}$ by removing the jobs corresponding to $(d,u, n_{d,u})$ where  $n_{d,u}<3\tau(n, h, c)$ for all $1 \le d \le h$ and $1 \le u \le k$
    \item $\widehat{SKJ}_\delta = \{(d,u,\hat{e}_{d,u})\}$: the estimated sketch for the jobs in $I$, which is computed by Randomized-Algorithm1, and where $\hat{e}_{d,u}$ is the estimated value for $n_{d,u}$. Note that only the tuples with $\hat{e}_{d,u}  > 2 \tau (n,h,c)$ are included in $\widehat{SKJ}_\delta$.
    \item $\hat{I}$: the  instance  corresponding to the estimated sketch $\widehat{SKJ}_\delta = \{(d,u,\hat{e}_{d,u})\}$, that is, for each $(d, u, \hat{e}_{d,u}) \in \widehat{SKJ}_{\delta}$, there are $\hat{e}_{d,u}$  jobs at depth $d$ whose processing times are  $rp_u $
\item $(d,u)$-group of an instance: the group of all the jobs in the instance with depth $d$ and processing time $rp_u$ 
\end{itemize}

We first compare the   optimal makespan of instance $\hat{I}$  and that of  instance $I_{big}$.  
\begin{lemma}\label{lemma:rounded-big-vs-estimated}
Let $C_{max}^*(I_{big})$ and $C_{max}^*(\hat{I})$ be the optimal makespan for instances $I_{big}$ and $\hat{I}$ respectively, with   probability of at least $\tfrac{9}{10}$, we have

\begin{equation}\label{eq:rounded_large_original_vs_sample_approx}
(1 - \delta) ( C_{max}^*(I_{big}) - h \cdot c) <    C_{max}^*(\hat{I}) \le (1 + \delta) C_{max}^*(I_{big}) + \tfrac{15 \delta n } {m} + h \cdot c.
\end{equation}
\end{lemma}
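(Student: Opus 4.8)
The plan is to relate the optimal makespans of $\hat I$ and $I_{big}$ through the per-$(d,u)$-group job counts, using Lemma~\ref{lemma:prob-e-d-u-is-good} together with the union bound to control all the groups simultaneously. First I would observe that both instances have a layered structure: jobs of depth $d+1$ can only start after all jobs of depth $d$ finish, so the optimal makespan of either instance is the sum over $d$ of the optimal makespan of the single-depth subinstance, and each single-depth subinstance is just a $P \mid\mid C_{max}$ instance on $m$ machines with a bounded number of processing-time classes. For such a layer, a load-balancing estimate gives that the optimal makespan lies between $\frac1m\sum_u (\text{count})\cdot rp_u$ and $\frac1m\sum_u(\text{count})\cdot rp_u + \max_u rp_u \le \frac1m\sum_u(\text{count})\cdot rp_u + c$; summing over the $h$ depths accounts for the additive $h\cdot c$ terms on both sides of~(\ref{eq:rounded_large_original_vs_sample_approx}).

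Next I would set up the event on which all estimates are accurate. Apply Lemma~\ref{lemma:prob-e-d-u-is-good} to each of the $\le hk$ pairs $(d,u)$: with failure probability $\le\gamma$ per pair, either $n_{d,u}\ge\tau(n,h,c)$ and then $\hat e_{d,u}\in[(1-\delta)n_{d,u},(1+\delta)n_{d,u}]$, or $n_{d,u}<\tau(n,h,c)$ and then $\hat e_{d,u}\le 2\tau(n,h,c)$ (so it is dropped from $\widehat{SKJ}_\delta$). By Fact~\ref{union-bound} and the choice $\gamma=\tfrac{1}{10hk}$, all $hk$ statements hold simultaneously with probability at least $\tfrac{9}{10}$. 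Condition on this event for the rest of the argument.

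Now compare group by group. For the upper bound on $C_{max}^*(\hat I)$: every group present in $\hat I$ has $\hat e_{d,u}>2\tau(n,h,c)$, which by the accuracy event forces $n_{d,u}\ge\tau(n,h,c)$ (a group with $n_{d,u}<\tau$ would satisfy $\hat e_{d,u}\le 2\tau$ and be excluded), hence $\hat e_{d,u}\le(1+\delta)n_{d,u}$; moreover such a group also survives into $I_{big}$ only if $n_{d,u}\ge 3\tau(n,h,c)$, and when $\tau\le n_{d,u}<3\tau$ the group is in $\hat I$ but not in $I_{big}$, contributing at most $\hat e_{d,u}\cdot rp_u\le (1+\delta)\cdot 3\tau(n,h,c)\cdot c$ of extra load per such group. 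Summing: the total load of $\hat I$ at depth $d$ is at most $(1+\delta)$ times the load of $I_{big}$ at depth $d$ plus $\sum_{\text{extra groups}} 3(1+\delta)\tau c$; with $k$ processing-time classes and $\tau(n,h,c)=np$, $p=\tfrac{5\delta}{2c h k m}$, this extra term is at most $3(1+\delta)\cdot k\cdot np\cdot c \le \tfrac{15\delta n}{m}$ (for $\delta$ small, absorbing the $(1+\delta)$ factor), dividing by $m$ for the makespan. Adding the $+h\cdot c$ from the layer-rounding estimate gives the claimed right-hand side $(1+\delta)C_{max}^*(I_{big})+\tfrac{15\delta n}{m}+h\cdot c$. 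For the lower bound: every group in $I_{big}$ has $n_{d,u}\ge 3\tau(n,h,c)\ge\tau(n,h,c)$, so $\hat e_{d,u}\ge(1-\delta)n_{d,u}$ and in particular $\hat e_{d,u}\ge(1-\delta)\cdot 3\tau > 2\tau$, so the group is in $\hat I$; therefore the load of $\hat I$ at each depth is at least $(1-\delta)$ times the load of $I_{big}$ at that depth. Since $C_{max}^*(\hat I)\ge\frac1m(\text{total load of }\hat I)\ge (1-\delta)\frac1m(\text{total load of }I_{big})\ge(1-\delta)(C_{max}^*(I_{big})-h\cdot c)$, where the last step uses the layer upper estimate for $I_{big}$, we get the left-hand side.

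\textbf{Main obstacle.} The delicate point is the bookkeeping of the ``borderline'' groups with $\tau(n,h,c)\le n_{d,u}<3\tau(n,h,c)$, which appear in $\hat I$ but not in $I_{big}$ (and are precisely why $\tau$ is chosen with a factor $3$ gap in the definition of $I_{big}$): one must verify that their aggregate contribution is genuinely $O(\delta n/m)$, which is exactly where the specific constants $p=\tfrac{5\delta}{2chkm}$ and the coefficient $15$ are calibrated. The rest is a routine but careful translation between ``number of jobs in a class'' and ``makespan of a layered $P\mid\mid C_{max}$ instance,'' plus one clean application of the union bound.
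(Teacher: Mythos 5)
Your proof is correct and follows essentially the same route as the paper's: apply Lemma~\ref{lemma:prob-e-d-u-is-good} together with the union bound to condition on the accuracy event for all $hk$ groups, compare $\hat I$ and $I_{big}$ group by group, and use the calibration $\tau=\tfrac{5\delta n}{2chkm}$ to bound the aggregate contribution of the borderline groups with $\tau\le n_{d,u}<3\tau$ (your constants are slightly sharper than the paper's, which simply uses $\hat e_{d,u}\le 6\tau$ and packs those jobs onto a single machine, but the resulting $\tfrac{15\delta n}{m}$ term is the same). One inaccuracy in the framing should be corrected: the opening claim that the optimal makespan of a layered instance \emph{equals} the sum over $d$ of the single-depth optimal makespans is false in general, since a job at depth $d+1$ may start as soon as its own predecessors finish rather than after all depth-$d$ jobs complete; this is harmless here because the subsequent argument only invokes the layer-by-layer schedule as an \emph{upper} bound, $C^*_{max}\le\sum_{d}\bigl(\tfrac{1}{m}\sum_u n_{d,u}\, rp_u + c\bigr)$, and the total-work \emph{lower} bound $C^*_{max}\ge\tfrac{1}{m}\sum_{d,u} n_{d,u}\, rp_u$, both of which are valid.
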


\begin{proof}    
From our definition of $I_{big}$ and $\hat{I}$, we know that for any $(d,u)$-group included in $I_{big}$, we must have $n_{d,u} \ge 3 \tau(n, h, c)$ and for any $(d,u)$-group included in $\hat{I}$, we must have $\hat{e}_{d,u} > 2 \tau(n, h, c)$.
We first show that with high probability the instance $\hat{I}$ contains all jobs from instance $I_{big}$. 
Consider an arbitrary $(d,u)$-group from $I_{big}$, we must have 
 $n_{d,u} \ge 3 \tau(n, h, c)$, since $\delta = \tfrac{\epsilon}{20} < \tfrac{1}{20}$, we have $(1-\delta)n_{d, u} \ge 2  \tau(n, h, c)$. 
By Lemma~\ref{lemma:prob-e-d-u-is-good},  with the probability of at least $1-\gamma$, we have $ 2 \tau(n, h, c) < (1-\delta)n_{d, u}\le \hat{e}_{d, u}\le (1+\delta)n_{d, u}$.
That means, with the probability of at least $1-\gamma$, we have that any $(d, u)$-group in $I_{big}$ is also included in $\hat{I}$. In other words, the probability that a $(d, u)$-group is in $I_{big}$ but not in $\hat{I}$ is less than $\gamma$. 
Since there are at most $h \cdot k$ $(d, u)$-groups, by Fact~\ref{union-bound}, the probability that some $(d, u)$-groups are in $I_{big}$ but not in $\hat{I}$ is at most  $\gamma\cdot h\cdot k = \tfrac{1}{10}$. Therefore, considering all $(d, u)$-groups in $I_{big}$, we have that with probability at least $\tfrac{9}{10}$, all $(d, u)$-groups that are included in $I_{big}$ are also included in $\hat{I}$. 

A lower bound of  $C_{max}^*(\hat{I})$ can be obtained by considering only those $(d,u)$-groups that are in $I_{big}$. To schedule  the jobs in these groups from $\hat{I}$, one need an interval of length at least 
$\sum_d \tfrac{1}{m} \sum_u (\hat{e}_{d, u} \cdot rp_{u}) \ge \sum_d \tfrac{1}{m} \sum_u ((1-\delta) n_{d, u} \cdot rp_{u})$. So we have 
\begin{equation}\label{I-hat-lower-bound-1}C_{max}^*(\hat{I}) 
\ge \sum_{d=1}^h \left(  \frac{1}{m}\sum_{u=0}^k \left((1-\delta) n_{d, u} \cdot rp_{u} \right) \right). 
\end{equation}
For all the jobs from $I_{big}$, we have:
\begin{equation}\label{I-big-lower-bound-1} C_{max}^*(I_{big})  \ge \sum_{d=1}^h \left( \tfrac{1}{m} \sum_{u=0}^k (  n_{d, u} \cdot rp_{u})  \right), \end{equation}
  and 
\begin{equation}  \label{I-big-upper-bound-1}  C_{max}^*(I_{big}) \le \sum_{d=1}^h \left( \tfrac{1}{m} \sum_{u=0}^k (  n_{d, u} \cdot rp_{u}) + c \right) = \sum_{d=1}^h \left( \tfrac{1}{m} \sum_{u=0}^k (  n_{d, u} \cdot rp_{u}) \right) + h \cdot c.\end{equation}
By inequalities~(\ref{I-hat-lower-bound-1}) and (\ref{I-big-upper-bound-1}) we have
 \begin{equation}  \label{I-hat-lower-bound-2}
 C_{max}^*(\hat{I}) \ge (1 - \delta) ( C_{max}^*(I_{big}) - h \cdot c). 
 \end{equation}

Next, we consider the upper bound of $C_{max}^*(\hat{I})$. We split the jobs in $\hat{I}$ into two parts:  
those $(d, u)$-groups that are  in both $I_{big}$ and $\hat{I}$ , and those  $(d, u)$-groups that are in $\hat{I}$ but not in $I_{big}$.
For the jobs in $\hat{I}$ from the former,  we need an interval of length at most 
$\sum_d (\tfrac{1}{m} ( \sum_u (\hat{e}_{d, u} \cdot rp_{u})) + c) \le \sum_d (\tfrac{1}{m} (\sum_u ((1+\delta) n_{d, u} \cdot rp_{u})) + c)$ to schedule them; for the jobs from the latter $(d,u)$-groups, we  
note that   
  each such group must correspond to a group  in instance $I$ where $n_{d, u} <  3 \tau(n, h, c)$   and there are at most $h \cdot k$ such groups. By Lemma~\ref{lemma:prob-e-d-u-is-good},  with the probability of at least $1-\gamma$, we have at most $ \hat{e}_{d, u}\le 6 \tau(n, h, c)$ jobs in $\hat{I}$ for each $(d,u)$-group in the latter type. 
Thus we can schedule these jobs in an interval of at most $  6 \tau(n,h,c) \cdot h   \cdot k \cdot c $.
Combining both types of groups and by inequality~(\ref{I-big-lower-bound-1}), we have 
\begin{eqnarray*}
C_{max}^*(\hat{I})  
&\le & \sum_{d=1}^h \left( \tfrac{1}{m}\left(\sum_{u=0}^k \left(  (1 + \delta) n_{d, u} \cdot rp_{u} \right) \right) + c \right)  +   6 \tau(n,h,c)  \cdot h \cdot k \cdot  c \\
&\le &  (1 + \delta) C_{max}^*(I_{big})+h \cdot c +   6 \tau(n,h,c)  \cdot h  \cdot k   \cdot  c. 
\end{eqnarray*}
By line~\ref{line:random1-tau} of the algorithm, $\tau(n,h,c) =  {n}\cdot p = \tfrac {  5 \delta n }{ 2 c \cdot h \cdot k \cdot m}$,
we have
\begin{equation}\label{I-hat-upper-bound-1}
C_{max}^*(\hat{I}) \le (1 + \delta) C_{max}^*(I_{big})  + \tfrac{15 \delta n } {m} + h \cdot c.
\end{equation}
Therefore, from (\ref{I-hat-lower-bound-2}) and (\ref{I-hat-upper-bound-1}), we get
$$ (1 - \delta) ( C_{max}^*(I_{big}) - h \cdot c) \le    C_{max}^*(\hat{I}) \le (1 + \delta) C_{max}^*(I_{big})  + \tfrac{15 \delta n } {m} + h \cdot c.$$
\end{proof}
The next lemma   compares the   optimal makespan of instance  $I_{round}$  and that of  instance $I$ and $I_{big}$.
\begin{lemma}\label{lemma:rounded-vs-optimal}
Let $C_{max}^*(I)$ and $C_{max}^*(I_{round})$ be the optimal makespan for instances $I$ and $I_{round}$ respectively, we have
the following inequalities:
\begin{equation}\label{eq:rounding_1}
  C_{max}^*(I) \le  C_{max}^*(I_{round})  \le  (1 + \delta) C_{max}^*(I).
\end{equation}
\begin{equation}\label{eq:rounded_large_groups}
   C_{max}^*(I_{round}) - \tfrac{8 \delta n } {m}   \le  C_{max}^*(I_{big}) \le  C_{max}^*(I_{round}).
\end{equation} 
\end{lemma}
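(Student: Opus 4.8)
The plan is to handle the two inequalities separately, each by turning an optimal schedule of one instance into a feasible schedule of the other.

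Inequality~(\ref{eq:rounding_1}) rests on the elementary bound $p_j\le rp_{u(j)}\le(1+\delta)p_j$ for every job $j$, where $u(j)=\lfloor\log_{1+\delta}p_j\rfloor$; this holds because $p_j\in[1,c]$ and $rp_u=(1+\delta)^{u+1}$ for $u<k$ while $rp_k=c<(1+\delta)^{k+1}$. For $C_{max}^*(I)\le C_{max}^*(I_{round})$, keep an optimal schedule of $I_{round}$ with the same machine assignment and start times; since every job of $I$ is no longer than its rounded copy, precedence and non-overlap survive and the makespan only drops. For $C_{max}^*(I_{round})\le(1+\delta)C_{max}^*(I)$, take an optimal schedule $S$ of $I$ and dilate the whole time axis by $1+\delta$: a job that occupied $[s_j,s_j+p_j)$ now owns a window of length $(1+\delta)p_j\ge rp_{u(j)}$ beginning at $(1+\delta)s_j$, into which its rounded copy fits, and uniform dilation preserves the order of jobs on each machine and sends every constraint $C_i\le s_j$ to $(1+\delta)C_i\le(1+\delta)s_j$; the dilated schedule is therefore feasible with makespan $(1+\delta)C_{max}^*(I)$.

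For inequality~(\ref{eq:rounded_large_groups}), the bound $C_{max}^*(I_{big})\le C_{max}^*(I_{round})$ is immediate: $I_{big}$ is $I_{round}$ with some $(d,u)$-groups deleted (keeping the induced precedence), so the restriction of an optimal $I_{round}$-schedule to the surviving jobs is feasible with no larger makespan. For the other bound I would build a layered list schedule for $I_{round}$ as in the proof of Theorem~\ref{thm-alg1}: process depths $d=1,\dots,h$ in order, list-scheduling all depth-$d$ jobs of $I_{round}$ (including the small groups that $I_{big}$ drops) into one time slab; this is feasible because a predecessor of a depth-$d$ job has depth $<d$, and, by the same estimate as in Theorem~\ref{thm-alg1}, its makespan is at most $\tfrac1m V^{round}+hc$, where $V^{round}=\sum_d\sum_u n_{d,u}\,rp_u$ is the total rounded volume. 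Splitting $V^{round}=V^{big}+V^R$, with $V^R$ the volume of the deleted groups, and using the trivial volume bound $C_{max}^*(I_{big})\ge\tfrac1m V^{big}$, this gives $C_{max}^*(I_{round})\le C_{max}^*(I_{big})+\tfrac1m V^R+hc$.

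What remains is to verify $\tfrac1m V^R+hc\le\tfrac{8\delta n}{m}$, and this is the only step that needs care. Each deleted group holds fewer than $3\tau(n,h,c)$ jobs of processing time at most $c$, and there are at most $hk$ groups, so $V^R<3hkc\,\tau(n,h,c)$; substituting $\tau(n,h,c)=np$ and $p=\tfrac{5\delta}{2chkm}$ from Randomized-Algorithm1 collapses $\tfrac1m V^R$ to order $\tfrac{\delta n}{m^2}$, while $hc$ is absorbed by the range of $m$ for which the algorithm is run (so that $hcm=O(\delta n)$). Everything else — feasibility of the dilated and of the layered schedule, the fact that two jobs of equal depth cannot be comparable (otherwise the path through both would witness a larger depth), and the volume lower bound — is routine, so the main obstacle is purely the bookkeeping of constants in this last substitution, getting the sum to land at $\tfrac{8\delta n}{m}$ rather than some slightly larger multiple of $\tfrac{\delta n}{m}$.
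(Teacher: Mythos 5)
Your treatment of inequality~(\ref{eq:rounding_1}) coincides with the paper's (which records it without proof): round every processing time up by a factor in $[1, 1+\delta]$, and the two directions follow by keeping, respectively uniformly dilating, an optimal schedule.

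The lower bound in inequality~(\ref{eq:rounded_large_groups}) is where you depart from the paper, and the departure produces a gap. The paper argues directly: at most $3\tau(n,h,c)\cdot h\cdot k$ jobs are deleted to form $I_{big}$, each of processing time at most $c$, so their total volume is at most $3\tau(n,h,c)\cdot h\cdot k\cdot c = \tfrac{15\delta n}{2m}$; reinserting them into an optimal $I_{big}$-schedule costs at most that much extra makespan, giving $C_{max}^*(I_{round}) - C_{max}^*(I_{big}) \le \tfrac{15\delta n}{2m} < \tfrac{8\delta n}{m}$ with no side conditions. You instead sandwich a layered list-schedule upper bound $C_{max}^*(I_{round}) \le \tfrac{1}{m}V^{round} + hc$ against a volume lower bound $C_{max}^*(I_{big}) \ge \tfrac{1}{m}V^{big}$, yielding $C_{max}^*(I_{round}) - C_{max}^*(I_{big}) \le \tfrac{1}{m}V^R + hc$. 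The first term $\tfrac{1}{m}V^R \le \tfrac{15\delta n}{2m^2}$ is comfortably small, but the additive $hc$ is extra baggage: it is bounded by (the slack inside) $\tfrac{8\delta n}{m}$ only under a hypothesis of the form $m \lesssim \tfrac{\delta n}{hc}$. That hypothesis appears in Theorem~\ref{thm:const-alg}, where the lemma is invoked, but it is \emph{not} part of Lemma~\ref{lemma:rounded-vs-optimal} itself — the lemma is stated, and the paper proves it, unconditionally. Your closing remark that ``$hc$ is absorbed by the range of $m$ for which the algorithm is run'' correctly identifies the escape hatch but mislabels it as constant bookkeeping; it is in fact an extra standing assumption your route needs and the statement does not supply. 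To prove the lemma as stated, replace the list-schedule/volume sandwich by the direct comparison via the total processing time of the removed jobs.
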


\begin{proof} By our notation, $I$ is the original instance of $n$ jobs where a job $j$ has processing time $p_j$ and depth $dp_j$, and $I_{round}$ is the  instance from $I$  after rounding up the jobs' processing time such that if $(1+ \delta)^ u \le p_j \le (1 + \delta)^{u+1}$, then the rounded processing time is $rp_u \le (1 + \delta)p_j$ . It is easy to see that  we have
$$C_{max}^*(I) \le  C_{max}^*(I_{round}) < (1 + \delta) C_{max}^*(I).$$
The instance $I_{big}$ can be obtained from the instance $I_{round}$ by removing those $(d,u)$-group jobs where $n_{d,u} < 3 \tau(n,h,c)$. The total number  of the jobs removed is at most $3 \tau(n,h,c)  \cdot h \cdot k$, and each of these jobs have processing time at most $c$.  Therefore,  we have $  C_{max}^*(I_{round}) - { 3 \tau(n,h,c)   \cdot h  \cdot k \cdot c } \le  C_{max}^*(I_{big}) \le  C_{max}^*(I_{round}).$ Since $\tau(n,h,c) =  \tfrac {  5 \delta n }{ 2 c \cdot h \cdot k \cdot m}$, we get 
 $$  C_{max}^*(I_{round}) - \tfrac{8 \delta n } {m}  \le  C_{max}^*(I_{big}) \le  C_{max}^*(I_{round}).$$
\end{proof}
Combining all the lemmas we proved in this section, we can prove that the Randomized-Algorithm1 is an approximation scheme. 
\begin{theorem}\label{thm:const-alg}
If $m \le  \tfrac {n \epsilon}{20h \cdot c}$, Randomized-Algorithm1  is a randomized  $(1+\epsilon)$-approximation scheme for $P \mid prec,  dp_j \le h, p_{max} \le c \cdot p_{min} \mid C_{max}$	
 that runs in  $O(\tfrac{c^2 h^2 \log ^ 2 c } {\epsilon^6}  \log (\tfrac{h}{\epsilon} \log c) \cdot {  m^2}) $
 time.
\end{theorem}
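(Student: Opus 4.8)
The plan is to string the four preceding lemmas into one chain bounding the returned value $\hat A$ against $C_{max}^*(I)$, and then to absorb all residual additive terms using the hypothesis $m \le \tfrac{n\epsilon}{20 h c}$ together with the trivial bound $C_{max}^*(I)\ge n/m$ (each $p_j\ge 1$, so the total work is at least $n$). The running-time half of the statement is immediate from Lemma~\ref{time-lemma}; everything else concerns accuracy and holds with the probability $\tfrac{9}{10}$ already supplied by Lemma~\ref{lemma:rounded-big-vs-estimated} (which could be amplified to $1-\eta$ by running $O(\log\tfrac{1}{\eta})$ independent copies and taking the median, without changing the asymptotics).

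First I would locate $\hat A$ relative to $C_{max}^*(\hat I)$ exactly as in the proof of Theorem~\ref{thm-alg1}. On one hand $\sum_{d=1}^h \hat A_d = \tfrac1m\sum_{d,u}\hat e_{d,u}\, rp_u$ equals $\tfrac1m$ times the total processing time of $\hat I$, hence lower-bounds $C_{max}^*(\hat I)$; on the other hand the depth-by-depth list schedule of $\hat I$ has makespan at most $\hat A$. Thus
$$C_{max}^*(\hat I)\ \le\ \hat A\ =\ \sum_{d=1}^h\bigl(\lfloor \hat A_d\rfloor + c\bigr)\ \le\ \sum_{d=1}^h \hat A_d + h c\ \le\ C_{max}^*(\hat I) + h c.$$
Then I would relay through the lemmas: Lemma~\ref{lemma:rounded-big-vs-estimated} replaces $C_{max}^*(\hat I)$ by $C_{max}^*(I_{big})$, and Lemma~\ref{lemma:rounded-vs-optimal} replaces $C_{max}^*(I_{big})$ by $C_{max}^*(I_{round})$ and that by $C_{max}^*(I)$. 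On the upper side this gives, with probability $\ge\tfrac{9}{10}$,
$$\hat A\ \le\ (1+\delta)\,C_{max}^*(I_{big}) + \tfrac{15\delta n}{m} + 2 h c\ \le\ (1+\delta)^2\, C_{max}^*(I) + \tfrac{15\delta n}{m} + 2 h c,$$
and on the lower side $\hat A > (1-\delta)\bigl(C_{max}^*(I) - \tfrac{8\delta n}{m} - h c\bigr)$, using $C_{max}^*(I)\le C_{max}^*(I_{round})$.

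Finally I would substitute $\delta = \tfrac{\epsilon}{20}$ and clear the residual terms. From $C_{max}^*(I)\ge n/m$ we get $\tfrac{15\delta n}{m} = \tfrac{3\epsilon}{4}\cdot\tfrac nm \le \tfrac{3\epsilon}{4}C_{max}^*(I)$ and $\tfrac{8\delta n}{m}\le\tfrac{2\epsilon}{5}C_{max}^*(I)$, and from $m\le\tfrac{n\epsilon}{20 h c}$ we get $h c\le\tfrac{n\epsilon}{20 m}\le\tfrac{\epsilon}{20}C_{max}^*(I)$, hence $2 h c\le\tfrac{\epsilon}{10}C_{max}^*(I)$; combined with $(1+\delta)^2 = 1 + \tfrac{\epsilon}{10} + \tfrac{\epsilon^2}{400}$ this yields $\hat A \le (1+\tfrac{19\epsilon}{20}+\tfrac{\epsilon^2}{400})C_{max}^*(I) \le (1+\epsilon)C_{max}^*(I)$ for $0<\epsilon\le 1$, and symmetrically $\hat A > (1-\tfrac{\epsilon}{20})(1-\tfrac{9\epsilon}{20})C_{max}^*(I) > (1-\epsilon)C_{max}^*(I)$. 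So with probability $\ge\tfrac{9}{10}$ the output $\hat A$ is a $(1\pm\epsilon)$ estimate of $C_{max}^*(I)$, and together with Lemma~\ref{time-lemma} this is the claimed randomized $(1+\epsilon)$-approximation scheme. The only delicate point is this last bookkeeping step — making the sum of the slack in $(1+\delta)^2$, the $\tfrac{15\delta n}{m}$ loss from dropping the small $(d,u)$-groups of $\hat I$, the $\tfrac{8\delta n}{m}$ loss from $I_{round}\setminus I_{big}$, and the two additive $h c$ terms all fit inside the $\epsilon$ budget; the rest is a mechanical application of the lemmas.
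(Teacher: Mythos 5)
Your proof is correct and follows essentially the same route as the paper: bound $\hat A$ against $C_{max}^*(\hat I)$ by the list-scheduling argument, relay through $I_{big}$ and $I_{round}$ via Lemmas~\ref{lemma:rounded-big-vs-estimated} and~\ref{lemma:rounded-vs-optimal}, and absorb the additive slack using $C_{max}^*(I)\ge n/m$ together with $m\le\tfrac{n\epsilon}{20hc}$. The only difference is cosmetic bookkeeping (the paper lumps the slack into a single $\tfrac{17\delta n}{m}$ term and uses a cruder $(1+20\delta)$ bound, while you keep the terms separate and verify $\tfrac{19\epsilon}{20}+\tfrac{\epsilon^2}{400}<\epsilon$ directly), which changes nothing substantive.
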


\begin{proof}
The running time follows from Lemma~\ref{time-lemma}. We focus on the approximation ratio.
By our notation, $\hat{I}$ is the instance corresponding to the estimated sketch $\widehat{SKJ}_\delta = \{(d,u,\hat{e}_{d,u})\}$ where $\hat{e}_{d,u}$ is the estimated value for $n_{d,u}$. Only the tuples with $\hat{e}_{d,u}  > 2 \tau (n,h,c)$ are included in $\widehat{SKJ}_\delta$. By Randomized-Algorithm1, 
$\hat{A}_d = \frac{1}{m} \sum_{u=0}^k(  \hat{e}_{d,u}  \cdot rp_{u})$ and $\hat{A} = \sum_{d=1}^h ( \lfloor \hat{A}_d \rfloor  + c )$. Following the same proof as in Theorem~\ref{thm-alg1}, we can get 
 $$C_{max}^*(\hat{I}) \le \hat{A} \le  C_{max}^*(\hat{I}) +   h \cdot c.$$
  By inequality (\ref{eq:rounded_large_original_vs_sample_approx}), we get, with probability at least $\tfrac{9}{10}$  
 $$\hat{A}   \le      C_{max}^*(\hat{I}) + h \cdot c   \le  (1 + \delta) C_{max}^*(I_{big})  + \tfrac{15 \delta n } {m} + 2h \cdot c.$$
  If $m \le  \tfrac {n \epsilon}{20h \cdot c}$, with 
  $\delta=\tfrac{\epsilon}{20}$, we get $h \cdot c \le \tfrac{\delta n}{m}$.
Thus, 
we get, 
\begin{eqnarray*} 
\hat{A}
 & \le &   (1 + \delta) C_{max}^*(I_{big})  + \tfrac{15 \delta n } {m} + 2h \cdot c \\
& \le & (1 + \delta)   C_{max}^*(I_{big}) + \tfrac{17 \delta n } { m} \\ 
& \le &  (1 + \delta) C_{max}^*(I_{round})  + \tfrac{17 \delta n }{m}  \hspace{1.0in} \text { by} (\ref{eq:rounded_large_groups})  \\
& \le &  (1 + \delta)^2 C_{max}^*(I)  + \tfrac{17 \delta n }{m} \hspace{1.2 in } \text { by } (\ref{eq:rounding_1}) \\
& \le &  (1 + 20 \delta)  C_{max}^*(I)  \hspace{1.6 in } \text { by }  C_{max}^*(I) \ge \tfrac{n}{m}   \\ 
& \le & (1 + \epsilon) C_{max}^*(I),   \hspace{1.7in }\text { by } \delta =  \tfrac{\epsilon }{20 }
\end{eqnarray*}
and
\begin{eqnarray*} 
\hat{A} &  \ge  & C_{max}^*(\hat{I}) \\
&  \ge  & (1 - \delta) ( C_{max}^*(I_{big}) - h \cdot c) \hspace{0.9 in } \text { by }(\ref {eq:rounded_large_original_vs_sample_approx})\\
&  \ge  & (1 - \delta) ( C_{max}^*(I_{big}) -   \tfrac{\delta n}{m} ) \hspace{1.0 in } \text { by }  h \cdot c \le \tfrac{\delta n}{m}  \\
&  \ge  & ( 1 - \delta) (C_{max}^*(I_{round}) - \tfrac{ 9 \delta n}{m}  )  \hspace{0.8in} \text { by} (\ref{eq:rounded_large_groups}) \\
&  \ge & ( 1 - \delta) (   { C_{max}^*(I) }  - \tfrac{ 9 \delta n}{m} )   \hspace{1.1in} \text { by} (\ref{eq:rounding_1})\\
& \ge &   (   1 - \delta  ) (1 -  9 \delta)  C_{max}^*(I) \hspace{1.1in } \text { by } C_{max}^*(I) \ge \tfrac{n}{m}\\
& \ge &   (  1 -  10 \delta)C_{max}^*(I) \hspace{1.5in }  \text { by } \delta =  \tfrac{\epsilon }{20 }\\ 
& \ge &   (  1 -  \epsilon )C_{max}^*(I)   
\end{eqnarray*}

\end{proof}

Based on the above theorem,  when $m = {o}(n^{ 1/2})$, Randomized-Algorithm is  a sublinear time approximation scheme.
\begin{corollary}\label{cor:sublinear-time1} When
$m = {o}(n^{ 1/2})$, Randomized-Algorithm1  is a randomized  $(1+\epsilon)$-approximation scheme for $P \mid prec,  dp_j \le h, p_{max} \le c \cdot p_{min} \mid C_{max}$	
 that runs in  sublinear time.
\end{corollary}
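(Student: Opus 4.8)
The plan is to obtain Corollary~\ref{cor:sublinear-time1} as an immediate consequence of Theorem~\ref{thm:const-alg}, which already bundles together the approximation guarantee and a running-time bound for Randomized-Algorithm1. The only work left is to check that the hypothesis $m = o(n^{1/2})$ is strong enough to (a) activate the hypothesis $m \le \tfrac{n\epsilon}{20 h \cdot c}$ of Theorem~\ref{thm:const-alg}, and (b) force the stated running time $O(\tfrac{c^2 h^2 \log^2 c}{\epsilon^6}\log(\tfrac{h}{\epsilon}\log c)\cdot m^2)$ down to $o(n)$.

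First I would fix $c$, $h$, and $\epsilon$ as constants (they are constants in the problem formulation), so that $\tfrac{n\epsilon}{20 h \cdot c} = \Theta(n)$. Since $m = o(n^{1/2})$ in particular gives $m = o(n)$, there is a threshold $n_0$ beyond which $m \le \tfrac{n\epsilon}{20 h \cdot c}$; for $n \ge n_0$, Theorem~\ref{thm:const-alg} then guarantees that Randomized-Algorithm1 returns, with the success probability established there, a value within a factor $(1+\epsilon)$ of the optimal makespan $C_{max}^*$. Thus the approximation-scheme property is inherited verbatim.

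Next I would substitute $m = o(n^{1/2})$ into the running-time bound. With $c$, $h$, $\epsilon$ treated as constants this bound is $O(m^2)$, and $m = o(n^{1/2})$ yields $m^2 = o(n)$; hence the algorithm runs in $o(n)$ time, i.e., strictly sublinear in the size of the input. Combining the two observations gives the corollary.

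I do not expect any genuine obstacle here: the argument is pure bookkeeping — being explicit about which parameters are held constant and using the elementary fact $o(n^{1/2})\cdot o(n^{1/2}) = o(n)$ — once Theorem~\ref{thm:const-alg} is invoked.
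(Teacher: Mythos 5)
Your proposal is correct and follows exactly the route the paper intends (the paper itself treats the corollary as an immediate consequence of Theorem~\ref{thm:const-alg} and provides no separate proof): with $c$, $h$, $\epsilon$ held constant, $m = o(n^{1/2})$ eventually satisfies $m \le \tfrac{n\epsilon}{20hc}$, and the running-time bound collapses to $O(m^2) = o(n)$.
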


\subsection{Sublinear Time Algorithm for $P \mid prec, dp_j \le h,  p_{[n]} \le c \cdot p_{[ (1 - \alpha) n)]} \mid C_{max}$ }

In this section, we will generalize  Randomized-Algorithm1  to solve  the general problem $P \mid prec, dp_j \le h,  p_{[n]} \le c \cdot p_{[ (1 - \alpha) n)]} \mid C_{max}$. The idea is basically similar except some pre-processing is needed because we do not know the processing time range of the top $\alpha n$ jobs. Specifically, our sublinear algorithm first samples some jobs to determine the upper bound of  the largest   job, 
then samples enough number of jobs to generate the estimated sketch of the input, and finally computes the approximation of the optimal value based on the estimated sketch of the input. The details are given in Randomized-Algorithm2. 
\begin{algorithm} 
\renewcommand{\thealgorithm}{}
\caption {Randomized-Algorithm2}
   Input:   Parameters  $m$,  $c$, $h$, $\epsilon$, $\alpha$ 
   
   \hspace{0.4in}   Jobs:  $(p_j, dp_j)$, $1 \le j \le n$,  $1 \le dp_j \le h$     
   
  Output:  An  approximation of the optimal makespan
\begin{algorithmic}[1]
\State determine the upper bound of  the largest   job: 
  \Indent
   \State  let $\delta = \tfrac{\epsilon}{20}$,  $k = \floor{ \log_{1+\delta} \tfrac{c n } {\delta}  }$, and  $\gamma= \tfrac{1}{ 10hk}$
  \State let $n_0=1$ if $\alpha=1$, and $n_0=\ceiling{ \tfrac{ \ln \gamma}{    \ln (1-\alpha)}}$ if $\alpha<1$
   \State sample $n_0$ jobs uniformly at random 
   \State let $w_0$ be the largest processing time among all  the $n_0$ sampled jobs 
   \EndIndent
\State determine the sample size $n'$: 
  \Indent
  \State let  $p = \tfrac{ 5\alpha\delta }{2 c^2\cdot h \cdot k\cdot m}$, and 
    $\beta =\delta p$ 
   \State   let  $n'  =   \tfrac{3} { \alpha \beta ^2} \cdot {\ln \tfrac{2}{ \gamma}}$   
   \EndIndent 
\State sample $n'$ jobs uniformly at random
\State remove those jobs   whose processing time is at most $\tfrac{\delta w_0}{n}$ from the sampled jobs
 \State compute the sketch of the remaining sample jobs $SKJ'_{\delta} = \{(d, u, n'_{d,u})\}$
 
\State compute the estimated sketch of all jobs $\widehat{SKJ}_{\delta}$ 
\Indent
 \State \label{line:random2-tau}  let $\tau(n, h, c)={n}\cdot p $
 \State  $\widehat{SKJ}_{\delta} = \emptyset$
 \For { each $ (d, u, n'_{d,u}) \in SKJ'_{\delta}$ }
   \State  let $\hat{e}_{d, u}=n \cdot \tfrac{n'_{d, u}}{n'}$
   \State   if $\hat{e}_{d, u} > 2\tau(n, h, c) $ 
   \State  \hspace{0.2in} 
    $\widehat{SKJ}_{\delta} = \widehat{SKJ}_{\delta} \cup \{(d, u, \hat{e}_{d,u})\}$
\EndFor
\EndIndent
\State compute  the estimated    makespan
      \Indent   
        \State  let $u_{-} = \floor{\log_{1+\delta} \tfrac{\delta w_0}{n}}$,  $u_{+} =\floor{\log_{1+\delta} c w_0}$     
      \State  let $rp_{u_{+}} = c w_0$
        \State   for each $ u_{-} \le  u < u_{+} $ 
        \Indent 
        \State let  $rp_{u} =  (1 + \delta)^{u+1} $
        \EndIndent
       \State  for each $d$, $1 \le d \le h$
       \State  \hspace{0.2in} let $\hat{A}_d =      \frac{1}{m} \sum_{u=u_{-}}^{u_{+}}(  \hat{e}_{d,u}  \cdot rp_{u})$ ,    where $(d, u, \hat{e}_{d,u}) \in \widehat{SKJ}_{\delta}  $   
       \State  $\hat{A} = \sum_{d=1}^h  \left(\floor{\hat{A}_d} + c w_0 \right )$
   \EndIndent
  \State  return $\hat{A}$  
\end{algorithmic}
\end{algorithm}

 Like the Randomized-Algorithm1,   the time complexity of Randomized-Algorithm2 is dominated by the sample size $n'$. However, $n'$ in this algorithm depends on $n$. Still, we will show in the lemma below that the   running time of the algorithm is sublinear when $m = {o}(n^{ 1/2})$. 
 \begin{lemma}\label{lemma:sublinear-time}
Randomized-Algorithm2 runs in time  $O( \tfrac{c^4 h^2} {\alpha^3 \epsilon^6} \cdot  m^2 \log ^ 2 (\tfrac{c n}{\epsilon})  \log (\tfrac{h}{\epsilon} \log (\tfrac{c n}{\epsilon}  )))$.
\end{lemma}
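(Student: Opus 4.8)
The plan is to show that the running time of Randomized-Algorithm2 is dominated by the sizes of its two random samples, exactly as in the proof of Lemma~\ref{time-lemma} for Randomized-Algorithm1, and then to substitute the explicit parameter values and simplify. First I would check that every operation outside the two sampling steps is cheap. After the first phase fixes $w_0$ (hence $u_{-}$, $u_{+}$, and the span bounded by $k$), the sketch $SKJ'_\delta$ can be kept in a two-dimensional array indexed by $(d,u)$ with $d\le h$ and $u\in[u_{-},u_{+}]$, a range of size $O(k)$; so discarding the small sampled jobs and inserting the rest costs $O(1)$ per sampled job after an $O(hk)$ initialization, forming $\widehat{SKJ}_\delta$ iterates over the at most $hk$ array entries a constant number of times, and the final makespan computation iterates over the $O(hk)$ values $rp_u$ and the $h$ depths. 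Hence the total running time is $O(n_0+n'+hk)$, and since the explicit value of $n'$ derived below dominates both $hk$ and $n_0$, this is $O(n')$.

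Next I would bound the two sample sizes. For $n_0$: if $\alpha=1$ then $n_0=1$, and if $\alpha<1$ then, using $|\ln(1-\alpha)|\ge\alpha$ and $|\ln\gamma|=\ln(10hk)$, we get $n_0=O\!\big(\tfrac{1}{\alpha}\log(hk)\big)$, which is far smaller than $n'$. For $n'=\tfrac{3}{\alpha\beta^2}\ln\tfrac{2}{\gamma}$: substituting $p=\tfrac{5\alpha\delta}{2c^2 h k m}$ and $\beta=\delta p=\tfrac{5\alpha\delta^2}{2c^2 h k m}$ gives $\tfrac{1}{\alpha\beta^2}=\tfrac{4c^4 h^2 k^2 m^2}{25\,\alpha^3\delta^4}=\Theta\!\big(\tfrac{c^4 h^2 k^2 m^2}{\alpha^3\delta^4}\big)$, and with $\ln\tfrac{2}{\gamma}=\ln(20hk)=O(\log(hk))$ this yields $n'=O\!\big(\tfrac{c^4 h^2 k^2 m^2}{\alpha^3\delta^4}\log(hk)\big)$. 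Finally I would eliminate $k$ using $\log_{1+\delta}x=\tfrac{\ln x}{\ln(1+\delta)}=\Theta\!\big(\tfrac{1}{\delta}\ln x\big)$ for $0<\delta<1$: with $\delta=\epsilon/20$ this gives $k=\floor{\log_{1+\delta}\tfrac{cn}{\delta}}=\Theta\!\big(\tfrac{1}{\epsilon}\log\tfrac{cn}{\epsilon}\big)$, hence $k^2=\Theta\!\big(\tfrac{1}{\epsilon^2}\log^2\tfrac{cn}{\epsilon}\big)$, $\log(hk)=O\!\big(\log(\tfrac{h}{\epsilon}\log\tfrac{cn}{\epsilon})\big)$, and $\delta^4=\Theta(\epsilon^4)$. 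Plugging these in gives $n'=O\!\big(\tfrac{c^4 h^2 m^2}{\alpha^3\epsilon^6}\log^2\tfrac{cn}{\epsilon}\,\log(\tfrac{h}{\epsilon}\log\tfrac{cn}{\epsilon})\big)$, which is exactly the asserted bound.

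The calculation is routine; the step that most needs care --- and the one a hasty write-up is likely to mishandle --- is the conversion $\log_{1+\delta}(\cdot)=\Theta(\tfrac{1}{\delta}\log(\cdot))$ together with the bookkeeping of the nested logarithms, since $k$ already carries a $\log(cn/\epsilon)$ inside it and this feeds into the innermost $\log(\tfrac{h}{\epsilon}\log\tfrac{cn}{\epsilon})$ factor. I would also note that this is a worst-case deterministic bound: drawing and scanning $n_0$ and then $n'$ uniform samples takes $\Theta(n_0+n')$ steps regardless of the random outcomes, so no ``with high probability'' or expectation qualifier enters the time analysis --- that qualifier is only needed for correctness, which is handled separately by the accuracy lemmas.
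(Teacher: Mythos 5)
Your proposal is correct and follows the same approach as the paper's (very terse) proof: reduce the running time to $O(n_0 + n') = O(n')$, substitute $\beta = \delta p$ to get $\tfrac{1}{\alpha\beta^2} = \Theta\bigl(\tfrac{c^4 h^2 k^2 m^2}{\alpha^3 \delta^4}\bigr)$, use $\ln\tfrac{2}{\gamma} = O(\log(hk))$, and then eliminate $k$ via $\log_{1+\delta}x = \Theta(\tfrac{1}{\delta}\log x)$. You simply make explicit two steps the paper leaves implicit — that the non-sampling overhead is $O(hk)$ and that $n_0 = O\bigl(\tfrac{1}{\alpha}\log(hk)\bigr)$ is dominated by $n'$ — but the route and the algebra are identical.
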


\begin{proof}
The running time is dominated by the sampling of  $n_0+n' = O(n')$ jobs. Thus its running time is 
 $$  O(n')=O(\tfrac{1}{\alpha \beta^2} \cdot \ln \tfrac{2}{\gamma}))= 
 O( \tfrac{1}{ \alpha \delta^2} (\tfrac{c^2 h  k  m }{ \alpha \delta} )^2 \log hk) =   O( \tfrac{c^4 h^2} {\alpha^3 \epsilon^6} \cdot m^2  \log ^ 2 (\tfrac{c n}{\epsilon})  \log (\tfrac{h}{\epsilon} \log (\tfrac{c n}{\epsilon}  ))).$$  
\end{proof}
 
The next lemma shows that by sampling $n_0$ jobs, we can get a good estimate of the largest processing time  $p_{[n]}$. 
\begin{lemma}\label{I0-lemma}\scrod
 With probability at least $1-\gamma$, $p_{[n]} \le c w_0$.  
\end{lemma}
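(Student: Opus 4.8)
The plan is to use the structural hypothesis $p_{[n]} \le c \cdot p_{[(1-\alpha)n]}$ together with the fact that it is enough for the first-phase sample to contain a single one of the $\alpha n$ largest jobs. First I would record the deterministic consequence of the hypothesis: the $(1-\alpha)n$-th smallest job already satisfies $p_{[(1-\alpha)n]} \ge p_{[n]}/c$, so every job whose rank is at least $(1-\alpha)n$ has processing time at least $p_{[n]}/c$. Let $H$ be the set of these jobs; then $|H| \ge \alpha n$, and if even one of the $n_0$ jobs sampled uniformly at random lands in $H$, the resulting value $w_0$ satisfies $w_0 \ge p_{[n]}/c$, i.e.\ $p_{[n]} \le c w_0$, which is exactly the desired conclusion.

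Next I would bound the probability that the sample misses $H$ entirely. If $\alpha = 1$ then $H$ is the whole job set, the conclusion holds deterministically, and $n_0 = 1$ suffices. If $\alpha < 1$, a single uniform sample misses $H$ with probability at most $1 - |H|/n \le 1-\alpha$, and since the $n_0$ samples are independent, the probability that all of them miss $H$ is at most $(1-\alpha)^{n_0}$. Finally I would check that the algorithm's choice $n_0 = \ceiling{\ln\gamma / \ln(1-\alpha)}$ forces this below $\gamma$: since $\ln(1-\alpha) < 0$, the inequality $(1-\alpha)^{n_0} \le \gamma$ is equivalent to $n_0 \ge \ln\gamma/\ln(1-\alpha)$, which holds by the ceiling. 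Combining, $\Pr[p_{[n]} \le c w_0] \ge 1 - (1-\alpha)^{n_0} \ge 1-\gamma$.

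I do not expect a genuine obstacle here; the only points needing a little care are translating the factor-$c$ bound into the statement ``all of the top $\alpha n$ jobs have processing time at least $p_{[n]}/c$'' and making sure the rounding in the definition of $n_0$ points the right way, both of which are routine. If the sampling in the algorithm is performed without replacement rather than with replacement, the bound $(1-\alpha)^{n_0}$ on the all-miss probability still holds (each successive conditional miss probability is at most $|H^c|/n \le 1-\alpha$), so the argument is unaffected.
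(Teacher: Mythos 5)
Your proposal is correct and matches the paper's argument exactly: bound the probability that all $n_0$ uniform samples miss the set of $\alpha n$ largest jobs by $(1-\alpha)^{n_0} \le \gamma$, then use $p_{[n]} \le c\,p_{[(1-\alpha)n]}$ to conclude $p_{[n]} \le c w_0$ on the complementary event. You have simply spelled out the ceiling calculation and the $\alpha=1$ degenerate case, both of which the paper leaves implicit.
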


\begin{proof}
Since we sample the jobs uniformly, the probability that a  job   from the top $\alpha n$ largest jobs is selected is $\alpha$. The probability that no job from top $\alpha n$ largest jobs is sampled is at most   $(1- \alpha)^{n_0}\le \gamma$, which implies  that with probability of at least $1-\gamma$, some jobs from the top $\alpha n$ largest jobs are sampled, which means  $w_0 \ge p_{[1-\alpha)n]}$ and $p_[n] \le c w_0$.
\end{proof}

The next   lemma is similar to Lemma~\ref{lemma:prob-e-d-u-is-good} which states that $\hat{e}_{d, u}$ is a good estimate of  the number of corresponding jobs in the input instance,  $n_{d, u}$. 
The only difference is that here we focus on the jobs whose processing time is at least $\tfrac { \delta w_0}{n}$. 
The proof is the same and we omit here. 

 \begin{lemma}
 For any $d$, $u$, let  $\hat{e}_{d,u}$ be the value computed by Randomized-Algorithm2, then we have: 
\begin{enumerate}
\item[(i)] If $n_{d, u}\ge \tau(n,h,c)$, 
$\Pr( (1-\delta)n_{d, u}\le \hat{e}_{d, u}\le (1+\delta)n_{d, u}) \ge 1-\gamma$, and 
\item[(ii)]  If $n_{d, u}< \tau(n,h,c)$,  $\Pr(\hat{e}_{d, u} \le 2\tau(n,h, c)) \ge 1 - \gamma$.
\end{enumerate}
\end{lemma}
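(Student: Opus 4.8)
The plan is to reuse the argument of Lemma~\ref{lemma:prob-e-d-u-is-good} essentially verbatim, inserting one extra conditioning step to absorb the pre-sampling phase and substituting the parameter values produced by Randomized-Algorithm2 ($p=\tfrac{5\alpha\delta}{2c^2hkm}$, $\beta=\delta p$, $n'=\tfrac{3}{\alpha\beta^2}\ln\tfrac2\gamma$, $\tau(n,h,c)=np$, and $\gamma=\tfrac1{10hk}$ with $k=\floor{\log_{1+\delta}\tfrac{cn}{\delta}}$). First I would fix a group $(d,u)$ with $u_-\le u\le u_+$ and condition on the outcome $w_0$ of the first sampling phase, which freezes the removal threshold $\tfrac{\delta w_0}{n}$ and the window $[u_-,u_+]$. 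For the second sample I would let $X_i$ be the indicator that the $i$-th of the $n'$ sampled jobs survives the removal step and has depth $d$ and processing time in $[(1+\delta)^u,(1+\delta)^{u+1})$, so that $n'_{d,u}=\sum_{i\le n'}X_i$ and $\Pr(X_i=1)=\tfrac{n_{d,u}}{n}=:p_0$; here I use that for every $u>u_-$ no job of the $(d,u)$-group is ever removed (its processing time exceeds $(1+\delta)^{u_-+1}>\tfrac{\delta w_0}{n}$), and that the lone boundary group $u=u_-$ can only lose jobs, which is harmless. Then $\hat e_{d,u}=n\cdot\tfrac{n'_{d,u}}{n'}$ exactly as in the earlier lemma.

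For part~(i), with $p_0\ge p$, I would rewrite the two bad events as $\{n'_{d,u}\le(p_0-\delta p_0)n'\}$ and $\{n'_{d,u}\ge(p_0+\delta p_0)n'\}$ and apply the two halves of Lemma~\ref{chernoff-lemma}, bounding them by $e^{-\tfrac12 n'(\delta p_0)^2}\le e^{-\tfrac12 n'\beta^2}$ and $e^{-\tfrac13 n'\beta^2}$ respectively; since $\alpha\le1$ gives $n'\ge\tfrac3{\beta^2}\ln\tfrac2\gamma$, both are at most $\tfrac\gamma2$, and a union bound (Fact~\ref{union-bound}) yields the claim. For part~(ii), with $p_0<p$, I would note $\{\hat e_{d,u}>2\tau(n,h,c)\}=\{n'_{d,u}>2n'p\}\subseteq\{n'_{d,u}>(p+\delta p)n'\}$ (using $\delta<1$) and apply the first half of Lemma~\ref{chernoff-lemma} to get a bound of $e^{-\tfrac13 n'\beta^2}\le\tfrac\gamma2<\gamma$. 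Finally, since every bound above holds for each fixed value of $w_0$, averaging over $w_0$ makes them unconditional.

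The main obstacle — modest, and the only thing genuinely new relative to Lemma~\ref{lemma:prob-e-d-u-is-good} — is the two-phase sampling: the removal threshold and the index window depend on the random $w_0$, so the Chernoff estimate must be taken conditionally on the first phase and then integrated out, which is legitimate because the second sample is independent of the first and the conditional bound is uniform in $w_0$. The secondary nuisance is the single boundary group $(d,u_-)$, whose jobs may be split by the threshold; this never affects the downstream $(1\pm\delta)$-estimates because those jobs have processing time at most $\tfrac{\delta w_0}{n}$, so their total length is at most $\delta w_0$. All remaining steps — the event algebra, the bookkeeping of $p,\beta,\tau,n',\gamma$, and the union bound — are identical to the proof of Lemma~\ref{lemma:prob-e-d-u-is-good}, which is why the statement can be recorded without repeating it.
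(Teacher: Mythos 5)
Your proof matches the paper's intended approach exactly --- the paper itself writes ``the proof is the same and we omit here,'' deferring to Lemma~\ref{lemma:prob-e-d-u-is-good}, and your reconstruction (condition on the first-phase outcome $w_0$, rerun the two Chernoff tail bounds and the union bound with the Algorithm~2 parameters, observe that $n' \ge \tfrac{3}{\beta^2}\ln\tfrac{2}{\gamma}$ because $\alpha \le 1$) is precisely what that deferral entails. The one wrinkle you flag at $u = u_{-}$ disappears if you read $n_{d,u}$ as counting only jobs whose processing time exceeds the removal threshold $\tfrac{\delta w_0}{n}$ --- which is evidently what the paper intends by ``here we focus on the jobs whose processing time is at least $\tfrac{\delta w_0}{n}$'' --- since then $\Pr(X_i = 1) = n_{d,u}/n$ holds exactly for every $u$ in the window and no boundary special-casing is needed.
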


Like Theorem~\ref{thm:const-alg}, we can prove that Randomized-Algorithm2 is an approximation scheme.
\begin{theorem}\label{thm-sublinear-time-alg} For $P \mid prec, dp_j \le h,  p_{[n]} \le c \cdot p_{[ (1 - \alpha) n)]} \mid C_{max}$, when $ m  \le    \tfrac { n \cdot \alpha \cdot \epsilon}{20 c^2  \cdot h  } $, 
Randomized-Algorithm2 is a randomized $(1+\epsilon)$-approximation scheme  
that runs in time  $O( \tfrac{c^4 h^2 } {\alpha^3 \epsilon^6} \cdot m^2 \log ^ 2 (\tfrac{c n}{\epsilon})  \log (\tfrac{h}{\epsilon} \log (\tfrac{c n}{\epsilon}  )))$.
\end{theorem}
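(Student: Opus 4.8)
The plan is to follow the proof of Theorem~\ref{thm:const-alg} essentially line by line, treating the new preprocessing stage (Steps~1--2 of Randomized-Algorithm2) as one extra low-probability failure event. The running time is already settled by Lemma~\ref{lemma:sublinear-time}, so I would argue only the correctness of the returned value $\hat A$. First I would condition on the event $E_0=\{p_{[n]}\le cw_0\}$, which by Lemma~\ref{I0-lemma} has probability at least $1-\gamma$. On $E_0$ the cutoff $rp_{u_+}=cw_0$ used by the algorithm genuinely dominates every job's processing time, so rounding each job with $p_j\ge\tfrac{\delta w_0}{n}$ up to the nearest power of $1+\delta$ is a legitimate $(1+\delta)$-relaxation, while the jobs with $p_j<\tfrac{\delta w_0}{n}$ have total processing time at most $n\cdot\tfrac{\delta w_0}{n}=\delta w_0$, which on $E_0$ is negligible compared with the per-depth buffer $cw_0\ge p_{max}$.

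Next I would reuse the instance chain $I,\ I_{round},\ I_{big},\ \hat I$ from Section~4.1, now defined relative to the realized $w_0$: $I_{round}$ rounds up all non-small jobs, $I_{big}$ deletes every $(d,u)$-group with $n_{d,u}<3\tau(n,h,c)$, and $\hat I$ is the instance of the estimated sketch $\widehat{SKJ}_\delta$. The analogue of Lemma~\ref{lemma:prob-e-d-u-is-good} for Randomized-Algorithm2 is already stated in the excerpt; since there are at most $hk$ groups with $k=\floor{\log_{1+\delta}\tfrac{cn}{\delta}}$, a union bound (Fact~\ref{union-bound}) gives, with probability at least $1-\gamma hk=\tfrac{9}{10}$, that simultaneously every group of $I_{big}$ survives into $\hat I$ and every spurious group of $\hat I$ has $\hat e_{d,u}\le 6\tau(n,h,c)$. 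From this the counterpart of Lemma~\ref{lemma:rounded-big-vs-estimated} yields
$$(1-\delta)\bigl(C_{max}^*(I_{big})-h\cdot cw_0\bigr)\le C_{max}^*(\hat I)\le (1+\delta)C_{max}^*(I_{big})+6\tau(n,h,c)\cdot hk\cdot cw_0+h\cdot cw_0,$$
the counterpart of Lemma~\ref{lemma:rounded-vs-optimal} yields $C_{max}^*(I)\le C_{max}^*(I_{round})\le(1+\delta)C_{max}^*(I)+\delta w_0$ together with $C_{max}^*(I_{round})-3\tau(n,h,c)\cdot hk\cdot cw_0\le C_{max}^*(I_{big})\le C_{max}^*(I_{round})$, and applying the estimate of Theorem~\ref{thm-alg1} to $\hat I$ gives $C_{max}^*(\hat I)\le\hat A\le C_{max}^*(\hat I)+h\cdot cw_0$.

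The last step is the arithmetic, which I expect to go through exactly as in Theorem~\ref{thm:const-alg} because $\delta=\tfrac{\epsilon}{20}$ leaves ample slack. Substituting $\tau(n,h,c)=\tfrac{5\alpha\delta n}{2c^2hkm}$ collapses $6\tau hk\,cw_0$ to $\tfrac{15\alpha\delta n w_0}{cm}$ and $3\tau hk\,cw_0$ to $\tfrac{15\alpha\delta n w_0}{2cm}$. Because the top $\alpha n$ jobs each have processing time at least $\tfrac{p_{max}}{c}\ge\tfrac{w_0}{c}$, we have $C_{max}^*(I)\ge\tfrac{\alpha n w_0}{cm}$, so these two terms are at most $15\delta C_{max}^*(I)$ and $\tfrac{15}{2}\delta C_{max}^*(I)$; and since $m\le\tfrac{\alpha n\epsilon}{20c^2h}$ we also get $h\cdot cw_0\le hc\,p_{max}\le\tfrac{hc^2m}{\alpha n}C_{max}^*(I)\le\delta C_{max}^*(I)$ and hence $\delta w_0\le\delta C_{max}^*(I)$. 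Chaining the displayed inequalities as in Theorem~\ref{thm:const-alg} then gives $(1-\epsilon)C_{max}^*(I)\le\hat A\le(1+\epsilon)C_{max}^*(I)$; intersecting $E_0$ with the union-bound event, the overall success probability is at least $\tfrac{9}{10}-\gamma$, a constant above $\tfrac12$, which is all that a randomized approximation scheme requires.

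The part I expect to require the most care is not any single inequality but making sure every additive error term is charged against the \emph{same} lower bound $C_{max}^*(I)\ge\tfrac{\alpha n w_0}{cm}$: unlike in Randomized-Algorithm1, here $p_{min}$ may be negligible, so $C_{max}^*(I)\ge np_{min}/m$ is useless and all of the slack must be routed through the $\alpha n$-large-jobs hypothesis (this is exactly why the threshold in the statement carries a factor $\tfrac{\alpha}{c^2}$ rather than $\tfrac{1}{c}$). A secondary point is verifying that the per-depth buffer $cw_0$ in the definition of $\hat A$ simultaneously absorbs the list-scheduling rounding of the big jobs of that depth and the dropped small jobs of that depth (whose total is at most $\delta w_0<cw_0$); this is the role played in the streaming setting by the extra $\lceil p_{max}/n\rceil$ term in Theorem~\ref{thm: sketch-alg3}.
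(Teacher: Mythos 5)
Your proposal follows the paper's proof essentially step for step: you reuse the instance chain $I \to I_{round} \to I_{big} \to \hat I$, the per-group union bound over the $hk$ buckets, the conditioning on $E_0 = \{p_{[n]}\le cw_0\}$, and — the crucial point you correctly flag — routing all additive error through $C_{max}^*(I)\ge \tfrac{\alpha n w_0}{cm}$ rather than through $np_{min}/m$. One small nit: the extra $+\delta w_0$ you append to $C_{max}^*(I_{round})\le(1+\delta)C_{max}^*(I)$ is unnecessary (rounding leaves the small jobs untouched, so the bare $(1+\delta)$ bound already holds) and, if kept, pushes the final constant just past $1+\epsilon$; dropping it recovers the paper's $(1+20\delta)=(1+\epsilon)$ exactly.
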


\begin{proof}
The running time follows from Lemma~\ref{lemma:sublinear-time}.
We  consider   the approximation ratio only. The proof is   similar to that of Theorem~\ref{thm:const-alg}.  
For the input instance $I$, let $I_{round}$ be the instance obtained from $I$ by rounding up the processing times for the jobs with $p_j \ge \tfrac{\delta w_0}{n}$. Let $C_{max}^*(I) $ and $ C_{max}^*(I_{round})$ respectively be the optimal makespan for $I$ and $I_{round}$.
Then we still have the same inequalities between $C_{max}^*(I_{round}) $ and $ C_{max}^*(I)$: 
\begin{equation}\label{round-vs original}
     C_{max}^*(I)  \le  C_{max}^*(I_{round}) \le (1+\delta)C_{max}^*(I)
\end{equation}
Let $I_{big}$ be the instance obtained from the instance $I_{round}$ by removing not only  the $(d,u)$-groups with $n_{d,u}< 3\tau(n, h, c)$  but also the groups of the jobs whose processing time is less than $ \tfrac {\delta w_0} {n}$.  The total processing time of the   
jobs with processing time less than 
$ \tfrac {\delta w_0} {n}$ is at most  $n \cdot \tfrac {\delta w_0} {n} \le \delta w_0 \le \delta  C_{max}^*(I_{round})$.   
The other jobs removed   belong to the groups with $n_{d,u}< 3\tau(n, h, c)$, and each of these jobs has processing time  at least $ \tfrac {\delta w_0} {n}$ and   at  most $c w_0$.   There are at most  $ h \cdot k$ such groups where $  k =  \log_{1+\delta} \tfrac{c n}{\delta}$ as defined in the algorithm. 
Thus the total processing time of these jobs is at most
$$  3\tau(n, h, c)   \cdot h  \cdot  k \cdot c w_0  =   3 n \cdot \tfrac{5 \alpha \delta}{2 c^2 \cdot  h \cdot  k \cdot  m}    \cdot h \cdot k \cdot c w_0\le  \tfrac{8  \delta \cdot  \alpha \cdot n w_0}{ c m}. $$  
Thus we have 
 \begin{equation}\label{round-vs-big-alpha}
 C_{max}^*(I_{round})  - \delta  C_{max}^*(I_{round} ) -  \tfrac{8  \delta \cdot  \alpha \cdot n w_0}{c m}\le C_{max}^*(I_{big}) \le C_{max}^*(I_{round})
 \end{equation}

 As before, let $\hat{I}$ be  the instance corresponding to the sketch $\widehat{SKJ}_\delta$, which contains $\hat{e}_{d,u}$ number of jobs with the depth of $d$ and the processing time $ rp_u $ where   $\hat{e}_{d, u} > 2\tau(n, h, c)$. Then the optimal makespan of $\hat{I}$, $C_{max}^*(\hat{I})$,  is at least $\hat{A}_d =      \frac{1}{m} \sum_{u=u_{-}}^{u_{+}}(  \hat{e}_{d,u}  \cdot rp_{u})$.
 Between the $I_{big}$ and $\hat{I}$, we can use similar argument for (\ref{I-hat-lower-bound-2}) and (\ref{I-hat-upper-bound-1}) to show that with probability at least $\tfrac{9}{10}$, we have 
  \begin{equation}\label{big-vs-estimate-1-alpha}
(1 - \delta) ( C_{max}^*(I_{big}) - h \cdot cw_0) <    C_{max}^*(\hat{I}) 
\end{equation}
and 
  \begin{equation}\label{big-vs-estimate-2-alpha}
  C_{max}^*(\hat{I}) \le (1 + \delta) C_{max}^*(I_{big})  + \tfrac{15 \delta \alpha n w_0} {cm} + h \cdot cw_0
\end{equation}
The returned value  $\hat{A} =\sum_{d=1}^h  \left(\lfloor\hat{A}_d \rfloor + c w_0 \right )$ is at least $C_{max}^*(\hat{I})$  
and   
$$ \hat{A}  =   \sum_{d=1}^h  \left(\lfloor\hat{A}_d \rfloor + c w_0 \right )   
   \le   C_{max}^*(\hat{I}) + h \cdot c w_0      
  \le    (1 + \delta)   C_{max}^*(I_{big})   + \tfrac{15 \delta \alpha n  w_0 } { c m} + 2 h \cdot c w_0.
$$ 
Assuming $ m  \le       \tfrac { n \cdot \alpha \cdot \epsilon}{20 c^2  \cdot h  } = \tfrac { n \cdot \alpha \cdot \delta}{c^2  \cdot h  }$, then $ h \cdot c   w_0 \le  \tfrac{\delta \alpha n \cdot w_0}{cm}$, and combining with the above inequalities, we get 
\begin{eqnarray*} 
\hat{A} 
& \le & (1 + \delta)   C_{max}^*(I_{big})   + \tfrac{17 \delta \alpha n  w_0 } { c m}    \\
& \le &  (1 + \delta) C_{max}^*(I_{round})  + \tfrac{17 \delta \alpha n  w_0 }{c m}  \hspace{1.3in }  \text { by } (\ref{round-vs-big-alpha})\\
& \le &  (1 + \delta) C_{max}^*(I_{round}) +    17 \delta C_{max}^*(I) \hspace{1.05in } \text { by } C_{max}^*(I) \ge \tfrac{\alpha n \cdot w_0}{c m} \\ 
& \le &  (1 + \delta)^2 C_{max}^*(I) +    17 \delta C_{max}^*(I) \hspace{1.3in }  \text { by } (\ref{round-vs original})  \\ 
& \le &  (1 + 20 \delta)  C_{max}^*(I)   \\
& \le  &   (  1 +   \epsilon ) C_{max}^*(I) \hspace{2.35in } \text { by } \delta =  \tfrac{\epsilon }{20 }  
\end{eqnarray*}
and 
\begin{eqnarray*} 
\hat{A} &  >  & C_{max}^*(\hat{I}) \\
&  >  & (1 - \delta) ( C_{max}^*(I_{big}) - h\cdot cw_0) \hspace{2in }  \text { by } (\ref{big-vs-estimate-1-alpha}) \\
&  > & ( 1 - \delta) ((1 - \delta)C_{max}^*(I_{round})  - \tfrac{8 \delta \alpha n w_0 }{c m} - h \cdot cw_0) \hspace{0.7in }  \text { by } (\ref{round-vs-big-alpha})\\
&  > & ( 1 - \delta) ((1 - \delta)C_{max}^*(I_{round})  - \tfrac{9 \delta \alpha n w_0 }{c m} )  \hspace{1.3in }  \text { by } h \cdot c   w_0 \le  \tfrac{\delta \alpha n \cdot w_0}{cm} \\
&  > & ( 1 - \delta) \left(     (1 - \delta)   C_{max}^*(I) - \tfrac{9 \delta \alpha n w }{c m}\right)  \hspace{1.6in }  \text { by } (\ref{round-vs original})\\
&  > & ( 1 - \delta) \left(     (1 - \delta)   C_{max}^*(I) -  9 \delta C_{max}^*(I)\right)  \hspace{1.3in }    \text { by } C_{max}^*(I) \ge \tfrac{\alpha n  w_0 }{c m}\\
& \ge &      (1 - \delta)  ( 1 - 10 \delta) C_{max}^*(I)  \\
& = &   (  1 -  \epsilon )C_{max}^*(I) \hspace{2.9in } \text { by } \delta =  \tfrac{\epsilon }{20} 
\end{eqnarray*}
\end{proof}

From Theorem~\ref{thm-sublinear-time-alg}, we can easily get the following corollaries. 
\begin{corollary}\label{cor:sublinear-time2} 
When
$m = {o}(n^{ 1/2})$, 
Randomized-Algorithm2 is a randomized $(1+\epsilon)$-approximation scheme  for $P \mid prec, dp_j \le h,  p_{[n]} \le c \cdot p_{[ (1 - \alpha) n)]} \mid C_{max}$, 
that runs in sublinear time.
\end{corollary}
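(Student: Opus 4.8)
The plan is to derive this corollary directly from Theorem~\ref{thm-sublinear-time-alg} with essentially no new argument: the whole content is a bookkeeping check that, in the regime $m=o(n^{1/2})$, the hypothesis of that theorem is eventually satisfied and its running-time bound is $o(n)$. So first I would fix $c$, $h$, $\alpha$, and $\epsilon$ as constants of the scheme (they do not grow with $n$) and observe that the required condition $m\le \tfrac{n\,\alpha\,\epsilon}{20 c^2 h}$ has a right-hand side that is $\Theta(n)$, whereas $m=o(n^{1/2})=o(n)$. Hence for all sufficiently large $n$ the inequality holds, and Theorem~\ref{thm-sublinear-time-alg} already guarantees that Randomized-Algorithm2 returns a value within a $(1+\epsilon)$ factor of the optimal makespan with the stated success probability; nothing about the approximation guarantee needs to be reproved.

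Next I would substitute the same constant assumptions into the running time $O\!\left(\tfrac{c^4 h^2}{\alpha^3 \epsilon^6}\, m^2 \log^2(\tfrac{c n}{\epsilon}) \log(\tfrac{h}{\epsilon}\log(\tfrac{c n}{\epsilon}))\right)$ supplied by Lemma~\ref{lemma:sublinear-time} (equivalently, carried in the statement of Theorem~\ref{thm-sublinear-time-alg}), which collapses to $O(m^2\cdot\polylog(n))$. Since $m=o(n^{1/2})$ forces $m^2=o(n)$, this bound is sublinear in $n$: concretely, whenever $m=O(n^{1/2-\mu})$ for some fixed $\mu>0$ the running time is $O(n^{1-2\mu}\,\polylog(n))=o(n)$, which is exactly the sense of ``sublinear'' already used in Corollary~\ref{cor:sublinear-time1}.

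The one place that warrants care — and the mild obstacle here — is the polylogarithmic-in-$n$ slack. Unlike Randomized-Algorithm1, whose running time is $O(m^2)$ with every logarithm being a logarithm of the constant $c$, Randomized-Algorithm2 pays $\log^2(c n/\epsilon)$ and an iterated logarithm, because its bucket-index range $k=\floor{\log_{1+\delta}(c n/\delta)}$ depends on $n$. So the proof should state explicitly that $m=o(n^{1/2})$ is read with a genuine polynomial gap to $n^{1/2}$ (equivalently, that polylogarithmic factors are absorbed into the $o(\cdot)$), after which $m^2\polylog(n)=o(n)$ and the claim is immediate. Everything else is unchanged from Theorem~\ref{thm-sublinear-time-alg}, so the corollary follows.
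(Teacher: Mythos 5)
Your proposal is correct and follows the same route as the paper, which offers no separate argument for this corollary — it is simply Theorem~\ref{thm-sublinear-time-alg} specialized to constant $c,h,\alpha,\epsilon$, so that the hypothesis $m\le\tfrac{n\alpha\epsilon}{20c^2h}$ holds eventually and the running time collapses to $O(m^2\polylog(n))$. Your caveat about the $\polylog(n)$ slack is a genuine sharpening of what the paper leaves implicit: unlike Corollary~\ref{cor:sublinear-time1}, where every logarithm involves only the constant $c$ and hence $m=o(n^{1/2})$ already gives $O(m^2)=o(n)$, here $m=o(n^{1/2})$ alone does not force $m^2\polylog(n)=o(n)$ (e.g.\ $m=n^{1/2}/\log\log n$ fails), so the corollary is most honestly read with a polynomial gap $m=O(n^{1/2-\mu})$ for some $\mu>0$, exactly as you suggest.
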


\begin{corollary} For any  $\alpha =n^{-\phi}$ where   $\phi\in (0,1/3)$, if $ m  \le    \tfrac { n \cdot \alpha \cdot \epsilon}{20 c^2  \cdot h  } $, there is a randomized $(1+\epsilon)$-approximation scheme for $P \mid prec, dp_j \le h,  p_{[n]} \le c \cdot p_{[ (1 - \alpha) n)]} \mid C_{max}$
  and the algorithm runs in sublinear time  $O( \tfrac{c^4 h^2} {\alpha^3 \epsilon^6} \cdot  m^2 \log ^ 2 (\tfrac{c n}{\epsilon})  \log (\tfrac{h}{\epsilon} \log (\tfrac{c n}{\epsilon}  )))$.
\end{corollary}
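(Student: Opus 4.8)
The plan is to obtain this statement as a direct specialization of Theorem~\ref{thm-sublinear-time-alg}. First I would set $\alpha = n^{-\phi}$ and observe that the hypothesis $m \le \tfrac{n\alpha\epsilon}{20c^2 h}$ is exactly the hypothesis required by Theorem~\ref{thm-sublinear-time-alg}. Hence Randomized-Algorithm2, run with this value of $\alpha$, is already a randomized $(1+\epsilon)$-approximation scheme for $P \mid prec, dp_j \le h, p_{[n]} \le c\cdot p_{[(1-\alpha)n]} \mid C_{max}$, and its running time is bounded by $O\!\left(\tfrac{c^4 h^2}{\alpha^3\epsilon^6}\, m^2 \log^2(\tfrac{cn}{\epsilon})\log(\tfrac{h}{\epsilon}\log(\tfrac{cn}{\epsilon}))\right)$, which is the bound asserted in the corollary. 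So neither the approximation guarantee nor the explicit running-time expression requires any new argument; both are inherited verbatim.

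The only remaining point is to check that this bound is sublinear in $n$. Substituting $\alpha = n^{-\phi}$ replaces the factor $1/\alpha^3$ by $n^{3\phi}$, and the restriction $\phi \in (0,1/3)$ is precisely what makes $3\phi < 1$, so $n^{3\phi} = o(n^{1-\eta})$ with $\eta = 1-3\phi > 0$. The other factors, namely $c^4 h^2 \epsilon^{-6} m^2$ together with the polylogarithmic term $\log^2(\tfrac{cn}{\epsilon})\log(\tfrac{h}{\epsilon}\log(\tfrac{cn}{\epsilon}))$, contribute only an $n^{o(1)}$ factor provided $c$, $h$, $\epsilon$ are held fixed and $m$ grows subpolynomially in $n$; in particular $m = o(n^{(1-3\phi)/2})$ is enough, which is consistent with the regime used in Corollary~\ref{cor:sublinear-time2}. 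Multiplying, the running time is $n^{3\phi + o(1)} = o(n)$, the claimed sublinearity.

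There is essentially no obstacle here: the corollary is a repackaging of Theorem~\ref{thm-sublinear-time-alg}, and I would not reuse any of that theorem's proof beyond its statement. The single place that deserves an explicit sentence is spelling out what is held fixed when we say ``sublinear time'' — the claim concerns the dependence on $n$ — and remarking that the bound $\phi < 1/3$ is sharp in the sense that at $\phi = 1/3$ already $n^{3\phi} = n$, defeating sublinearity even for constant $m$, $c$, $h$, and $\epsilon$.
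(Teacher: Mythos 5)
Your proof is correct and takes the same route the paper intends: the corollary is a direct instance of Theorem~\ref{thm-sublinear-time-alg} with $\alpha = n^{-\phi}$, the approximation guarantee and the running-time expression carry over verbatim, and $\phi < 1/3$ is exactly what keeps the factor $1/\alpha^3 = n^{3\phi}$ sublinear. Your further remark that sublinearity also requires an implicit restriction on $m$ is a genuine refinement that the paper glosses over with ``easily follows'': the stated hypothesis $m \le \tfrac{n\alpha\epsilon}{20c^2 h}$ alone allows $m$ near $n^{1-\phi}$, which drives the bound to roughly $n^{2+\phi}/\mathrm{polylog}$, so one really does need $m$ held constant or, more generally, $\tfrac{m^2}{\alpha^3}=o(n/\mathrm{polylog}(n))$, i.e.\ $m = o(n^{(1-3\phi)/2})$. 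The only small slip is the phrase ``in particular'': $m = o(n^{(1-3\phi)/2})$ is a \emph{weaker} condition than ``$m$ grows subpolynomially,'' not a special case of it; either suffices, but the explicit bound you derive is the sharp one.
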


Clearly, the algorithm will also work if  there is no precedence constraint, i.e. all jobs have the same depth 1. This becomes the traditional load balancing problem.
\begin{corollary} 
 For any  $\alpha =n^{-\phi}$ where   $\phi\in (0,1/3)$, if $ m  \le    \tfrac { n \cdot \alpha \cdot \epsilon}{20 c^2  \cdot h  } $,  there is a  randomized $(1+\epsilon)$-approximation scheme for $P \mid  p_{[n]} \le c \cdot p_{[ (1 - \alpha) n)]} \mid C_{max}$
 and the algorithm runs in sublinear time  $O( \tfrac{c^4} {\alpha^3 \epsilon^6} \cdot    m^2 \log ^ 2 (\tfrac{c n}{\epsilon})  \log (\tfrac{1}{\epsilon} \log (\tfrac{c n}{\epsilon}  )))$.
\end{corollary}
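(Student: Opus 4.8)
The plan is to read this statement off from Theorem~\ref{thm-sublinear-time-alg} by taking $h=1$. The problem $P \mid p_{[n]} \le c \cdot p_{[(1-\alpha)n]} \mid C_{max}$ is exactly $P \mid prec, dp_j \le h, p_{[n]} \le c \cdot p_{[(1-\alpha)n]} \mid C_{max}$ in which the precedence relation is empty; hence every job has depth $1$ and we may fix $h=1$, labeling each sampled job with $dp_j=1$. First I would note that Randomized-Algorithm2 run on such an instance only ever produces sketch tuples of the form $(1,u,\hat e_{1,u})$, so the depth coordinate and the pruning step remain well defined: the $(d,u)$-groups are just the $(1,u)$-groups, and nothing in the construction of $SKJ'_\delta$, of $\widehat{SKJ}_\delta$, or in the deletion of small-processing-time jobs depends on there being more than one depth. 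Consequently the sampling parameters $k=\lfloor\log_{1+\delta}(cn/\delta)\rfloor$, $n_0$, $p$, $\beta$, $n'$, $\gamma$ are simply the general ones evaluated at $h=1$, and Lemma~\ref{I0-lemma}, the corresponding analogue of Lemma~\ref{lemma:prob-e-d-u-is-good}, and Lemma~\ref{lemma:sublinear-time} all apply verbatim.

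Next I would invoke the accuracy argument of Theorem~\ref{thm-sublinear-time-alg} with $h=1$. Its hypothesis $m \le \tfrac{n\alpha\epsilon}{20c^2 h}$ becomes precisely $m \le \tfrac{n\alpha\epsilon}{20c^2}$, which is the stated condition. The chain of estimates in that proof --- passing from $I$ to the rounded instance $I_{round}$, then to the pruned instance $I_{big}$, then to the estimated instance $\hat I$ using the Chernoff and union bounds of Lemma~\ref{chernoff-lemma} and Fact~\ref{union-bound} --- uses the depths only through the bound $h$ on the number of groups per processing-time class, so it specializes cleanly and yields, with probability at least $\tfrac{9}{10}$, that the returned value satisfies $(1-\epsilon)C_{max}^*(I) \le \hat A \le (1+\epsilon)C_{max}^*(I)$. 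For the running time I would substitute $h=1$ into Lemma~\ref{lemma:sublinear-time}: the bound $O\!\left(\tfrac{c^4 h^2}{\alpha^3\epsilon^6}m^2\log^2(\tfrac{cn}{\epsilon})\log(\tfrac{h}{\epsilon}\log\tfrac{cn}{\epsilon})\right)$ collapses to $O\!\left(\tfrac{c^4}{\alpha^3\epsilon^6}m^2\log^2(\tfrac{cn}{\epsilon})\log(\tfrac{1}{\epsilon}\log\tfrac{cn}{\epsilon})\right)$, which is exactly the claimed expression.

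Finally I would verify sublinearity. Treating $c$, $\epsilon$, and $m$ as constants and using $\alpha=n^{-\phi}$ with $\phi\in(0,1/3)$, the running time is $O(\alpha^{-3}\,\mathrm{polylog}(n))=O(n^{3\phi}\,\mathrm{polylog}(n))=o(n)$ since $3\phi<1$; more generally the bound is $o(n)$ whenever $m^2\alpha^{-3}=o(n/\mathrm{polylog}(n))$, i.e.\ $m=o(n^{(1-3\phi)/2}/\mathrm{polylog}(n))$. I do not expect a genuine obstacle here: the only point needing a sentence of care is confirming that erasing the precedence constraints does not disturb the sketch machinery of Randomized-Algorithm2, after which the corollary is a direct substitution of $h=1$ into already-established results, together with the observation that the $h=1$ case is the classical load balancing problem $P \mid p_{[n]} \le c\cdot p_{[(1-\alpha)n]} \mid C_{max}$.
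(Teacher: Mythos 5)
Your proposal matches the paper's reasoning: the paper offers no separate proof and simply remarks that with no precedence constraints every job has depth $1$, so the corollary is Theorem~\ref{thm-sublinear-time-alg} specialized to $h=1$. Your check that the sketch machinery of Randomized-Algorithm2 is unaffected by erasing the precedence relation, and your substitution of $h=1$ into the running-time bound, are exactly the intended justification (the lingering $h$ in the corollary's hypothesis on $m$ is evidently a leftover from the general statement and should be read as $1$, as you do).
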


\subsection{The Estimated Sketch of the Schedule}
In this subsection, we will show that as  the streaming algorithms in Section~\ref{sec:stremaing-alg}, the two sublinear time algorithms in this section can compute an {\bf estimated sketch of a schedule}   $\widehat{SKS} = \{t_d: 1 \le d \le h\}$ which describes a schedule where all the jobs of depth $d$ are scheduled during the interval $[t_{d-1}, t_d)$ for $1 \le d \le h$ where $t_0 = 0$.
And we will show that  based on $\widehat{SKS}$,  the Algorithm SketchToSchedule in Section~\ref{sec:Stream-Sketch-Schedule} can, with high probability, generate a feasible schedule with the makespan at most $(1+ 2\epsilon)$ times the optimal makespan.

For the problem $P \mid prec, dp_j \le h, p_{min} \le c \cdot p_{max},  \mid C_{max}$, 
we let the estimate sketch of a schedule be $\widehat{SKS} = \{t_d: 1 \le d \le h\}$ where $t_d = \sum_{i=1}^d (\lfloor\tfrac{\hat{A}_d}{1-\delta} \rfloor + c +  3 \lfloor\tau(n,h,c)\rfloor \cdot k   \cdot c )$ for all $1 \le d \le h$.  
We have the following theorem for the estimate sketch of a schedule:

 \begin{theorem}\label{thm: sketch-alg4}
  For $P \mid prec, dp_j \le h,  p_{min} \le c \cdot p_{max}
 \mid C_{max}$, given   any $0 < \epsilon < 1$, when $m \le  \tfrac {n \epsilon}{20 h \cdot c}$,  Randomized-Algorithm1  can compute  an estimated sketch of the schedule $\widehat{SKS}$,  and based on the sketch, with probability at least $\tfrac{9}{10}$, Algorithm SketchToSchedule can generate  a feasible schedule with the makespan at most $(1+2\epsilon)$ times the optimal makespan.
 \end{theorem}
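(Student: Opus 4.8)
The plan is to prove the two assertions of the theorem in turn: that Randomized-Algorithm1 already has in hand everything needed to build $\widehat{SKS}$, and that feeding this $\widehat{SKS}$ together with the true job stream to Algorithm SketchToSchedule yields, on an event of probability at least $\tfrac{9}{10}$, a feasible schedule whose makespan is $t_h=\sum_{d=1}^h \ell_d$ with $\ell_d:=\lfloor \hat A_d/(1-\delta)\rfloor + c + 3\lfloor\tau(n,h,c)\rfloor k c$, and then to bound $t_h$ by $(1+2\epsilon)C_{max}^*(I)$. The first assertion is immediate: the algorithm computes $\hat A_1,\dots,\hat A_h$ and knows $\delta,c,k$ and $\tau(n,h,c)={n}p$, so the time instants $t_d=\sum_{i=1}^d \ell_i$ require no further input.

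For the second assertion I would first fix the event $\mathcal E$ on which the conclusion of Lemma~\ref{lemma:prob-e-d-u-is-good} holds simultaneously for all of the at most $hk$ pairs $(d,u)$; by Fact~\ref{union-bound} and the choice $\gamma=\tfrac{1}{10hk}$ we get $\Pr[\mathcal E]\ge \tfrac{9}{10}$. Working on $\mathcal E$, the key structural estimate is that for each depth $d$ the \emph{true} total processing time $W_d$ of the depth-$d$ jobs of $I$ satisfies $W_d\le \tfrac{m\hat A_d}{1-\delta}+3\tau(n,h,c)\,k\,c$. This follows because every job's true processing time is at most its rounded value $rp_u$; because every group with $n_{d,u}\ge 3\tau(n,h,c)$ has $(1-\delta)n_{d,u}>2\tau(n,h,c)$ (as $\delta<\tfrac1{20}$), hence appears in $\widehat{SKJ}_\delta$ with $\hat e_{d,u}\ge(1-\delta)n_{d,u}$, so its rounded contribution $n_{d,u}rp_u$ is at most $\tfrac{1}{1-\delta}\hat e_{d,u}rp_u$; and because the at most $k$ remaining groups at depth $d$ each contribute fewer than $3\tau(n,h,c)$ jobs of size at most $c$.

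Next I would show feasibility. Since SketchToSchedule fills the machines of the depth-$d$ slab one after another, greedily, any machine it closes off carries load exceeding $\ell_d-c$ (the job it rejected has size at most $c$), so the number of machines it uses at depth $d$ is at most $W_d/(\ell_d-c)+1$; the three slack terms in $\ell_d$ — the factor $\tfrac1{1-\delta}$ absorbing the sampling under-estimate, the $+c$ absorbing the Graham-style rounding, and the $3\lfloor\tau(n,h,c)\rfloor k c$ term absorbing the dropped small groups — are designed so that this count is at most $m$, a verification that requires a short calculation keeping track of the floor functions. Precedence is automatic because every depth-$(d{+}1)$ job starts at time $t_d$, after all depth-$d$ jobs finish. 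For the makespan, $t_h=\sum_{d=1}^h\ell_d\le \tfrac1{1-\delta}\sum_{d=1}^h\hat A_d+hc+3\tau(n,h,c)\,k\,h\,c$; using $\sum_d\hat A_d\le\hat A$ with Theorem~\ref{thm:const-alg} ($\hat A\le(1+\epsilon)C_{max}^*(I)$), the identity $3\tau(n,h,c)khc=\tfrac{15\delta n}{2m}$, the bound $hc\le\tfrac{\delta n}{m}$ (from $m\le\tfrac{n\epsilon}{20hc}$ and $\delta=\tfrac{\epsilon}{20}$), and $C_{max}^*(I)\ge\tfrac{n}{m}$, this collapses to $t_h\le\bigl(\tfrac{1+\epsilon}{1-\delta}+\tfrac{17\delta}{2}\bigr)C_{max}^*(I)$, and substituting $\delta=\tfrac{\epsilon}{20}$ makes the coefficient strictly below $1+2\epsilon$ for every $0<\epsilon<1$.

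I expect the main obstacle to be precisely the feasibility step — checking, with the floor functions and all three slack terms simultaneously present, that SketchToSchedule never overruns $m$ machines at any depth; the probabilistic part is a direct union bound over Lemma~\ref{lemma:prob-e-d-u-is-good}, and the final makespan arithmetic is routine given Theorem~\ref{thm:const-alg}.
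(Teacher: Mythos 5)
Your proposal follows essentially the same route as the paper's proof, and your ingredients match: you invoke Lemma~\ref{lemma:prob-e-d-u-is-good} over all at most $hk$ pairs $(d,u)$ via the union bound to obtain the event of probability $\ge 9/10$; you decompose the depth-$d$ jobs into the groups that land in $I_{big}$ (those with $n_{d,u}\ge 3\tau$, which on the good event appear in $\widehat{SKJ}_\delta$ with $\hat e_{d,u}\ge(1-\delta)n_{d,u}$) versus the dropped small groups; you use the same three slack terms in $\ell_d$; and you close with the same kind of arithmetic, combining $\hat A\le(1+\epsilon)C^*_{max}(I)$ from Theorem~\ref{thm:const-alg}, $hc\le\tfrac{\delta n}{m}$, $3\tau khc=\tfrac{15\delta n}{2m}$, and $C^*_{max}(I)\ge\tfrac{n}{m}$. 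The only cosmetic difference is that you argue feasibility by directly bounding the number of machines SketchToSchedule opens as $W_d/(\ell_d-c)+1$, whereas the paper exhibits a two-phase schedule (list-schedule the $I_{big}$ jobs of depth $d$ in $\lfloor\hat A_d/(1-\delta)\rfloor+c$, then pack the small-group jobs on one machine in the remaining $3\lfloor\tau\rfloor kc$) and then asserts SketchToSchedule does no worse; these are the same Graham-style counting argument in two dressings.

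You are right to flag the floor-function bookkeeping as the fragile point: making $W_d\le m(\ell_d-c)$ hold exactly given $\ell_d-c=\lfloor\hat A_d/(1-\delta)\rfloor+3\lfloor\tau\rfloor kc$ while $W_d$ is bounded only by $m\hat A_d/(1-\delta)+3\tau kc$ requires the slack $3(m-1)\lfloor\tau\rfloor kc$ to absorb both the up-to-$m$ loss from $\lfloor\hat A_d/(1-\delta)\rfloor$ and the $3(\tau-\lfloor\tau\rfloor)kc$ loss, which is not automatic for small $\tau$. The paper itself elides this (the claim that SketchToSchedule greedily matches the list-scheduling bound $\lfloor A_d\rfloor+c$ is already stated as ``easy to see'' in Theorem~\ref{thm: sketch-alg1} without justification), so this is a shared gap rather than an error you introduced; but if you were to write this out you would need to either tighten the definition of $\ell_d$ (e.g.\ use $\lceil\cdot\rceil$ instead of $\lfloor\cdot\rfloor$) or explicitly verify the inequality under the standing assumption on $m$.
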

 
 \begin{proof}It is easy to see that 
Randomized-Algorithm1 can compute  the estimate sketch of the schedule $\widehat{SKS} = \{t_d: 1 \le d \le h\}$.
Now we will show that with high probability all the jobs with depth $d$ can be feasibly scheduled during the interval $[t_{d-1}, t_d)$ for $1 \le d \le h$ where $t_0 = 0$.
It suffices to prove that with high probability all the jobs from the input instance $I$ with depth $d$ can be scheduled within an interval of length $\lfloor \tfrac{\hat{A}_d}{1-\delta} \rfloor + c + \lfloor 3 \tau(n,h,c) \rfloor   \cdot k \cdot c$. Since the instance $I_{round}$ is obtained from $I$ by rounding up the processing times, all we need to prove is that the jobs from the   instance $I_{round}$ with depth $d$ can be scheduled within an interval of length $\lfloor \tfrac{\hat{A}_d}{1-\delta} \rfloor + c + \lfloor 3 \tau(n,h,c) \rfloor   \cdot k \cdot c$. 

As the proof of Lemma~\ref{lemma:rounded-big-vs-estimated}, 
we split the jobs in $I_{round}$ into two parts: 
those $(d, u)$-groups that are  in $I_{big}$, and those  $(d, u)$-groups that are not in $I_{big}$. For the jobs from the former, with probability at least $\tfrac{9}{10}$, all $(d, u)$-group  in $I_{big}$ are  also included in $\hat{I}$ and for each $(d, u)$-group of this type, we have $n_{d,u} \ge 3\tau(n, h, c)$ and $n_{d, u}\le \tfrac{\hat{e}_{d, u}}{1-\delta}$. 
Thus all these jobs at depth $d$ 
can be feasibly scheduled during the interval of length $$\floor{  \tfrac{1}{m} \sum_{u=0}^{k} (n_{d,u} \cdot rp_u)} + c  \le  \floor {\tfrac{1}{m} \sum_{u=0}^{k} \tfrac{(\hat{e}_{d,u} \cdot rp_u) } {(1- \delta)}} + c = \floor {  \tfrac{\hat{A}_d}{1-\delta} } + c .$$ 
For the jobs from the latter $(d,u)$-groups, we have $n_{d,u}<3\tau(n, h, c)$, and thus they can be feasibly scheduled on a single machine  during an interval of length $\lfloor 3 \tau(n,h,c) \rfloor \cdot k   \cdot c  $.  Therefore, combining both types of jobs, we have that with probability at least $\tfrac{9}{10}$ all the jobs with depth $d$ from $I_{round}$ can be scheduled within a time interval of length  $\lfloor \tfrac{\hat{A}_d}{1-\delta} \rfloor + c + \lfloor 3 \tau(n,h,c) \rfloor  \cdot c \cdot k$. Since the jobs  in $I_{round}$ are rounded up from those in $I$, the jobs  depth $d$ from $I$ can also  be scheduled within the same interval length.

Finally,   it is easy to see that the Algorithm SketchToSchedule generates a feasible schedule of all the jobs with depth $d$ from $I$ 
during the interval $[t_{d-1}, t_d)$. The makespan of the final schedule is at most
$$  t_h  =  \sum_{d=1}^h \left(\floor{\tfrac{\hat{A}_d}{1-\delta}} + c +   \floor {3 \tau(n,h,c)} \cdot k \cdot c \right)  
     \le \sum_{d=1}^h \left (\tfrac{\hat{A}_d}  {1-\delta}   \right)  + h \cdot c + \floor {3 \tau(n,h,c)} \cdot h \cdot k \cdot c
$$ 
 Note $3 \tau(n,h,c) \cdot h\cdot k \cdot c \le \tfrac{ 8 \delta n}{m}$, and $h \cdot c \le \tfrac{\delta n}{m}$ when  $m \le  \tfrac {n \epsilon}{20h \cdot c}$. Thus,  
\begin{eqnarray*}
    t_h   & \le &  \sum_{d=1}^h \left (\frac{\hat{A}_d}  {1-\delta}   \right)  + \tfrac{  \delta n}{m}  +  \tfrac{ 8 \delta n}{m}  \\  
     & \le &  \frac{1}{1-\delta} \sum_{d=1}^h \left ( \hat{A}_d \right)  + \tfrac{  9 \delta n}{m}  \\ 
    & \le &  \frac{1}{1-\delta} \sum_{d=1}^h \left ( \hat{A}_d \right) 
      +  9 \delta C_{max}^*(I) \hspace{1in }\text { by } C_{max}^*(I) \ge \tfrac{n}{m} \\
    & \le &  \frac{ \hat{A}}{1-\delta} +  9 \delta C_{max}^*(I)    \\
    & \le & \frac{(1+20 \delta)}{1-\delta}   C_{max}^*(I)  +    9 \delta C_{max}^*(I)\hspace{0.8in }\text{ by Theorem~\ref{thm:const-alg} } \\
        & \le & (1+ 25 \delta)C_{max}^*(I)  +    9 \delta C_{max}^*(I)\hspace{0.9in }\text{ by } \delta = \tfrac{\epsilon}{20} < \tfrac{1}{20}   \\
    & \le & (1 + 2 \epsilon) C_{max}^*(I) .   
   \end{eqnarray*}   
\end{proof}

For $P \mid prec, dp_j \le h, p_{[n]} \le c \cdot p_{[ (1 - \alpha) n)]} \mid C_{max}$, we let the estimate sketch of a schedule be $\widehat{SKS} = \{t_d: 1 \le d \le h\}$ where $t_d = \sum_{i=1}^d (\lfloor \tfrac{\hat{A}_d}{1-\delta} \rfloor + c w_0 + \lfloor 3 \tau(n,h,c) \rfloor \cdot k \cdot c w_0  + \lfloor \delta w_0 \rfloor )$ for all $1 \le d \le h$. 
For this sketch of the schedule, we can get similar conclusion. 

 \begin{theorem}\label{thm: sketch-sublinear-time-alg}
 For the problem  $P \mid  prec, dp_j \le h, p_{[n]} \le c \cdot p_{[ (1 - \alpha) n)]} \mid C_{max}$, given   any $0 < \epsilon < 1$, and $ m  \le    \tfrac { n \cdot \alpha \cdot \epsilon}{20 c^2  \cdot h  } $,  Randomized-Algorithm2 can generate  an estimated sketch of the schedule $\widehat{SKS}$, and with probability at least $\tfrac{9}{10}$, Algorithm SketchToSchedule can produce based on $\widehat{SKS}$ a feasible schedule with the makespan at most $(1+2\epsilon)$ times the optimal makespan.
 \end{theorem}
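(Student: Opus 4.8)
The plan is to mirror the proof of Theorem~\ref{thm: sketch-alg4}, inserting the extra ``small job'' accounting that already appears in the proof of Theorem~\ref{thm-sublinear-time-alg}. First I would note that Randomized-Algorithm2 computes $\hat{A}_d$, $w_0$, $\tau(n,h,c)$, $c$ and $k$ during its run, so it can also output $\widehat{SKS}=\{t_d: 1\le d\le h\}$ with $t_d=\sum_{i=1}^{d}\big(\lfloor\tfrac{\hat{A}_i}{1-\delta}\rfloor+cw_0+\lfloor 3\tau(n,h,c)\rfloor\cdot k\cdot cw_0+\lfloor\delta w_0\rfloor\big)$ at no extra asymptotic cost. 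The claim to be proved is that, with probability at least $\tfrac{9}{10}$, all depth-$d$ jobs of the input instance $I$ can be placed in an interval of length $t_d-t_{d-1}$; since $I_{round}$ arises from $I$ by rounding processing times up, it is enough to show the depth-$d$ jobs of $I_{round}$ fit.

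Next I would split the depth-$d$ jobs of $I_{round}$ into three classes, exactly as in the proofs of Lemma~\ref{lemma:rounded-big-vs-estimated} and Theorem~\ref{thm-sublinear-time-alg}: (a) the $(d,u)$-groups surviving into $I_{big}$, i.e. those with $n_{d,u}\ge 3\tau(n,h,c)$ and $rp_u\ge\tfrac{\delta w_0}{n}$; (b) the $(d,u)$-groups with $n_{d,u}<3\tau(n,h,c)$; and (c) the jobs with processing time below $\tfrac{\delta w_0}{n}$. For (a), the estimation lemma that is the $\alpha n$-analogue of Lemma~\ref{lemma:prob-e-d-u-is-good}, together with the union bound over the at most $hk$ groups, gives that with probability at least $\tfrac{9}{10}$ every such group also lies in $\hat{I}$ and satisfies $n_{d,u}\le\tfrac{\hat{e}_{d,u}}{1-\delta}$, so list-scheduling them uses an interval of length at most $\lfloor\tfrac1m\sum_u n_{d,u}rp_u\rfloor+cw_0\le\lfloor\tfrac{\hat{A}_d}{1-\delta}\rfloor+cw_0$. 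Class (b) contributes at most $3\tau(n,h,c)\cdot h\cdot k$ jobs of size at most $cw_0$, which fit on one machine in length $\lfloor 3\tau(n,h,c)\rfloor\cdot k\cdot cw_0$; class (c) has total work at most $n\cdot\tfrac{\delta w_0}{n}=\delta w_0$ and fits on one machine in length $\lfloor\delta w_0\rfloor$ (with a harmless $\lceil\cdot\rceil$ adjustment if integrality demands it). The three lengths add up to $t_d-t_{d-1}$, so Algorithm SketchToSchedule returns a feasible schedule of makespan at most $t_h$.

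Finally I would bound $t_h$. Using $\sum_d\hat{A}_d\le\hat{A}$ and the inequality $3\tau(n,h,c)\cdot h\cdot k\cdot cw_0\le\tfrac{8\delta\alpha nw_0}{cm}$ already established in the proof of Theorem~\ref{thm-sublinear-time-alg}, one gets $t_h\le\tfrac{\hat{A}}{1-\delta}+hcw_0+\tfrac{8\delta\alpha nw_0}{cm}+h\delta w_0$. The hypothesis $m\le\tfrac{\alpha n\epsilon}{20c^2h}=\tfrac{\alpha n\delta}{c^2h}$ forces $hcw_0\le\tfrac{\delta\alpha nw_0}{cm}$, hence $h\delta w_0\le\delta^2 C_{max}^*(I)$, while $C_{max}^*(I)\ge\tfrac{\alpha nw_0}{cm}$ holds because the top $\alpha n$ jobs each have processing time at least $\tfrac{w_0}{c}$; so the three error terms together are at most $10\delta C_{max}^*(I)$. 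Combining this with $\hat{A}\le(1+20\delta)C_{max}^*(I)$ from Theorem~\ref{thm-sublinear-time-alg} and $\tfrac{1+20\delta}{1-\delta}\le 1+24\delta$ for $\delta\le\tfrac1{20}$ yields $t_h\le(1+34\delta)C_{max}^*(I)=(1+\tfrac{34\epsilon}{20})C_{max}^*(I)\le(1+2\epsilon)C_{max}^*(I)$; all the events invoked (Lemma~\ref{I0-lemma}, the estimation lemma, and the bound from Theorem~\ref{thm-sublinear-time-alg}) hold simultaneously with probability at least $\tfrac{9}{10}$ by a union bound. I expect the only real obstacle to be the bookkeeping of the three interval lengths — in particular making sure the $\lfloor\delta w_0\rfloor$ term genuinely absorbs the tiny jobs and that no contribution is double-counted — since everything else is a direct transcription of the arguments for Theorems~\ref{thm: sketch-alg4} and~\ref{thm-sublinear-time-alg}.
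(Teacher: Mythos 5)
Your proposal is correct and follows essentially the same route as the paper's proof: the same three-class decomposition of the depth-$d$ jobs of $I_{round}$ (groups surviving into $I_{big}$, small $(d,u)$-groups with $n_{d,u}<3\tau(n,h,c)$, and jobs below $\tfrac{\delta w_0}{n}$), the same per-class interval bounds summing to $t_d-t_{d-1}$, and the same arithmetic packaging of the error terms into $\le 10\delta\, C^*_{max}(I)$ before invoking $\hat{A}\le(1+20\delta)C^*_{max}(I)$ to conclude $t_h\le(1+2\epsilon)C^*_{max}(I)$. One minor slip in your write-up: at a fixed depth $d$, class (b) contributes at most $3\tau(n,h,c)\cdot k$ jobs (ranging only over $u$), not $3\tau(n,h,c)\cdot h\cdot k$; the per-depth interval length $\lfloor 3\tau(n,h,c)\rfloor\cdot k\cdot cw_0$ that you actually use is the correct one, so the final bound is unaffected.
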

 \begin{proof}
It is easy to see that Randomized-Algorithm2 can generate the estimated sketch of a schedule $\widehat{SKS}$.
Now we will show that with high probability all the jobs with depth $d$ can be feasibly scheduled during the interval $[t_{d-1}, t_d)$ for $1 \le d \le h$ where $t_0 = 0$.
As the proof of Theorem~\ref{thm: sketch-alg4}, it suffices to prove that with high probability 
the jobs from the   instance $I_{round}$ with depth $d$ can be scheduled within an interval of length $\lfloor \tfrac{\hat{A}_d}{1-\delta} \rfloor + c w_0 + \lfloor 3 \tau(n,h,c) \rfloor  \cdot  k  \cdot c w_0 + \lfloor \delta w_0 \rfloor$. 

For this problem, the jobs in  $I_{round}$ can be split into three types: 
(1) $(d, u)$-groups that are in $I_{big}$, (2) $(d, u)$-groups corresponding to $(d,u, n_{d,u})$ where  $n_{d,u}<3\tau(n, h, c)$ for all $1 \le d \le h$ and $1 \le u \le k$ 
and   the processing times of all jobs are greater than $\tfrac{\delta w_0}{n}$, 
and (3) jobs whose processing times are no more than $\tfrac{\delta w_0}{n}$. 
We will bound the interval length needed to schedule jobs from each type. For type (1) jobs, as the proof in   Theorem~\ref{thm-sublinear-time-alg}, with probability at least $\tfrac{9}{10}$, all $(d, u)$-group  in $I_{big}$ are  also included in $\hat{I}$ and for each $(d, u)$-group of this type, we have $n_{d,u} \ge 3\tau(n, h, c)$ and $n_{d, u}\le \tfrac{\hat{e}_{d, u}}{1-\delta}$.  
Thus all these jobs at depth $d$ 
can be feasibly scheduled during the interval of length $$\floor{  \tfrac{1}{m} \sum_{u=0}^{k} (n_{d,u} \cdot rp_u)} + c w_0 \le  \floor {\tfrac{1}{m} \sum_{u=0}^{k} \tfrac{(\hat{e}_{d,u} \cdot rp_u) } {(1- \delta)}} + c w_0 
= \floor {  \tfrac{\hat{A}_d}{1-\delta} } + c w_0.$$ 
For type (2) jobs, it is easy to see that all the jobs at  depth $d$ 
 can be feasibly scheduled  during an interval of length $\lfloor 3 \tau(n,h,c) \rfloor \cdot c w_0 \cdot  k$; 
 For type (3) jobs,  since the processing times are integer,  the processing time of each job must be at most $\lfloor \tfrac{\delta w_0}{n} \rfloor$. There are at most $n$ such jobs at each depth, thus   they can be feasibly scheduled during an interval of length $ n     \lfloor \tfrac{\delta w_0}{n} \rfloor = \lfloor \delta w_0 \rfloor$. Adding all these together, with probability at least $\tfrac{9}{10}$ all jobs at depth $d$ from $I_{round}$ can be scheduled into an interval of length 
$$ \lfloor    \tfrac{\hat{A}_d}{1-\delta} \rfloor + c w_0 + \lfloor 3 \tau(n,h,c) \rfloor \cdot  k \cdot c w_0 +  \lfloor \delta w_0 \rfloor.$$
Similar as before,   
we can use Algorithm SketchToSchedule to generate a feasible schedule with the makespan  at most
 \begin{eqnarray*}
      t_h & = & \sum_{d=1}^h (\lfloor    \tfrac{\hat{A}_d}{1-\delta} \rfloor + c w_0 + \lfloor 3 \tau(n,h,c) \rfloor  \cdot  k \cdot c w_0+  \lfloor \delta w_0 \rfloor) \\
    & \le &  \left(\sum_{d=1}^h     \tfrac{\hat{A}_d}{1-\delta}   \right)  + h \cdot \left(c w_0 + 3 \tau(n,h,c)    \cdot    k \cdot c w_0 +    \delta w_0   \right) \\
          & \le &  \left(\sum_{d=1}^h     \tfrac{\hat{A}_d}{1-\delta}   \right)  +  (1+ \delta) h \cdot  c w_0 +  3 \tau(n,h,c)   \cdot h  \cdot    k  \cdot c w_0   
\end{eqnarray*}
By line {\ref{line:random2-tau}} of   Randomized-Algorithm2, $ 3 \tau(n,h,c)   \cdot h \cdot    k \cdot c w_0    \le \tfrac{8 \delta \alpha n \cdot w_0}{cm} $ which implies 
  $h \cdot  c w_0 \le \tfrac{\delta \alpha n \cdot w_0}{cm} $. With  $m  \le \frac { n \cdot \alpha \cdot \epsilon}{20 c^2  \cdot h  }$,  we get 
   $t_h
\le    \left(\sum_{d=1}^h     \frac{\hat{A}_d}{1-\delta}   \right)  + 
 (1+\delta)\tfrac{\delta \alpha n \cdot w_0}{cm}
 + \tfrac{8\delta \alpha n \cdot w_0}{cm}  $. Since $C_{max}^*(I) \ge \tfrac{\alpha n \cdot w_0}{c m}$, we have $t_h \le   \left(\sum_{d=1}^h     \tfrac{\hat{A}_d}{1-\delta}   \right)  + 
  10\delta  C_{max}^*(I)  \le        \frac{\hat{A}}{1-\delta}  
                        +   10\delta  C_{max}^*(I)$.  By Theorem~\ref{thm:const-alg}, and $\delta = \tfrac{\epsilon}{20}$, we have 
    $$t_h
      \le    \frac{(1+20 \delta)}{1-\delta}  C_{max}^*(I)  + 10 \delta C_{max}^*(I)    \le   (1 + 2 \epsilon) C_{max}^* . $$  
This completes the proof.
\end{proof}

\section{Conclusions}

In this work,  we studied the parallel machine precedence constrained scheduling problems 
  $P \mid  prec, dp_j \le h, p_{max} \le c \cdot p_{min} \mid C_{max}$ and 
  $P \mid  prec, dp_j \le h, p_{[n]} \le c \cdot p_{[ (1 - \alpha) n)]} \mid C_{max}$. 
  We focused on  two types of computing paradigms, sublinear space algorithms and sublinear time algorithms, which are inspired by the boost of multitude of data 
  in manufacturing and service industry.
  It is worth mentioning that in spite of the inapproximability result that there does not exist a polynomial time approximation algorithm with approximation ratio better than $\tfrac{4}{3}$ unless P=NP, 
    our algorithms imply that both problems admit approximation schemes if $m$ satisfies certain condition.  
Moreover, 
our algorithms for precedence constrained problems also imply the 
sublinear approximation algorithms for the popular load balancing problem where jobs are independent.  

Our work not only provides an algorithmic solutions to the studied problem under big data model, but also provide a methodological framework for designing  sublinear approximation algorithms that can be used for solving other scheduling problems. 
  In particular, besides outputting the approximate value of the optimal makespan, we introduced the concept of ``the sketch of a schedule'' to cater the need of generating a concrete schedule which approximates the optimal schedule. 
For our studied problems, it is also interesting to design sublinear approximation algorithms for other various precedence constraints and other performance criteria including total completion time, maximum tardiness, etc.

\printbibliography 

\end{document}